\documentclass[a4paper]{sn-jnl}
\usepackage{packages}
\usepackage{macro}

\hypersetup{
  colorlinks=true,
  linkcolor=blue,
  citecolor=red
}
\theoremstyle{thmstyleone}%
\newtheorem{theorem}{Theorem}[section]
\newtheorem{proposition}[theorem]{Proposition}%
\newtheorem{lemma}[theorem]{Lemma}%
\newtheorem{corollary}[theorem]{Corollary}%

\theoremstyle{thmstyletwo}%
\newtheorem{example}[theorem]{Example}%
\newtheorem{remark}[theorem]{Remark}%

\theoremstyle{thmstylethree}%
\newtheorem{definition}[theorem]{Definition}%
\newtheorem{notation}[theorem]{Notation}%

\begin{document}

\title{New perspectives for code locality in the rank metric}

\author*[1]{\fnm{Camille} \sur{Garnier}}\email{camille.garnier@unilim.fr}

\author*[2]{\fnm{Julien} \sur{Lavauzelle}}\email{julien.lavauzelle@univ-paris8.fr}

\author*[3]{\fnm{Jade} \sur{Nardi}}\email{jade.nardi@univ-rennes.fr}

\author*[1]{\fnm{Ilaria} \sur{Zappatore}}\email{ilaria.zappatore@unilim.fr}

\affil[1]{\orgdiv{XLIM, CNRS UMR 7252}, \orgname{Université de Limoges}, \orgaddress{\street{123, avenue Albert Thomas}, \city{Limoges}, \postcode{87060}, \country{France}}}

\affil[2]{\orgdiv{LAGA, CNRS UMR 7539}, \orgname{Université Paris 8}, \orgaddress{\street{2 Rue de la Liberté}, \city{Saint-Denis}, \postcode{93200}, \country{France}}}

\affil[3]{\orgdiv{CNRS, IRMAR UMR 6625}, \orgname{Univ Rennes}, \postcode{F-35000} \city{Rennes}, \country{France}}

\abstract{
  In coding theory, local recovery enables the efficient recovery of some part of (lost) coded data by accessing only a small number of other data entries. Locality was mostly but intensively studied for the recovery of individual symbols, that is, in the context of the Hamming metric.

  In this work, we propose a new definition of locality for general rank-metric codes. This definition differs from a previous work of Kadhe, El Rouayheb, Duursma and Sprintson [IEEE Trans. Inf. Theory 2019], by allowing to efficiently recover any element of the support, and without relying on any choice of bases of the underlying vector spaces.

  Our work firstly relies on a precise study of code puncturing and shortening for codes viewed as spaces of linear maps. We then provide examples and general constructions, showing the difference between our notion and that of Kadhe \emph{et~al.} We then derive a Singleton-like bound for rank locally recoverable codes, and we finally prove that a construction similar to classical Tamo--Barg codes is optimal with respect to this bound.
}

\keywords{coding theory, local recovery, rank-metric codes, Ore polynomials}

\maketitle

\section{Introduction}

Locally recoverable codes (LRCs) are error-correcting codes in which each codeword symbol can be recovered by accessing only a small subset of the remaining symbols.
The property of local recoverability makes these codes well suited to distributed storage systems (DSSs). When a storage node becomes unavailable, it can be repaired by accessing only a limited number of other nodes, reducing the repair cost.

The ideas underlying LRCs originate from~\cite{Reliable_Memories_with_Subline_Accesses} and~\cite{Pyramid_Codes}, which introduced the principle of reducing the amount of information accessed during data recovery. LRCs were first systematically studied by Gopalan \textit{et al.}~\cite{locality12}. They established a bound relating the code parameters (length, dimension and minimum distance) to its \emph{locality}, i.e., the number of other coordinates that must be accessed to recover a given one. Families of codes achieving this bound are given in~\cite{Tamo_2014, Pyramid_Codes}, and further constructions appear in~\cite{LRC_surfaces, Barg_2015, lrc_availability, lrc_sep, lrc_curves_surfaces, Tamo_2016, rationallrc}. Note that the term \enquote{locally repairable codes} is also commonly used in the literature. In some works, it is used as a synonym for locally recoverable codes (e.g.,~\cite{rationallrc,matroid,repairableLRC}), whereas in others it is considered as a more general notion (e.g.,~\cite{codes_for_DSS,Prakash12}).

All the aforementioned works study local recoverability in the Hamming metric. However, the Hamming metric does not adequately model  all types of errors encountered in communication and storage systems. For instance, in network coding, space-time coding, and cryptography, errors may affect entire subspaces rather than individual coordinates. Such errors can simultaneously corrupt many coordinates, and a repair scheme based on Hamming-style locality may require access to more information than necessary to characterize the error. The rank metric provides a more suitable model in these scenarios; see~\cite{bartz2022rank} for a survey of its applications.

Rank-metric codes are typically considered in two equivalent representations. A matrix rank-metric code, or matrix-code, consists of codewords that are matrices over a finite field, with the rank-metric defined as the rank of the difference between two codewords. A vector rank-metric code, or vector-code, is defined over an extension field and endowed with the rank-metric over the base field. After fixing a basis of the extension field over the base field, each coordinate of a codeword can be expanded with respect to this basis. Consequently, every codeword is represented by a matrix whose columns correspond to the coordinates of the original vector.
Several works have studied locality in connection with the rank metric. In~\cite{lrc_via_rank_metric} and~\cite{opti_lrc_via_rank_metric}, rank-metric codes are used as building blocks for the construction of codes with locality properties for DSSs. However, the resulting notion of local recoverability used in these works is a straightforward translation of the classical Hamming locality. A different approach is taken in~\cite{Kadhe_et_al19}, where locality is formulated for matrix codes. In this framework, a \emph{column} is locally recoverable if it can be reconstructed by accessing only a small number of other columns. With this point of view,~\cite{Kadhe_et_al19} provides a Singleton-like upper bound on the minimum rank distance of rank-LRCs and an optimal construction. More recently, Bastioni~\emph{et al.}~\cite{loc2} proposed new optimal constructions using Drinfeld modules.


In this work we introduce a new notion of local recoverability, that is more intrinsic to the rank metric. To establish the analogy, the classical notion of LRCs in the Hamming metric is first recalled. Consider a code $\C\subseteq \Fq^n$ in the Hamming metric.
A coordinate $i\in \{1, \dots, n\}$ has \emph{locality $r\geq 1$ in $\C$} if there exists a subset $\mathcal{R}_i \subseteq \{1, \dots, n\}\backslash \{i\} $ of size at most $r$ (called helper set) such that the punctured codes $\C_{\mid \mathcal{R}_i}$ and $\C_{\mid \mathcal{R}_i\cup \{i\}}$ have the same dimension. This implies that the $i^{th}$ coordinate of any codeword can be recovered via linear operations from the values of the $r$ other coordinates in $\mathcal{R}_i$. A code $\C\subseteq \Fq^n$ is said to be  \emph{$r$-locally recoverable} if each coordinate $i\in \{1, \dots, n\}$ has locality $r$ in $\C$.

In the Hamming metric, the support of a codeword is its set of nonzero coordinates, and its weight is the size of this support. For vector codes, the (column) support of a codeword is the subspace over the base field spanned by its coordinates, and its rank weight is the dimension of this subspace. In the corresponding matrix representation of the code, this support can be identified with the column space of the associated matrix. 
Since local recoverability is closely related to the notion of support, we propose a new formulation in rank metric that reflects the different nature of the rank support compared to the Hamming one. We therefore define locality at the level of subspaces rather than coordinates/columns.

To avoid dependence on the choice of bases, codewords are regarded as linear maps $f : U \rightarrow V$ between finite-dimensional vector spaces $U$ and $V$. In this framework, a rank-metric code is a subspace of $\Hom(U,V)$. Roughly speaking, a vector $u \in U$ is said to have \emph{locality} $r$ in a code $\C \subseteq \Hom(U,V)$ if, for every $f \in \C$, the value $f(u)$ can be recovered from $f(S)$ alone, where $S$ is an $r$-dimensional subspace of $U$ that does not contain $u$. The code $\C$ is \emph{$r$-rank locally recoverable} if every vector $u \in U$ has locality $r$ in $\C$ (see Definition~\ref{def lrc rank} for more details). This definition matches the nature of supports in the rank metric: a code is rank-locally recoverable with locality $r$ if every element of a codeword's support can be recovered from a \enquote{piece} of that support of size $r$. It also fits naturally the notion of rank puncturing introduced in~\cite{coveringradius}, which is used to formally define local recoverability as in the Hamming case.

\paragraph{Our contributions} The contribution of this paper is threefold.

First, we develop a general framework for rank-metric codes whose codewords are linear maps between finite-dimensional vector spaces. In this framework, we reformulate the notions of puncturing and shortening previously introduced for matrix codes in~\cite{coveringradius} and~\cite{Neri_tensor} and for vector codes in~\cite{Neri19}. This formulation avoids any dependence on the choice of bases and allows us to treat matrix codes and vector codes in a unified way.

Second, we introduce a new notion of local recoverability for rank-metric codes. Rather than defining locality on coordinates, as in the Hamming metric, we define it for every nonzero vector in the domain of the code. As explained above, this definition is coordinate-free, and it is consistent  with both the subspace nature of  rank-support and the notion of rank puncturing. We characterize locality in terms of the dual code and extend the classical notion of an information set to the rank-metric setting through the notion of \emph{information space}.

Third, we prove a Singleton-like bound for $\Fq$-linear rank-metric codes with locality. We provide constructions of rank locally recoverable codes achieving the bound and show that our construction recovers~\cite[Construction~1]{bartz2022rank} as a particular case.


\section{Preliminaries}

Let $q$ be a prime power, $\Fq$ be a finite field of order $q$ and $\Fqm$ be the extension field of $\Fq$ of degree $m$. In this paper, vectors over the extension field $\Fqm$ are represented by lowercase bold letters: $\xv, \yv$. Matrices are represented by uppercase letters $M, G, H$. We also denote the space of $m \times n$ matrices over a field $\F$ as $\F^{m \times n}$. For a positive integer $n$, the set $\set{1,\dots,n}$ is denoted by $[n]$.

Let $U$ and $V$ be $\Fq$-vector spaces of dimension $n$ and $m$. Throughout this paper, we fix bases $\B_U=(u_1,\dots,u_n)$ and $\B_V=(v_1,\dots,v_m)$ of $U$ and $V$, respectively. We endow $U$ and $V$ with the unique (nondegenerate) symmetric bilinear forms $\beta_U$ and $\beta_V$ for which $\B_U$ and $\B_V$ are orthonormal, i.e.,
\[
\beta_U(u_i,u_j)=\delta_{i,j} \text{ and } \beta_V(v_\ell,v_k)=\delta_{\ell,k}, \quad i,j \in [n],\; k,\ell \in [m]. 
\]
These are the \emph{standard inner products} on $U$ and $V$ with respect to the chosen bases. They are \emph{nondegenerate}, i.e., any nonzero $u \in U$ (resp. $v \in V$) gives rise to a nonzero map $\beta_U(u,-)$ (resp. $\beta_V(v,-))$. However they may (and here will) be nondefinite, i.e., there may exist a nonzero $u \in U$ (resp. $v \in  V$) such that $\beta_U(u,u)=0$ (resp. $\beta_V(v,v)=0$).

We denote by $\Hom(U,V)$ the space of $\Fq$-linear maps from $U$ to $V$. If $U = V$, the associated algebra is denoted by $\End(U)$. We also denote by $\GL(U)$ and $\GL_n(\Fq)$ the group of inversible endomorphisms over $U$ and over $\Fq^n$, respectively.

\subsection{Adjoint of a homomorphism}\label{subsec : adjoint}

In this section, we recall some classical notions of linear algebra that we will use throughout this paper. Most of the results of this section are taken from~\cite{roman_advanced_2007}.

\begin{definition}[Adjoint of a linear map]
  Let $f\in \Hom(U,V)$. The \emph{adjoint} of $f$ is the map $f^\dagger\in \Hom(V,U)$ such that
  \[
  \beta_V(f(u),v)= \beta_U(u,f^\dagger(v)), \qquad \forall u\in U, \: \forall v \in V. 
  \]
\end{definition}

The image and the kernel of a linear map and its adjoint are related via the orthogonal complement operation.

\begin{definition}[Orthogonal complement]
  Let $W \subseteq U$. The orthogonal complement of $W$ in $U$ is
  \[
  W^{\perp}\eqdef \{u \in U \mid \beta_U(u,w)=0, \forall w \in W\}.
  \]
\end{definition}

The following results follow from the correspondence between the dual and the adjoint maps (see~\cite[\textsection 10]{roman_advanced_2007}) and properties of the dual map (namely~\cite[Theorems~3.20 and 3.22]{roman_advanced_2007}).

\begin{theorem}\label{thm:properties_adjoint}
  Let $f \in \Hom(U,V)$. Then,
  \begin{itemize}
  \item $\ker(f^\dagger) = \im(f)^\perp$,
  \item $\im(f^\dagger) = \ker(f)^\perp$,
  \item $\rk(f^\dagger)= \rk(f)$,
  \item $f$ is surjective if and only if $f^\dagger$ is injective (and vice versa).
  \end{itemize}
\end{theorem}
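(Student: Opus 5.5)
We prove the four items in order, using only the defining identity of the adjoint, the nondegeneracy of $\beta_U$ and $\beta_V$, and the fact that a nondegenerate form gives $\dim W^\perp = \dim U - \dim W$ for every subspace $W \subseteq U$. We first record this dimension formula. Since $\beta_U$ is nondegenerate, the map $U \to U^*$, $u \mapsto \beta_U(u,-)$, is injective, hence an isomorphism. Under it, $W^\perp$ corresponds to the annihilator of $W$ in $U^*$, which has dimension $n-\dim W$. Consequently $(W^\perp)^\perp = W$, because $W \subseteq (W^\perp)^\perp$ trivially and the two dimensions agree. The analogous statements hold in $V$.

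\textbf{First item.} For $v \in V$ we have $f^\dagger(v)=0$ if and only if $\beta_U(u,f^\dagger(v))=0$ for all $u\in U$, by nondegeneracy of $\beta_U$. By the defining identity this says $\beta_V(f(u),v)=0$ for all $u \in U$, that is, $v \in \im(f)^\perp$. Hence $\ker(f^\dagger)=\im(f)^\perp$.

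\textbf{Second item.} The cleanest route is to show that $(f^\dagger)^\dagger = f$. Both maps satisfy $\beta_U(f^\dagger(v),u)=\beta_V(v,f(u))$ for all $u,v$, by symmetry of the forms, and the adjoint is unique by nondegeneracy of $\beta_V$. Applying the first item to $f^\dagger$ then gives $\ker(f) = \im(f^\dagger)^\perp$. Taking orthogonal complements and using $(W^\perp)^\perp=W$ yields $\im(f^\dagger)=\ker(f)^\perp$.

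\textbf{Third item.} By the second item and the dimension formula, $\rk(f^\dagger)=\dim \ker(f)^\perp = n-\dim\ker f$, which equals $\rk(f)$ by rank--nullity.

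\textbf{Fourth item.} If $f$ is surjective, then $\im f = V$, so by the first item $\ker f^\dagger = V^\perp = 0$ by nondegeneracy, and $f^\dagger$ is injective. Conversely, if $\ker f^\dagger=0$, then $\im(f)^\perp=0$, so $\im f = (\im(f)^\perp)^\perp = 0^\perp = V$ and $f$ is surjective. The ``vice versa'' statement, that $f$ is injective if and only if $f^\dagger$ is surjective, follows by applying this equivalence to $f^\dagger$ and using $(f^\dagger)^\dagger=f$.

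\textbf{Main obstacle.} The only delicate point is that the forms are nondefinite, so $W \cap W^\perp$ may be nonzero and $U = W \oplus W^\perp$ may fail. The argument above deliberately avoids any direct-sum decomposition and relies only on the dimension formula and on biorthogonality, both of which need nondegeneracy alone.
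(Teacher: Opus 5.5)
Your proof is correct, but it does not follow the paper's route. The paper gives no argument of its own. It relies on the correspondence between $f^\dagger$ and the dual map $f^*\colon V^*\to U^*$, transported through the isomorphisms $u\mapsto\beta_U(u,-)$ and $v\mapsto\beta_V(v,-)$. Under these isomorphisms $f^\dagger$ corresponds to $f^*$ and $W^\perp$ corresponds to the annihilator $W^0$. The paper then reads the four items off the standard dual-map facts $\ker f^*=(\im f)^0$ and $\im f^*=(\ker f)^0$, cited from Roman.

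You use that isomorphism only once, to get $\dim W^\perp=\dim U-\dim W$ and hence $(W^\perp)^\perp=W$. Otherwise you work directly with the defining identity of the adjoint:
\begin{itemize}
\item item one by a one-line nondegeneracy argument;
\item item two by applying item one to $f^\dagger$ together with the involution $(f^\dagger)^\dagger=f$ and biorthogonality;
\item items three and four as consequences.
\end{itemize}
In effect you re-prove the dual-map theorems in the language of the bilinear forms.

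The paper's reduction is shorter and shows that the statement is just the dual-map theorem in disguise. Your version is self-contained and makes explicit where symmetry and nondegeneracy enter, and that definiteness is never needed. That last point matters here, because the standard inner products are genuinely nondefinite over $\Fq$, so any argument through a decomposition $U=W\oplus W^\perp$ would break.
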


The following lemma relates the matrix representations of a linear map and its adjoint. 

\begin{lemma}\label{lem:adjoint_matrices}
  Let $\B_U, \B_V$ be two orthonormal bases of $U$ and $V$. For every $f \in \Hom(U,V)$, we have
  \[
  \Mat{\B_V}{ \B_U} {f^\dagger} =  \Mat{\B_U}{ \B_V} {f}^\top.
  \]
\end{lemma}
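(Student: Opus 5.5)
The plan is to compute the entries of both matrices directly and use orthonormality to read each entry off as a value of the bilinear forms. Write $A = \Mat{\B_U}{\B_V}{f} \in \Fq^{m\times n}$. By definition of the matrix of a linear map, its $j$-th column contains the coordinates of $f(u_j)$ in $\B_V$, so $f(u_j) = \sum_{k=1}^m A_{k,j} v_k$. Similarly, write $B = \Mat{\B_V}{\B_U}{f^\dagger} \in \Fq^{n\times m}$, so that $f^\dagger(v_k) = \sum_{i=1}^n B_{i,k} u_i$. The goal is to show $B_{j,k} = A_{k,j}$ for all $j\in[n]$ and $k\in[m]$.

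The first step is a coordinate-extraction identity. Because $\B_V$ is orthonormal for $\beta_V$, bilinearity together with $\beta_V(v_\ell,v_k)=\delta_{\ell,k}$ gives, for any $v\in V$, that its $k$-th coordinate in $\B_V$ equals $\beta_V(v,v_k)$. The analogous statement holds in $U$: the $j$-th coordinate of $u$ in $\B_U$ equals $\beta_U(u_j,u)$, using the symmetry of $\beta_U$. This step uses only orthonormality and does not require the forms to be definite, so the possible existence of isotropic vectors causes no difficulty.

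Next, apply this identity on both sides of the defining relation of the adjoint. Taking $u = u_j$ and $v = v_k$ gives
\[
A_{k,j} = \beta_V(f(u_j), v_k) = \beta_U(u_j, f^\dagger(v_k)) = B_{j,k}.
\]
Hence $B = A^\top$.

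There is no real obstacle in this argument. The one point that needs some care is that $f^\dagger$ is well defined, meaning it exists and is unique, which is what the definition implicitly assumes. Uniqueness follows from the nondegeneracy of $\beta_U$. Existence can in fact be obtained from this same computation: define $f^\dagger$ as the map whose matrix is $A^\top$, then check the adjoint identity on basis vectors (that is exactly the displayed computation read in reverse) and extend it to all of $U$ and $V$ by bilinearity.
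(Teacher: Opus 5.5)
Your proof is correct and complete. Orthonormality lets you read each coordinate off as $\beta_V(\cdot,v_k)$ or $\beta_U(u_j,\cdot)$, and applying the adjoint identity to the pair $(u_j,v_k)$ then gives $B_{j,k}=A_{k,j}$ directly.

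The paper gives no proof of this lemma; it quotes it as a standard fact from Roman's textbook. The natural route there goes through dual spaces. The forms $\beta_U$ and $\beta_V$ induce isomorphisms $U\to U^*$ and $V\to V^*$, $x\mapsto\beta(x,-)$, which send orthonormal bases to dual bases. Under these isomorphisms $f^\dagger$ corresponds to the dual map $f^*\colon V^*\to U^*$, and the matrix of $f^*$ in dual bases is the transpose of that of $f$.

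Your entrywise computation avoids the dual spaces and is more self-contained. It also gives more: read backwards, the same computation proves that $f^\dagger$ exists, and nondegeneracy gives uniqueness. The paper's definition of the adjoint takes this for granted, and the textbook route obtains it from the dual-map correspondence.
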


\subsection{Rank-metric codes}

In this section, we recall basic definitions and results on rank-metric codes. We first introduce rank-metric codes as spaces of $\Hom(U,V)$. This viewpoint provides a unified framework, from which the classical matrix and vector representations of rank-metric codes, commonly used in rank-metric literature, can be recovered. We then introduce support spaces related to a linear map, defined in terms of the image and the kernel of the adjoint of a linear map. This will allow us to recover the classical notions of row and column support in the matrix setting.
\subsubsection{Representations of rank-metric codes}
In this section, we briefly recall the basic notions of rank metric, rank-metric codes in their different representations, as well as code equivalence and duality.

\paragraph*{Rank-metric codes with homomorphisms}
Given $f,g \in \Hom(U, V)$,  the \textit{rank distance} between $f$ and $g$ is
$$
\dr(f,g) \eqdef \rk(f-g).
$$
This function defines a metric on $\Hom(U, V)$. The \emph{rank weight} of $f\in \Hom(U, V)$ is defined as
\[
\wr(f) \eqdef \dr(f,0)=\rk(f).
\]

\begin{definition}[Rank-metric codes]
  A \emph{rank-metric code} $\C$ is an $\Fq$-linear subspace of $\Hom(U, V)$ endowed with the rank metric. The \emph{dimension} of this code is its dimension as an $\Fq$-vector space. Its \emph{minimum rank distance} (shortly \emph{minimum distance}) is
  \[
  \dr(\C) \eqdef \min_{f\in \C\setminus\{0\}} \wr(f).
  \]
\end{definition}

The notion of \emph{equivalence of rank-metric codes} is defined in terms of $\Fq$-linear isometry, which we now recall.

\begin{definition} [$\Fq$-linear isometry]
  An \emph{$\Fq$-linear isometry} $\varphi$ between $\Hom(U,V)$ and $\Hom(U',V')$ is an $\Fq$-linear homomorphism $\varphi:\Hom(U,V)\to \Hom(U',V') $ such that $\wr(\varphi(f)) = \wr(f)$ for every $f\in \Hom(U,V)$.
\end{definition} 

\begin{definition}[Equivalent rank-metric codes]
  Two rank-metric codes $\C \subseteq \Hom(U, V)$ and $\mathcal{D} \subseteq \Hom(U',V')$  are \emph{equivalent} if there exists an $\F_q$-linear isometry $\varphi$ such that $\varphi(\C) =\mathcal{D}$.
\end{definition}

\begin{remark}
  \label{rk:isometries}%
  We recall that every $\F_q$-linear isometry is induced by compositions with invertible linear maps acting on the domain and the codomain~\cite[Proposition~III.15]{morrison2014equivalence}. In particular, if $\dim_{\Fq}(U) \neq \dim_{\Fq}(V)$, then every $\F_q$-linear isometry $\varphi$ is of the form
  \begin{equation}
    \label{eq:left-right-equiv}%
    \varphi(f) = \beta \circ f \circ \alpha,
  \end{equation}
  for some invertible maps $\alpha \in \Hom(U', U)$ and $\beta \in \Hom(V, V')$. Note that if $\dim_{\Fq}(U) = \dim_{\Fq}(V)$, then isometries of the form
  \[
  \varphi(f) = \alpha \circ f^{\dagger} \circ \beta
  \]
  must also be considered.
\end{remark}

If two rank-metric codes $\C \subseteq \Hom(U, V)$ and $\mathcal{D} \subseteq \Hom(U',V')$ are such that $\varphi(\C) = \mathcal{D}$ for $\F_q$-linear isometries $\varphi$ of the form~\eqref{eq:left-right-equiv}, we say they are \emph{left-right-equivalent}.

\paragraph{Duality} We now observe that the bilinear forms on $U$ and $V$ defined in the previous section allow us to define the following \emph{nondegenerate symmetric bilinear form} on $\Hom(U,V)$,
\[
\beta_{\Hom(U,V)}(f,g) \eqdef \Tr(f \circ g^\dagger), \qquad \forall f,g \in \Hom(U,V),
\]
where $\Tr$ denotes the usual trace map of the endomorphism $f \circ g^\dagger \in \End(V)$. We can now define the notion of duality of rank-metric codes.
\begin{definition}[Dual code]
  Let $\C\subseteq \Hom(U,V)$ be a rank-metric code. The \emph{dual code} of $\C$ is defined as
  \[
  \C^\perp = \{ h \in \Hom(U, V) \mid  {\rm Tr}(h \circ f^\dagger) = 0, \forall f \in \C \}.
  \]
\end{definition}

\paragraph*{Matrix rank-metric codes}
Fix a basis $\B_U=(u_1, \ldots, u_n)$ of $U$ and a basis $\B_V=(v_1, \ldots, v_m)$ of $V$, then every map $f\in\Hom(U,V)$ can be represented by a matrix $\Mat{\B_U}{\B_V}{f}\in \Fqmn$. Indeed for each basis vector $u_j$, we can write
\[
f(u_j)=\sum_{i=1}^m a_{i,j}v_i,
\]
and the matrix associated to $f$ is $\Mat{\B_U}{\B_V}{ f}=(a_{i,j})\in \Fqmn$. This correspondence defines an $\Fq$-isomorphism between $\Hom(U, V)$ and $\Fqmn$. Under this isomorphism, the rank of $f\in\Hom(U,V)$ coincides with the rank of the matrix $\Mat{\B_U}{\B_V}{f}$, independently of the choice of bases $\B_U$ and $\B_V$. This allows us to identify $\Hom(U,V)$ with $\Fqmn$ equipped with the rank metric. Rank-metric codes then correspond to \emph{matrix rank-metric codes} (shortly \emph{matrix codes}), \ie linear subspaces of $\Fqmn$ endowed with the \emph{rank metric}, as introduced by Delsarte~\cite{Delsarte}.



\begin{remark}
  Let $f,g \in \Hom(U,V)$ and let $F=\Mat{\B_U}{\B_V}{f}$ and $G=\Mat{\B_U}{\B_V}{g}$ the corresponding matrices. Since the bases $\B_U$ and $\B_V$ are orthonormal, Lemma~\ref{lem:adjoint_matrices} implies
  \begin{equation}\label{eq: trace_prod}
    \beta_{\Hom(U,V)}(f,g)=\Tr(FG^\top).
  \end{equation}
\end{remark}
Therefore, the dual code previously defined corresponds to the classical Delsarte dual code for rank-metric codes, \ie
\begin{equation}\label{eq:dual_matrix}
  \C^\perp = \{H \in \Fq^{m\times n} \mid \Tr(HF^\top)=0, \forall F\in \C\}.
\end{equation}

\paragraph*{Vector rank-metric codes} The set $\Fqm^n$ of vectors with entries in the extension field $\Fqm$ can also be endowed with the rank metric. Given a basis $\Gamma = (\gamma_1, \dots, \gamma_m)$ of $\Fqm$ over $\Fq$ and a vector ${\bm x} = (x_1, \dots, x_n) \in \F_{q^m}^n$, we denote by $\Gamma(\bm x)\in \Fqmn$ the matrix whose $i$-th column is the coordinate vector of $x_i$ with respect to $\Gamma$. In this setting, the \emph{rank weight} of a vector $\cv=(c_1, \ldots, c_n)\in \Fqm^n$ is defined equivalently as
\[
\wr(\cv)\eqdef \dim_{\Fq}( \vect{c_1, \ldots, c_n}{\Fq}) = \rk(\Gamma(\cv)).
\]
Therefore, every $\Fq$-linear subspace $\C$ of $\Fqm^n$ corresponds, via the basis $\Gamma$, to a matrix rank-metric code in $\Fqmn$, that we denote by $\Gamma(\C)$. We can make these codes fit within the language of morphisms by associating to each codeword $\bm c=(c_1,\dots,c_n) \in \Fqm^n$ a morphism $f_{\bm c} \in \Hom(U,\Fqm)$ defined by $f_{\bm c}(u_i)=c_i$.

In the literature, there is a particular interest on subspaces of $\Fqm^n$ that are not only $\Fq$-linear but also $\Fqm$-linear. More precisely, according to~\cite{Gabidulin}, we consider \emph{vector rank-metric code} defined as follows.

\begin{definition}
  An \emph{$\Fqm$-vector rank-metric code} $\C$ is an $\Fqm$-linear subspace of $\Fqm^{n}$ endowed with the rank distance. Its dimension $k$ is  $\dim_{\Fqm}(\C)$ and its minimum rank distance is defined as
  \[
  \dr(\C) \eqdef
  \min_{{\bm c} \in \C \setminus \{\bm 0\}} \wr(\cv).
  \]
  A vector rank-metric code of dimension $k$ and minimum rank-distance $d$ is called an \emph{$[n,k,d]_{q^m}$ vector rank-metric code} (shortly \emph{vector code}).
\end{definition}

The next proposition clarifies  the correspondence between $\Fqm$-vector codes and matrix codes.
\begin{proposition}\cite[Proposition 1.5]{Gor21}\label{prop:vec_to_mat}
  If $\C \subseteq \Fqm^n$ is a vector rank-metric code of dimension $k$ over $\Fqm$ and $\Gamma$ is an $\Fq$-basis of $\Fqm$, then $\Gamma(\C)$ is an $\Fq$-linear rank-metric code of dimension $mk$ over $\Fq$.
\end{proposition}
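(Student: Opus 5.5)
The plan is to show that the map $\bm x \mapsto \Gamma(\bm x)$ is an $\Fq$-linear injection from $\Fqm^n$ into $\Fqmn$, and then to compute the $\Fq$-dimension of $\C$.

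The first step is to check that the coordinate map $\Fqm \to \Fq^m$, $x \mapsto (a_1,\dots,a_m)$ with $x=\sum_{i=1}^m a_i\gamma_i$, is an $\Fq$-linear isomorphism. This holds because $\Gamma$ is an $\Fq$-basis of $\Fqm$. Applying this map entrywise, column by column, shows that $\Gamma : \Fqm^n \to \Fqmn$ is an $\Fq$-linear isomorphism, since both spaces have $\Fq$-dimension $mn$ and the map is injective. It also preserves rank weight, because $\wr(\cv)=\rk(\Gamma(\cv))$ by definition. Consequently, for any $\Fq$-subspace $\C\subseteq\Fqm^n$, the image $\Gamma(\C)$ is an $\Fq$-linear subspace of $\Fqmn$ with $\dim_{\Fq}\Gamma(\C)=\dim_{\Fq}\C$. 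Since it is endowed with the rank metric, it is a rank-metric code in the sense of the definitions above.

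The second step is to compute $\dim_{\Fq}\C$ when $\C$ is $\Fqm$-linear of dimension $k$. Choose an $\Fqm$-basis $\bm g_1,\dots,\bm g_k$ of $\C$. I claim that the $mk$ vectors $\gamma_j\bm g_i$, for $i\in[k]$ and $j\in[m]$, form an $\Fq$-basis of $\C$.

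To see that they span, write any codeword as $\sum_i \lambda_i\bm g_i$ with $\lambda_i\in\Fqm$, and expand each $\lambda_i=\sum_j a_{ij}\gamma_j$ with $a_{ij}\in\Fq$. To see that they are independent, suppose $\sum_{i,j}a_{ij}\gamma_j\bm g_i=0$. Grouping terms gives $\sum_i\bigl(\sum_j a_{ij}\gamma_j\bigr)\bm g_i=0$. Independence of the $\bm g_i$ over $\Fqm$ forces each $\sum_j a_{ij}\gamma_j=0$, and independence of $\Gamma$ over $\Fq$ then forces all $a_{ij}=0$. Hence $\dim_{\Fq}\C=mk$, and combining with the first step gives $\dim_{\Fq}\Gamma(\C)=mk$.

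There is no real obstacle here. The only point requiring care is the tower-of-bases argument in the second step: the $\Fq$-independence of the products $\gamma_j\bm g_i$ must be reduced correctly to independence first over $\Fqm$ and then over $\Fq$.
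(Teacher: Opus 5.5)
Your proof is correct and complete. The paper gives no proof of its own; it simply cites \cite[Proposition~1.5]{Gor21}. Your argument is the standard one: the coordinate map $\bm x\mapsto\Gamma(\bm x)$ is an $\Fq$-linear bijection that preserves rank by definition, and an $\Fqm$-basis $\bm g_1,\dots,\bm g_k$ of $\C$ gives the $\Fq$-basis $(\gamma_j\bm g_i)_{i,j}$, so $\dim_{\Fq}\Gamma(\C)=\dim_{\Fq}\C=mk$.
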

Therefore an $[n,k,d]_{q^m}$ vector rank-metric code gives rise to an $[m\times n, mk, d]_q$ matrix rank-metric code.

\begin{remark}
  \label{rmk: diff dual}%
  Despite the correspondence described above, some caution is required regarding the duals of these codes~\cite{Gor21}.
  The dual of an \emph{$\Fqm$-vector rank-metric code} $\C$ (with respect to the standard inner product of $\Fqm^n$) is thus defined as
  
  \begin{equation}
    \label{eq:dual_vector}%
    \C^\perp = \set{ \bm x =(x_1,\dots,x_n) \in \Fqm^n \;\big|\; \langle \bm x, \bm c \rangle \eqdef \sum_{i=1}^n x_ic_i= 0, \forall \bm c=(c_1,\dots,c_n) \in \C}.
  \end{equation}
  
  In general, the matrix code associated to $\C^\perp$ (with respect to the bilinear form on $\Fqm^n$ defined above) via the basis $\Gamma$ does not coincide with the dual of the matrix code $\Gamma$, i.e., $\Gamma(\C^\perp) \ne \Gamma(\C)^\perp$. However, Ravagnani~\cite[Theorem~21]{Ravagnani_Rank-metric_2015} proved that
  \[
  \Gamma(\C^\perp) = \Gamma'(\C)^\perp
  \]
  if $\Gamma =(\gamma_1,\dots, \gamma_m)$ and $\Gamma'=(\gamma_1',\dots, \gamma_m')$ are orthonormal bases of $\Fqm$ with respect to the trace bilinear form
  \[
  \begin{array}{rcl}
    \F_{q^m}\times \F_{q^m} & \to &\F_q        \\
    (x,y)                   & \mapsto &\Tr_{q^m/q}(xy),
  \end{array}
  \]
  that is $\Tr_{q^m/q}(\gamma_i \gamma_j')=\delta_{i,j}$ for $i,j\in [m]$. Here, $\Tr_{q^m/q}:\F_{q^m}\to\F_q$ denotes the trace map of the field extension $\F_{q^m}/\F_q$.
\end{remark}

\subsubsection{Supports}
As in the Hamming metric, the notion of \emph{support} of a code plays a fundamental role in the study of rank-metric codes. In particular, support spaces will be one of the main tools in our study of locality.
In this section we define the \emph{column} and \emph{row support} for linear maps in $\Hom(U,V)$. These notions rely on the definitions of adjoint maps recalled in Section~\ref{subsec : adjoint}, generalizing the notions of column and row support of a matrix code.

\begin{definition}[Column support]
  The \emph{column support} of $f \in \Hom(U,V)$ is the image of $f$:
  \[
  \colsupp(f) \eqdef \im(f) \subseteq V.
  \]
\end{definition}
As expected, the dimension of the column support of $f$ is the rank of $f$, that is $\wr(f)$.

The notion of row support of a linear map is defined by the image of its adjoint map. By Theorem~\ref{thm:properties_adjoint}, it can equivalently be described in terms of the orthogonal complement of its kernel.
\begin{definition}[Row support]
  If $f\in \Hom(U,V)$, the \emph{row support} of $f$ is the image of $f^\dagger$, that is
  \[
  \rowsupp(f) \eqdef \im(f^\dagger) = \ker(f)^{\perp}.
  \]
\end{definition}
Since $\rk(f^\dagger) =\rk(f),$ the dimension of the row support of $f$ is the rank of $f$.
\begin{remark}[Row and column supports of matrices]
  Let $f \in \Hom(U,V)$ and let $F\eqdef\Mat{\B_U}{\B_V}{f}$ be its matrix representation. Then the column and row supports correspond to the classical column and row supports of $F$.
  Indeed, the column support of $f$ is generated by vectors $f(u_1), \ldots, f(u_n)$, whose coordinates with respect to $\B_V$ are the columns of $F$.
  By Lemma~\ref{lem:adjoint_matrices}, the matrix of $f^\dagger$ with respect to the dual basis is $F^\top$. Therefore, the row support of $f$ is generated by columns of $F^\top$, \ie by the  rows of $F$.
\end{remark}

We observe that, differently from the row and column supports of a matrix codeword, which depend on a choice of basis $\B_U$ and $\B_V$, the row and column supports of a linear map are intrinsically defined. Therefore, they are invariant under changes of bases of $U$ and $V$.

The following lemma describes the behaviour of the row support (respectively the column support) under composition of linear maps.

\begin{lemma}\label{lem:supports_composition}%
  For every $f \in \Hom(U, V)$, for every $g \in \Hom(S,U)$, we have
  \begin{itemize}
  \item $\colsupp(f\circ g)\subseteq \colsupp(f)$ with equality if $g$ is surjective,
  \item $\rowsupp(f\circ g)\subseteq \rowsupp(g)$ with equality if $f$ is injective.
  \end{itemize}
\end{lemma}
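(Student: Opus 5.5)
The column-support statement is elementary linear algebra. The row-support statement will be reduced to it through the adjoint, using Theorem~\ref{thm:properties_adjoint}. Throughout, I assume the space $S$ carries its own standard inner product $\beta_S$, so that adjoints of maps into or out of $S$ make sense.

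\emph{Column support.} Since $\im(f\circ g)=f(\im g)\subseteq f(U)=\im f$, we get $\colsupp(f\circ g)\subseteq\colsupp(f)$. If $g$ is surjective, then $\im g=U$, so $f(\im g)=\im f$ and equality holds.

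\emph{Row support.} I first check that $(f\circ g)^\dagger=g^\dagger\circ f^\dagger$. For $s\in S$ and $v\in V$,
\[
\beta_V(f(g(s)),v)=\beta_U(g(s),f^\dagger(v))=\beta_S(s,g^\dagger(f^\dagger(v))).
\]
The adjoint is unique because $\beta_S$ is nondegenerate, so this identity determines $(f\circ g)^\dagger$.

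With this, $\rowsupp(f\circ g)=\im(g^\dagger\circ f^\dagger)=g^\dagger(\im f^\dagger)\subseteq\im(g^\dagger)=\rowsupp(g)$.

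For the equality case, suppose $f$ is injective. By the last item of Theorem~\ref{thm:properties_adjoint}, $f^\dagger$ is then surjective. Hence $\im f^\dagger=U$, and so $g^\dagger(\im f^\dagger)=\im g^\dagger$, which gives equality.

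Alternatively, one can use the description $\rowsupp=\ker(\cdot)^\perp$. If $f$ is injective, then $\ker(f\circ g)=\ker g$, so the orthogonal complements agree. In general $\ker g\subseteq\ker(f\circ g)$, and taking orthogonal complements reverses this inclusion, giving $\rowsupp(f\circ g)\subseteq\rowsupp(g)$.

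The only step needing any care is the composition rule for adjoints, and in particular that $S$ is equipped with an inner product making $g^\dagger$ well defined. The kernel-based argument avoids $g^\dagger$ altogether, so I would present it as the main proof.
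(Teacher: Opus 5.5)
Your proof is correct. The column-support part and your adjoint-based argument for the row support are exactly the paper's proof. The paper writes $\rowsupp(f\circ g)=\im(g^\dagger\circ f^\dagger)\subseteq\im(g^\dagger)$ and gets equality from the surjectivity of $f^\dagger$ via Theorem~\ref{thm:properties_adjoint}. It uses the rule $(f\circ g)^\dagger=g^\dagger\circ f^\dagger$ without comment. You verify that rule, and you state explicitly that $S$ needs its own inner product $\beta_S$, which the paper leaves implicit.

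Your kernel-based variant, which you propose as the main proof, is a valid and slightly more elementary route. It replaces the composition rule and the injective/surjective duality of adjoints with two simple facts: $\ker g\subseteq\ker(f\circ g)$, with equality when $f$ is injective, and the map $W\mapsto W^\perp$ reverses inclusions. It also makes clear that the statement is really about kernels.

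However, it does not avoid the adjoint or $\beta_S$ as much as you suggest. The orthogonal complements are taken inside $S$ with respect to $\beta_S$. The identification $\rowsupp(h)=\im(h^\dagger)=\ker(h)^\perp$, for $h=g$ and $h=f\circ g$, is the second item of Theorem~\ref{thm:properties_adjoint} applied to maps out of $S$. So both routes need the same structure; the kernel route just folds the use of the adjoint into the definition of $\rowsupp$.
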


\begin{proof}
  We have $\im(f\circ g)\subseteq\im(f)$, and  $\rowsupp(f\circ g)= \im((f\circ g)^\dagger) = \im(g^\dagger\circ f^\dagger) \subseteq \im(g^\dagger) = \rowsupp(g) $. If $g$ is surjective, $\im(f\circ g) = f(\im(g))=f(U)= \im(f)$. If $f$ is injective, then $f^\dagger$ is surjective (by Theorem~\ref{thm:properties_adjoint}), and $\rowsupp(f\circ g) =\im(g^\dagger\circ f^\dagger)=\im(g^\dagger)$.
\end{proof}

As in the Hamming metric, the support of a code is obtained by combining the supports of its codewords. This leads to the following notions of row and column supports for rank-metric codes.

\begin{definition}
  The \emph{support} of a code $\C \subseteq \Hom(U, V)$ is the sum of the supports of its codewords:
  \[
  \rowsupp(\C) \eqdef \sum_{f \in \C} \rowsupp(f)  \quad \text{ and }   \quad \colsupp(\C) \eqdef \sum_{f \in \C} \colsupp(f).
  \]
  If $\rowsupp(\C) \ne U$ (resp. $\colsupp(\C) \ne V$), we say that $\C$ is \emph{row-degenerated} (resp. \emph{column-degenerated}). The code $\C$ is \emph{nondegenerate} if $\rowsupp(\C) = U$ and $\colsupp(\C) = V$.
\end{definition}

From now on, we only consider nondegenerate codes.

\section{Puncturing and shortening}

In this section we reformulate the notions of \emph{puncturing} and \emph{shortening} of rank-metric codes, originally introduced for matrix codes (see~\cite{coveringradius, Neri_tensor, BorelloScottiWCC26}) to rank-metric codes in $\Hom(U, V)$. Finally, we present some basic properties, that will be used throughout the paper.

\subsection{Codes in \texorpdfstring{\(\Hom(U,V)\)}{Hom(U,V)}}

Given $S$ an $\Fq$-vector space and a map $g \in \Hom(S, U)$, we define
\[
\begin{array}{rrcl}
  \pi_g : & \Hom(U, V) & \to     & \Hom(S, V) \\
  ~       & f          & \mapsto & f \circ g
\end{array}
\]

\begin{definition}[Rank puncturing]
  \label{def:rank-puncturing}%
  Let $\C \subseteq \Hom(U,V)$ be a rank-metric code. Let $S$ be an $\Fq$-vector space of dimension $s \leqslant n$. Fix $g$ an injective map in $\Hom(S, U)$. The \emph{puncturing of $\C$ with respect to $g$} is the set
  \[
  \pi_g(\C) \eqdef \{f \circ g \mid f \in \C\} \subseteq \Hom(S, V).
  \]
\end{definition}

Throughout this paper, we will focus on a particular class of puncturing, i.e., the ones defined by canonical inclusions.

\begin{notation}\label{not:inclusion}
  If $S$ is a subspace of $U$, we use the notation $\pi_S$ to denote the puncturing map $\pi_{\iota_S}$, where $\iota_S \in \Hom(S, U)$ is the canonical injective map. In this case, given $f \in \Hom(U,V)$,  $\pi_{S}(f)$ is nothing but the restriction $f_{|S}$.
\end{notation}

The following remark shows that restricting to canonical inclusions is not a loss of generality up to code equivalence.

\begin{remark}\label{rmk:equivalent_puncturings}%
  Let $\C \subseteq \Hom(U, V)$, and $g \in \Hom(S, U)$ be an injective map. Up to equivalence, the code puncturing $\pi_g(\C)$ only depends on $\C$ and $\im(g)$. More precisely, if we denote $T \eqdef \im(g) \subseteq U$, then $\pi_g(\C)$ and $\pi_T(\C)$ are equivalent rank-metric codes. Indeed,
  \[
  \pi_g(\C) = \{ f \circ g \mid f \in \C \} = \{ f \circ \iota_T \circ \psi \mid f \in \C \} =  \pi_T(\C) \circ \psi
  \]
  where $\psi : S \to T$ is an isomorphism such that $g = \iota_T \circ \psi$.


\end{remark}

\begin{definition}[Rank shortening]
  \label{def:rank-shortening}%
  Let $\C \subseteq \Hom(U,V)$ be a rank-metric code. Let $S$ be an $\Fq$-vector space of dimension $s\leqslant n$ and $g\in \Hom(S,U)$ be an injective map. Let $\imh$ be a subspace of $U$ of dimension $n-s$ such that $\im(g) \oplus \imh = U$.
  The \emph{(domain) shortening of $\C$ with respect to $g$ and $\imh$} is the subspace of $\Hom(S, V)$ defined as
  $$
  \Short(\C,g,\imh) \eqdef \{f \circ g  \mid f \in \C,  f_{\mid \imh} = 0\} = \pi_g( \ker(\pi_{\imh}) \cap \C).
  $$
\end{definition}

\begin{remark}\label{rmk: punctShort_comparaisons}
  Puncturing and shortening have already been defined in the literature, both for matrix codes and  for vector codes~\cite{coveringradius, Neri19, BorelloScottiWCC26}. We now compare our definitions of puncturing and shortening mainly with the recent formulation of~\cite{BorelloScottiWCC26}, which is stated in the setting of matrix codes. For this comparison, we translate our definitions into the language of matrices.
  Let $g \in \Hom(S,U)$ be an injective map as in Definition~\ref{def:rank-puncturing} and let $B \eqdef \Mat{\B_S}{\B_U}{g} \in \Fq^{n \times s}$ be its matrix representation, where $s = \dim_{\Fq} S$ and $\B_S$ is a basis of $S$. Since $g$ is injective, $B$ has full rank. In matrix representation,
  \[
  \pi_g(\C) =\pi_B(\C)= \{MB \mid M\in \C\}.
  \] 
  Up to transposition, this coincides with the puncturing of~\cite{coveringradius}, or with the \emph{right puncturing} of~\cite{BorelloScottiWCC26}.
  Let now $\imh$ be a subspace as in Definition~\ref{def:rank-shortening}, so that $\im(g) \oplus \imh = U$. Choose a basis $\B_{\imh}$ of $\imh$, and let $B' \in \Fq^{n \times (n-s)}$ be the matrix whose $j$-th column is the coordinate vector of the $j$-th element of the basis $\B_{\imh}$, with respect to $\B_U$. The assumption $\im(g) \oplus \imh = U$ is equivalent to the matrix $(B \mid B')\in \Fq^{n \times n}$ being invertible. Therefore, in matrix representation,
  \[
  \Short(\C, g, \imh) = \Short(\C, B, B') = \{MB \mid M \in \C, MB'=0\}.
  \]
  Again, up to transposition, this is the same matrix operation as the shortening of~\cite{coveringradius}, and it corresponds to the \emph{right shortening} of~\cite{BorelloScottiWCC26}.
\end{remark}
The following lemma introduces some useful properties of puncturing and shortening. In particular, it describes how these operations affect the column and row supports, the minimum distance and the dimension of the code.

\begin{lemma}\label{lem:prop_punct}
  Let $\C \subseteq \Hom(U, V)$ be a rank-metric code. Let $g\in \Hom(S, U)$ be an injective map. Then the following properties hold.
  \begin{enumerate}[label=(\roman*)]
  \item\label{it:inclusion}  $\colsupp(\pi_g(\C))\subseteq \colsupp(\C)$ and $\rowsupp(\pi_g(\C))\subseteq \rowsupp(g)$;
  \item\label{it:dist_min1} if $\dim(S)+ d_R(\C) > n$, then $d_R(\pi_g(\C)) \leq d_R(\C)$.
  \item\label{it:dist_min_borne2} If $\pi_g(\C)\neq 0$, then $\dr(\C)-(n-\dim_{\Fq}S)\le \dr(\pi_g(\C)) \le \dim_{\Fq} S$. 
    
  \end{enumerate}
  Let $\imh \subseteq U$ be a subspace such that $\im(g) \oplus \imh = U$. Let $T$ be an $\Fq$-vector space and $h\in \Hom(T,U)$ be any injective map with $\im(h) = \imh$. Then,
  \begin{enumerate}[resume,label=(\roman*)]
  \item\label{it:somme_dim} $\dim_{\Fq} (\pi_g(\C)) + \dim_{\Fq}(\Short(\C,h,\im(g))) = \dim_{\Fq} (\C)$;
  \item\label{it:dist_min2} if $\Short(\C,g,\imh)\neq 0$, then $\dr (\Short(\C,g,\imh)) \ge \dr(\C)$.
  \end{enumerate}
\end{lemma}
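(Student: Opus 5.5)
The plan is to prove each item directly from the definitions, working with a single codeword $f \in \C$ and the decomposition $U = \im(g) \oplus \imh$ wherever it is available. Throughout, $s \eqdef \dim_{\Fq} S = \dim_{\Fq} \im(g)$, since $g$ is injective.

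For \ref{it:inclusion}, I apply Lemma~\ref{lem:supports_composition} to each codeword. It gives $\colsupp(f\circ g)\subseteq\colsupp(f)$ and $\rowsupp(f\circ g)\subseteq\rowsupp(g)$. Summing over $f\in\C$ yields both inclusions, because the right-hand sides are subspaces: $\colsupp(\C)$, and $\rowsupp(g)$ which does not depend on $f$.

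For \ref{it:dist_min1}, I pick $f\in\C$ with $\rk(f)=\dr(\C)$, so that $\dim\ker(f) = n-\dr(\C)$. The hypothesis gives $\dim\im(g) = s > n-\dr(\C) = \dim\ker f$. Hence $\im(g)\not\subseteq \ker f$, so $f\circ g\neq 0$. Then $\dr(\pi_g(\C)) \le \rk(f\circ g)\le \rk(f)=\dr(\C)$.

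For \ref{it:dist_min_borne2}, the upper bound holds because every map with domain $S$ has rank at most $s$. For the lower bound, I take any nonzero codeword $f\circ g$ of $\pi_g(\C)$; then $f\neq 0$, so $\rk(f)\ge \dr(\C)$. I complete $\im(g)$ by some complement $W$ of dimension $n-s$. From
\[
f(U)=f(\im g)+f(W)
\]
I get $\rk(f\circ g)=\dim f(\im g)\ge \rk(f)-(n-s)\ge \dr(\C)-(n-s)$. Taking the minimum over nonzero codewords gives the claim. The only subtlety is to note that we range over nonzero elements of the punctured code, which forces $f\ne0$.

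For \ref{it:somme_dim}, I apply rank--nullity to the linear map $\pi_g|_\C : \C\to\Hom(S,V)$. Its kernel is $K \eqdef \{f\in\C : f|_{\im(g)}=0\}$, since $g$ is an isomorphism onto its image. I then show that $\dim K = \dim \Short(\C,h,\im(g))$. By definition $\Short(\C,h,\im(g))=\pi_h(K)$, so it suffices that $\pi_h$ is injective on $K$. Indeed, if $f\in K$ and $f\circ h=0$, then $f$ vanishes on $\im(g)+\imh = U$, so $f=0$. This identification of the kernel with the shortened code is the main (though mild) step.

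For \ref{it:dist_min2}, let $f\circ g$ be a nonzero element of the shortening, where $f\in\C$ and $f|_\imh=0$. Then $f(U)=f(\im g)+f(\imh)=f(\im g)$, so $\rk(f\circ g)=\rk(f)$. Since $f\neq0$, this gives $\rk(f\circ g)\ge\dr(\C)$.
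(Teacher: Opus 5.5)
Your proof is correct and follows the paper's argument item by item. You use Lemma~\ref{lem:supports_composition} for (i), and the comparison $\dim \im(g) > \dim \ker f$ for (ii); you apply it directly to a minimum-weight codeword rather than by the paper's contraposition, which is equivalent and slightly cleaner. For (iii) you use the fact that restricting to a subspace lowers rank by at most its codimension, for (iv) rank--nullity together with injectivity of $\pi_h$ on $\ker(\pi_g)\cap\C$, and for (v) the identity $f(U)=f(\im(g))$, exactly as the paper does.
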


\begin{proof} \ref{it:inclusion} follows from Lemma~\ref{lem:supports_composition}. If $f \circ g \neq 0$ for every nonzero $f \in \C$, then Lemma~\ref{lem:supports_composition} implies that $d_R(\pi_g(\C)) \leq d_R(\C)$. But $f \circ g=0$ if and only if $\im(g) \subseteq \ker(f)$, which implies that
  \[\dim(S) \leq n-w_R(f) \leq n - d_R(\C).\]
We thus proved~\ref{it:dist_min1} by contraposition. Now, let us prove~\ref{it:dist_min_borne2}. First, the upper bound $ \dr(\pi_g(\C)) \le \dim_{\Fq} S $ follows directly from Lemma~\ref{lem:supports_composition} and from the injectivity of $g$.
  We now show the lower bound. Let $f \circ g$ be a nonzero codeword of $\pi_g(\C)$, then $f\neq 0$ and so, $\wr(f) \ge \dr(\C)$. We recall that if we restrict a linear map to a subspace of the domain, its rank decreases by at most the codimension of that subspace. Thus, since $\im(f \circ g)=f(\im(g))$, we get $\dim_{\Fq}(f(\im(g)))\ge \wr(f)-(n-\dim_{\Fq}(\im(g)))$. So, we have $\wr(f\circ g)\geq \wr(f)-(n-\dim_{\Fq} S)$ for every nonzero codeword $f$ and we get the desired inequality.

  Let us prove~\ref{it:somme_dim}. By the rank-nullity theorem, we have
  \[
  \dim(\C) = \dim_{\Fq} (\pi_g(\C) )+ \dim_{\Fq} (\ker(\pi_g)  \cap \C)\,.
  \]
  Now notice that for any $f \in \Hom(U, V)$, if $f \circ g = 0$ and $f_{\mid \imh} = 0$, then $f = 0$, since $\im(g) \oplus \imh = U$. Hence,
  \[
  \dim_{\Fq}(\ker(\pi_g)  \cap \C) = \dim_{\Fq}(\pi_h(\ker(\pi_g)  \cap \C))
  \]
  leading to the desired result since $\Short(\C,h,\im(g)) = \pi_h(\ker(\pi_{\im(g)})  \cap \C)$ by definition.
  
  Now, it remains to prove~\ref{it:dist_min2}. Let $0 \neq \varphi \in \Short(\C, g, \imh)$. By definition, there exists $f\in \C$ such that $f_{\mid \imh} =0$ and $\varphi = f \circ g$. Since $\im(g) \oplus \imh = U$ and $f$ vanishes on $\imh$, then $\im(f)=f(\im(g))$. Now, since $f(\im(g))=\im(f \circ g)$, we have $\im(f)=\im(f\circ g)=\im(\varphi)$ and so $\wr(\varphi)=\wr(f)$. Since $\varphi \neq 0$, also $f \neq 0$ and $\wr(f) \geq \dr(\C)$, which proves the result.
\end{proof}

The minimum distance of a rank-metric code can also be related to the dimension of spaces along which puncturings do not make the code smaller.

\begin{proposition}\cite[Proposition~7.2]{Neri_tensor}
  \label{prop:equiv_neri}%
  Let $\C \subseteq \Hom(U,V)$ of dimension $k$ and minimum rank distance $\dr(\C)$. For any $2\leqslant d \leqslant \min(m,n)$, the following statements are equivalent.
  \begin{enumerate}
  \item $\dr(\C) \geqslant d$;
  \item for every $\Fq$-vector space $S$ of dimension $s\geqslant n-d+1$, and every injective map $g\in \Hom(S,U)$, the punctured code $\pi_g(\C)$ has dimension $k$.
  \end{enumerate}
\end{proposition}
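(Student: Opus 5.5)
Both implications rest on one observation: $\pi_g$ restricted to $\C$ is linear, so $\dim \pi_g(\C) = k$ if and only if $\ker(\pi_g)\cap\C = 0$. A codeword $f\in\C$ lies in this kernel exactly when $f\circ g = 0$, that is, when $\im(g)\subseteq \ker(f)$. The statement therefore becomes a comparison between the dimensions of the spaces $\im(g)$ and the kernels of nonzero codewords. Throughout, $\dim \im(g) = s$ because $g$ is injective.

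\emph{(1) $\Rightarrow$ (2).} Assume $\dr(\C)\ge d$. Take $g\in\Hom(S,U)$ injective with $s\ge n-d+1$, and a codeword $f\in\C$ with $f\circ g=0$. Suppose $f\neq 0$. Then $\im(g)\subseteq\ker(f)$, so
\[
s\le \dim\ker(f) = n-\wr(f)\le n-d,
\]
which contradicts $s\ge n-d+1$. Hence $f=0$, so $\pi_g$ is injective on $\C$ and $\dim\pi_g(\C)=k$. This is the same contrapositive reasoning already used for Lemma~\ref{lem:prop_punct}\ref{it:dist_min1}.

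\emph{(2) $\Rightarrow$ (1).} Argue by contraposition: assume some nonzero $f\in\C$ has $\wr(f)=w\le d-1$. Then $\dim\ker(f)=n-w\ge n-d+1$. Choose a subspace $S\subseteq\ker(f)$ of dimension exactly $n-d+1$, and let $g=\iota_S$ be the canonical inclusion, which is injective. Then $f\circ g=0$ with $f\neq 0$, so $\ker(\pi_g)\cap\C\neq 0$ and $\dim\pi_g(\C)<k$. This violates (2) for this particular $S$, which completes the equivalence.

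Neither direction presents a real obstacle. The only point needing care is that the dimension $n-d+1$ is admissible, i.e.\ $1\le n-d+1\le n$; this holds because $d\le\min(m,n)$ and $d\ge 2$. If one wants (2) with $s$ strictly larger than $n-d+1$ to fail as well, it is enough to take $S$ of any dimension between $n-d+1$ and $n-w$ inside $\ker(f)$. The contrapositive above, however, only needs a single counterexample $S$. The hypothesis $d\ge 2$ is not used beyond guaranteeing that $S$ is a proper subspace.
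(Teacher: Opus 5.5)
Your proof is correct and complete. Both directions reduce, as you say, to the observation that a nonzero $f\in\C$ is killed by $\pi_g$ exactly when $\im(g)\subseteq\ker(f)$, which forces $\dim S\le n-\rk(f)$. The paper does not prove this statement itself (it cites Neri et al.), but this kernel-containment argument is exactly the one it uses for Lemma~\ref{lem:prop_punct}\ref{it:dist_min1}, so your approach is essentially the paper's.
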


A correspondence between code puncturing and shortening, well known in the Hamming metric~\cite[Theorem~1.5.7]{HuffmanP10}, was also proved in the context of rank-metrix matrix codes (\cite[Theorem~3.5]{coveringradius} or~\cite[Theorem~2.6]{BorelloScottiWCC26}). We reformulate these results in the context of linear maps.

\begin{proposition}\label{prop:dual_short_punt}%
  Let $S$ be an $\Fq$-vector space and let $\C \subseteq \Hom(U,V)$ be a rank-metric code. Let $g\in \Hom(S,U)$ be an injective map.
  Let $\widetilde{g}\in \Hom(S,U)$ be a map and let $\widetilde{\imh}$ be a subspace of $U$ such that $\dim_{\Fq} \widetilde{\imh}=\dim_{\Fq} U - \dim_{\Fq} S$. Assume that,
  \begin{equation}\label{eq: cond_dual_shortening}
    \left\{
    \begin{array}{l}
      g^\dagger \circ \widetilde{g} = \id_S, \\
      g^\dagger_{\mid \widetilde{\imh}}=0\\
    \end{array}
    \right.
  \end{equation}
  Then,
  \[
  \pi_g(\C)^\perp = \Short(\C^\perp, \widetilde{g},\widetilde{\imh}).
  \]
\end{proposition}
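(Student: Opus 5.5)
The plan is to prove the two inclusions directly, without any dimension count. Throughout I assume that $S$ carries a nondegenerate symmetric bilinear form $\beta_S$, so that adjoints of maps into and out of $S$ and the duality in $\Hom(S,V)$ make sense (this is implicit in the statement). I will use only $(a\circ b)^\dagger=b^\dagger\circ a^\dagger$, $(a^\dagger)^\dagger=a$, and the definition $\beta_{\Hom(S,V)}(\varphi,\psi)=\Tr(\varphi\circ\psi^\dagger)$.

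\textbf{Preliminary step: structure of $\widetilde g$ and $\widetilde{\imh}$.} Since $g$ is injective, $g^\dagger$ is surjective by Theorem~\ref{thm:properties_adjoint}. Hence $\dim\ker(g^\dagger)=n-s$. The condition $g^\dagger_{\mid\widetilde{\imh}}=0$ and $\dim\widetilde{\imh}=n-s$ then give $\widetilde{\imh}=\ker(g^\dagger)$. Next, $g^\dagger\circ\widetilde g=\id_S$ shows that $\widetilde g$ is injective and that $\im(\widetilde g)\cap\ker(g^\dagger)=0$. Comparing dimensions, $\im(\widetilde g)\oplus\widetilde{\imh}=U$, so the shortening $\Short(\C^\perp,\widetilde g,\widetilde{\imh})$ is well defined.

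The key identity is the following: for every $h\in\Hom(U,V)$ with $h_{\mid\widetilde{\imh}}=0$,
\[
h=h\circ\widetilde g\circ g^\dagger .
\]
It suffices to check this on each summand of $U=\im(\widetilde g)\oplus\widetilde{\imh}$.
\begin{itemize}
\item On $\im(\widetilde g)$: for $u=\widetilde g(x)$ we have $\widetilde g\, g^\dagger\widetilde g(x)=\widetilde g(x)$.
\item On $\widetilde{\imh}$: both sides vanish, since $h$ vanishes there and $g^\dagger$ vanishes there.
\end{itemize}

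\textbf{Inclusion $\supseteq$.} Let $h\in\C^\perp$ with $h_{\mid\widetilde{\imh}}=0$, and set $\varphi=h\circ\widetilde g$. Taking adjoints in the key identity gives $h^\dagger=g\circ\widetilde g^\dagger\circ h^\dagger = g\circ\varphi^\dagger$. Hence, for every $f\in\C$,
\[
\Tr\big((f\circ g)\circ\varphi^\dagger\big)=\Tr(f\circ h^\dagger)=0,
\]
so $\varphi\in\pi_g(\C)^\perp$.

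\textbf{Inclusion $\subseteq$.} Let $\varphi\in\pi_g(\C)^\perp\subseteq\Hom(S,V)$ and define $h\eqdef\varphi\circ g^\dagger\in\Hom(U,V)$. Then:
\begin{itemize}
\item $h_{\mid\widetilde{\imh}}=0$, because $\widetilde{\imh}=\ker g^\dagger$;
\item $h\circ\widetilde g=\varphi\circ g^\dagger\circ\widetilde g=\varphi$;
\item for every $f\in\C$, $\Tr(f\circ h^\dagger)=\Tr(f\circ g\circ\varphi^\dagger)=0$, so $h\in\C^\perp$.
\end{itemize}
Therefore $\varphi=h\circ\widetilde g\in\Short(\C^\perp,\widetilde g,\widetilde{\imh})$.

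The main obstacle is the key identity, and in particular the preliminary identification $\widetilde{\imh}=\ker g^\dagger$ together with the direct sum decomposition. Once these are established, both inclusions reduce to manipulating adjoints under the trace.
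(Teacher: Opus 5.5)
Your argument is correct, but it takes a different route from the paper. The paper does not prove the duality by hand. It invokes \cite[Theorem~2.6]{BorelloScottiWCC26} for the transposed matrix code to get $\pi_g(\C)^\perp=\{\phi\circ\widetilde g \mid \phi\in\C^\perp,\ \rowsupp(\phi)\subseteq\im(g)\}$. It then rewrites the support condition as $\im(g)^\perp\subseteq\ker(\phi)$ using Theorem~\ref{thm:properties_adjoint}. What remains is to identify $\widetilde{\imh}=\ker(g^\dagger)=\im(g)^\perp$, which it does by the same surjectivity-plus-dimension count as your preliminary step.

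You instead prove both inclusions intrinsically. Recognizing $\widetilde g\circ g^\dagger$ as the projection onto $\im(\widetilde g)$ along $\widetilde{\imh}$ gives $h=(h\circ\widetilde g)\circ g^\dagger$ for every $h$ vanishing on $\widetilde{\imh}$. The explicit lift $\varphi\mapsto\varphi\circ g^\dagger$ handles the reverse inclusion, so everything reduces to $\Tr\bigl(f\circ(\varphi\circ g^\dagger)^\dagger\bigr)=\Tr(f\circ g\circ\varphi^\dagger)$.

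The paper's route is shorter and makes the point that the proposition is a reformulation of a known matrix result. However, it rests on an external theorem and an implicit translation through bases and transposes. Yours is self-contained and coordinate-free. It actually exhibits $h\mapsto h\circ\widetilde g$ and $\varphi\mapsto\varphi\circ g^\dagger$ as mutually inverse maps between $\{h\in\C^\perp \mid h_{\mid\widetilde{\imh}}=0\}$ and $\pi_g(\C)^\perp$. It also makes explicit two points the paper leaves tacit. First, $\Short(\C^\perp,\widetilde g,\widetilde{\imh})$ is well defined, since $\widetilde g$ is injective and $\im(\widetilde g)\oplus\widetilde{\imh}=U$. Second, $S$ must carry a nondegenerate symmetric form for $g^\dagger$ and the duality in $\Hom(S,V)$ to make sense.
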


\begin{proof}
  By~\cite[Theorem~2.6]{BorelloScottiWCC26} (applied to the transposed code), we have 
  \[
  \pi_g(\C)^\perp = \set{\phi \circ \widetilde{g} \mid \phi \in \C^\perp, \rowsupp(\phi) \subseteq \colsupp(g)}.
  \]
  By Theorem~\ref{thm:properties_adjoint}, this reformulates as follows.
  \[
  \pi_g(\C)^\perp = \set{\phi \circ \widetilde{g} \mid \phi \in \C^\perp, \im(g)^\perp \subseteq \ker(\phi)}.
  \]
  We just have to check that $\widetilde{\imh}=\ker(g^\dagger)=\im(g)^\perp$. We first observe that $g^\dagger$ is surjective. By the rank-nullity theorem, $\dim \ker(g^\dagger) = \dim U - \dim S$.
  Since, $g^\dagger_{\mid \widetilde{\imh}}=0$, we have $\widetilde{\imh} \subseteq \ker(g^\dagger)$.
  Now, since $\dim_{\Fq}\widetilde{\imh}=\dim_{\Fq} U - \dim_{\Fq} S$, we get the desired equality.
\end{proof}
\begin{remark}
  Proposition~\ref{prop:dual_short_punt} is a reformulation, in the language of morphisms, of~\cite[Theorem~2.6]{BorelloScottiWCC26}. We observe that this result is slightly more general than~\cite[Theorem~3.5]{coveringradius}. Indeed, the duality statement of~\cite[Theorem~3.5]{coveringradius} relates the puncturing defined by an invertible matrix $A$ to the shortening defined by $(A^\top)^{-1}$.
  Informally speaking, this inverse transpose appears because the shortening on the dual side has to be defined with respect to the data dual to those used for puncturing. In our setting, these data are encoded by the adjoint of the puncturing map. More precisely, if $B$ is the matrix of the puncturing map $g$ (see Remark~\ref{rmk: punctShort_comparaisons}), then the conditions in~\eqref{eq: cond_dual_shortening} become
  \[
  B^\top \widetilde B = I_r
  \qquad\text{and}\qquad
  B^\top \widetilde B' = 0,
  \]
  where $\widetilde B$ is the matrix of $\widetilde g$, and the columns of $\widetilde B'$ span the subspace $\widetilde{\imh}$. Thus, the matrix $(A^\top)^{-1}$ in~\cite[Theorem~3.5]{coveringradius} gives one particular way of producing matrices $\widetilde B$ and $\widetilde B'$ satisfying these two relations.
\end{remark}

We conclude this section by showing that the matrix interpretation of the definitions of puncturing and shortening (see Remark~\ref{rmk: punctShort_comparaisons}) behave well under the usual passage from vector-codes in $\Fqm$ to matrix-codes in $\Fq$.
\begin{lemma} \label{lem: dim passage gamma poiconne}
  Let $\C\subseteq \Fqm^n$. Let $\Gamma$ be a $\Fq$-basis of $\Fqm$. For every $B\in \Fq^{n \times s}$, and for every $B'\in\F_{q}^{n\times (n-s)}$ such that $(B|B')$ is invertible, we have $$\Gamma(\pi_B(\C))=\pi_B(\Gamma(\C)),$$ and $$\Gamma(\Short(\C,B,B'))=\Short(\Gamma(\C),B,B') .$$
\end{lemma}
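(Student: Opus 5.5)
The key fact is that $\Gamma$ acts column by column, while the matrices $B$ and $B'$ have entries in $\Fq$ and act on the right, i.e.\ on column indices. I will first fix the meaning of $\pi_B$ and $\Short$ on $\Fqm^n$: for $\xv\in\Fqm^n$, viewed as a row vector, $\pi_B(\xv)\eqdef \xv B\in\Fqm^s$, and $\Short(\C,B,B')\eqdef\{\xv B \mid \xv\in\C,\ \xv B'=0\}$. This mirrors Remark~\ref{rmk: punctShort_comparaisons}, with matrices $M\in\Fqmn$ replaced by row vectors over $\Fqm$.

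The central identity is $\Gamma(\xv B)=\Gamma(\xv)B$ for every $\xv\in\Fqm^n$ and every $B\in\Fq^{n\times s}$. To check it, write $\xv=\sum_{i=1}^m \gamma_i \xv^{(i)}$ with $\xv^{(i)}\in\Fq^n$. Then $\xv^{(i)}$ is the $i$-th row of $\Gamma(\xv)$. Since $B$ has entries in $\Fq$, we get $\xv B=\sum_i \gamma_i(\xv^{(i)}B)$ with $\xv^{(i)}B\in\Fq^s$. By uniqueness of the coordinates in the basis $\Gamma$, the $i$-th row of $\Gamma(\xv B)$ is $\xv^{(i)}B$, which is the $i$-th row of $\Gamma(\xv)B$. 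The same argument gives $\Gamma(\xv B')=\Gamma(\xv)B'$.

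For puncturing, I take images: $\Gamma(\pi_B(\C))=\{\Gamma(\xv B)\mid \xv\in\C\}=\{\Gamma(\xv)B\mid\xv\in\C\}=\pi_B(\Gamma(\C))$. For shortening, $\Gamma$ is injective and $\Fq$-linear, so $\xv B'=0$ holds if and only if $\Gamma(\xv)B'=\Gamma(\xv B')=0$. Hence
\[
\Gamma(\Short(\C,B,B'))=\{\Gamma(\xv)B \mid \xv\in\C,\ \Gamma(\xv)B'=0\}=\{MB\mid M\in\Gamma(\C),\ MB'=0\}=\Short(\Gamma(\C),B,B').
\]
The invertibility of $(B\mid B')$ plays no role in this argument; it only makes the shortening well defined.

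There is no real obstacle here. The only care needed is the bookkeeping of row versus column conventions, and the observation that the entries of $B$ lie in $\Fq$, so $B$ commutes with the $\Fq$-coordinate expansion. This would fail if $B$ had entries in $\Fqm$.
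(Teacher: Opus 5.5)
Your proof is correct and follows the paper's route. The identity $\Gamma(\mathbf{c}B)=\Gamma(\mathbf{c})B$ gives the puncturing statement, and injectivity of $\Gamma$ gives $\mathbf{c}B'=0 \iff \Gamma(\mathbf{c})B'=0$ for the shortening; the only cosmetic difference is that the paper checks the identity column by column via $\Gamma(\langle \mathbf{c}, b_j\rangle)=\Gamma(\mathbf{c})\,b_j$, whereas you check it row by row through the expansion $\mathbf{c}=\sum_i \gamma_i \mathbf{c}^{(i)}$.
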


\begin{proof}
  Write $B = (b_1 \mid \dots \mid b_s)$. For every $\bm c\in \C$, we have
  \begin{align*}
    {\Gamma}({\bm c} B ) & =(\Gamma(\langle \bm c, b_1 \rangle), \dots, \Gamma(\langle \bm c, b_s \rangle ) ) \\ &= ({\Gamma}(\bm c)\cdot b_1,\dots, {\Gamma}(\bm c)\cdot b_s) \\ &= {\Gamma}(\bm c)B,
  \end{align*}
  which proves the first equality.
  For the second equality, remark that for every $\bm c \in \C$, since $\Gamma$ is an isomorphism, $\bm cB' = 0$ if and only if $\Gamma(\bm c)B'=0$.
\end{proof}

\section{Local recovery in rank metric}

In this section, we introduce and analyse the notion of local recovery for codes endowed with the rank metric. Our goal is to adapt locality constraints that are used in the Hamming metric to rank-dedicated operations and structures (support, puncturing, etc.).

\paragraph{Local recovery seen as partial but efficient erasure decoding} In an erasure-channel context, assume that $Y = C + E \in \Fq^{m \times n}$ is a received word, where $C \in \C \subseteq \Fq^{m \times n}$ is a rank-metric code and $E \in \Fq^{m \times n}$ is rank-$1$ erasure, that is, a matrix such that $W = \colspan(E)$ is known and has dimension $1$.

The problem of finding $C$ given $Y$ and $W$ is known as the erasure decoding problem. A typical way to solve it is to consider a matrix $A \in \Fq^{n \times (n-1)}$ of rank $n-1$ such that $EA = 0$ (the knowledge of $W$ is sufficient for this), and to compute $YA = CA + EA = CA$. If $\dim \pi_A(\C) = \dim \C$, then one can recover the codeword $C$ from $YA = CA = \pi_A(C)$, for example by using linear algebra.

The local recovery problem is somewhat similar to erasure decoding, with two essential modifications: (i) given $u \in \Fq^n \setminus \{ 0 \}$, it is only required to recover the vector $Cu$ and not $C$ entirely,  (ii) to do so, we only have access to the image of $C$ on a subspace of $\Fq^n$ of small dimension.

In particular, and in contrast with~\cite{Kadhe_et_al19}, we believe that the local recovery property must concern any nonzero element of the code domain ($U$ for morphisms, $\Fq^n$ for matrices), instead of only canonical vectors.

\subsection{Locality for rank-metric codes}

Recall that given a subspace $S \subseteq U$, the map $\pi_S$ denotes the puncturing associated to the canonical inclusion $\iota_S : S \hookrightarrow U$ (see Notation~\ref{not:inclusion}). We now define a new notion of locality with respect to the rank metric.

\begin{definition}
  Let $\C \subseteq \Hom(U,V)$ and $r \in [n]$. We say that $u \in U \setminus \{ 0 \}$ \emph{has (rank) locality $r$ in $\C$} if there exists $S \subseteq U$ of dimension at most $r$, such that $u \notin S$ and
  $$\dim_{\Fq} \big( \pi_S(\C)\big)=\dim_{\Fq} \big(\pi_{S \oplus \vect{u}{\Fq}}(\C)\big).$$
  In this case, the space $S$ is called a \emph{helper space} for $u$.
\end{definition}

As a consequence of Remark~\ref{rmk:equivalent_puncturings}, we can rewrite the notion of rank locality with more general puncturings.





\begin{lemma}\label{lem:redefinition-localite}%
  Let $\C \subseteq \Hom(U,V)$, and $r \in [n]$. We say that $u \in U \setminus \{ 0 \}$ \emph{has (rank) locality $r$ in $\C$} if there exists an $\Fq$-space $S$ of dimension at most $r$ and an injective map $g \in \Hom(S, U)$ such that $u\notin \im(g)$ and
  $$\dim_{\Fq} \big( \pi_g(\C)\big)=\dim_{\Fq} \big(\pi_{h}(\C)\big)$$
  where 
  \[ \begin{array}{rrcl}
    h:& S \times \Fq &\rightarrow & U\\
      & (x,\lambda) & \mapsto & g(x)+\lambda u.  
  \end{array}\]
\end{lemma}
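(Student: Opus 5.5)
We read the lemma as asserting that the definition via arbitrary injective maps agrees with the original definition via canonical inclusions. So we prove both implications, using Remark~\ref{rmk:equivalent_puncturings} to pass between $\pi_g(\C)$ and $\pi_{\im(g)}(\C)$.

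\emph{From inclusions to maps.} Suppose $u$ has locality $r$ in the original sense, with helper space $S\subseteq U$. Take $g=\iota_S$. Then $u\notin S=\im(g)$, and $g$ is injective. The map $h:(x,\lambda)\mapsto x+\lambda u$ is injective because $u\notin S$, and its image is $S\oplus\vect{u}{\Fq}$. By Remark~\ref{rmk:equivalent_puncturings}, $\pi_h(\C)=\pi_{S\oplus\vect{u}{\Fq}}(\C)\circ\psi$, where $\psi$ is an isomorphism. Precomposition with an isomorphism is an injective linear map on $\Hom(S\oplus\vect{u}{\Fq},V)$, so it preserves dimension. Hence the dimension equality carries over.

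\emph{From maps to inclusions.} Now suppose $g\in\Hom(S,U)$ is injective, $u\notin\im(g)$, and the dimension equality holds. Set $T\eqdef\im(g)$. Then $\dim T=\dim S\le r$ and $u\notin T$.

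\begin{enumerate}
\item Since $u\notin\im(g)$ and $g$ is injective, $h$ is injective. Indeed, if $g(x)+\lambda u=0$, then $\lambda\neq 0$ would put $u$ in $\im(g)$, so $\lambda=0$, and then $x=0$.
\item The image of $h$ is $T\oplus\vect{u}{\Fq}$.
\item Apply Remark~\ref{rmk:equivalent_puncturings} to both $g$ and $h$. This gives $\pi_g(\C)=\pi_T(\C)\circ\psi$ and $\pi_h(\C)=\pi_{T\oplus\vect{u}{\Fq}}(\C)\circ\psi'$, with $\psi,\psi'$ isomorphisms.
\item Composition with an isomorphism preserves dimension, so $\dim\pi_T(\C)=\dim\pi_{T\oplus\vect{u}{\Fq}}(\C)$.
\end{enumerate}
Therefore $T$ is a helper space for $u$ in the original sense.

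There is no real obstacle here. The only step needing care is the injectivity of $h$, which is what allows Remark~\ref{rmk:equivalent_puncturings} to apply to $h$; it follows from the hypothesis $u\notin\im(g)$.
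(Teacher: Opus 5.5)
Your proof is correct and takes the same approach as the paper, which gives no separate proof and presents the lemma as a direct consequence of Remark~\ref{rmk:equivalent_puncturings}. You apply that remark to both $g$ and $h$ after checking that $h$ is injective with image $\im(g)\oplus\vect{u}{\Fq}$, which makes both directions of the equivalence explicit.
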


\begin{definition}
  A code $\C \subseteq \Hom(U,V)$ is $r$-\emph{rank locally recoverable} if every $u \in U \setminus \{0\}$ has rank locality $r$ in $\C$.
\end{definition}

Proposition~\ref{prop:equiv_neri} gives a trivial upper bound on the locality parameter of a code.

\begin{proposition}
  Let $\C \subseteq \Hom(U, V)$ be a rank-metric code of minimum rank distance $d\geqslant 2$. Then $\C$ is $(n-d+1)$-rank locally recoverable.
\end{proposition}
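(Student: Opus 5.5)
Fix $u \in U \setminus \{0\}$. We must find a subspace $S \subseteq U$ with $\dim S \le n-d+1$, $u \notin S$, and $\dim \pi_S(\C) = \dim \pi_{S \oplus \vect{u}{\Fq}}(\C)$. The plan is to pick $S$ large enough that Proposition~\ref{prop:equiv_neri} makes both puncturings injective on $\C$, so that both have dimension $k = \dim \C$.

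Set $s \eqdef n-d+1$. Since $d \ge 2$, we have $s \le n-1$. Hence there is a subspace $S \subseteq U$ of dimension exactly $s$ with $u \notin S$. To build it, extend $\{u\}$ to a basis $(u, w_2, \dots, w_n)$ of $U$ and let $S = \vect{w_2,\dots,w_{s+1}}{\Fq}$; this is possible because $s+1 \le n$. Then $S \oplus \vect{u}{\Fq}$ has dimension $s+1$.

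We also need $d \le \min(m,n)$ to apply Proposition~\ref{prop:equiv_neri}. This holds whenever $\C \neq 0$, since a nonzero map has rank at most $\min(m,n)$. If $\C = 0$, the statement is trivial, as all punctured codes are $0$ (and then $d$ is only a convention).

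Now apply Proposition~\ref{prop:equiv_neri} to the canonical inclusions $\iota_S$ and $\iota_{S \oplus \vect{u}{\Fq}}$. Their domains have dimensions $s = n-d+1$ and $s+1 \ge n-d+1$, both at least $n-d+1$. Since $\dr(\C) \ge d$, both punctured codes have dimension $k$. Therefore $\dim \pi_S(\C) = \dim \pi_{S\oplus\vect{u}{\Fq}}(\C)$, so $S$ is a helper space for $u$ of dimension $n-d+1$. As $u$ was arbitrary, $\C$ is $(n-d+1)$-rank locally recoverable.

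The only subtle point is ensuring $u \notin S$, which requires $s < n$; this is exactly where the hypothesis $d \ge 2$ is used. Everything else is a direct application of Proposition~\ref{prop:equiv_neri}.
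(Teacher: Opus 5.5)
Your proof is correct and follows the paper's argument: you pick $S$ of dimension $n-d+1$ with $u \notin S$, then apply Proposition~\ref{prop:equiv_neri} to both $S$ and $S \oplus \vect{u}{\Fq}$, so both puncturings have dimension $\dim_{\Fq}\C$. You also check details the paper leaves implicit: that such an $S$ exists because $d \ge 2$, and that $d \le \min(m,n)$ holds for nonzero $\C$.
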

\begin{proof}
  Let $u\in U \setminus \{0\}$. Let $S\subseteq U$ of dimension $n-d+1$, such that $u\notin S$. By Proposition~\ref{prop:equiv_neri}, $$\dim_{\Fq}(\pi_{ S}(\C))=\dim_{\Fq}(\C) =\dim_{\Fq}(\pi_{ S \oplus \vect{u}{\Fq}}(\C)),$$ and since $u\notin S$, $u$ has locality $n-d+1$ in $\C$.
\end{proof}
Lemma~\ref{lem:redefinition-localite} allows us to give an easy translation of the notion of locality in the context of matrix codes, where the matrix $B$ plays the role of the linear map $h$.

\begin{definition}[Locality for matrix codes]
  \label{def:localite-matrices}%
  Let $\C\subseteq \Fq^{m\times n}$, and $r \in [n]$.
  \begin{itemize}
  \item We say that $u \in \Fq^n \setminus \{ 0 \}$ has (rank) locality $r$ in $\C$ if there exists  $B \in \F_q^{n \times r'}$ of rank $r' \le r$ such that $u \notin \colsupp(B)$ and
    $$
    \dim_{\Fq} \big(\pi_B(\C)\big)=\dim_{\Fq} \big(\pi_{(B \mid u)}(\C)\big).
    $$
  \item The code $\C \subseteq \Fq^{m\times n}$ is $r$-\emph{rank locally recoverable} if every $u \in \Fq^n  \setminus \{ 0 \}$ has rank locality $r$ in $\C$.
  \end{itemize}
\end{definition}

\begin{example}
  \label{ex:exemple-tres-simple}%
  Let $m, n \ge 1$ and consider the matrix code
  \[
  \C = \{ ( M | M ) \mid M \in \Fq^{m \times n} \} \subseteq \Fq^{m \times 2n}
  \]
  of dimension $mn$ over $\Fq$. We claim this code has locality $r = 1$. Indeed, for all nonzero $u = (u_1, u_2) \in \Fq^{2n}$ (with $u_i \in \Fq^n$), we have
  \[
  C u = M (u_1 + u_2), \qquad \forall C = (M | M) \in \C. 
  \]
  As a consequence: 
  \begin{itemize}
  \item if $u_2 = 0$, then $C u = C (u_1, 0) = M u_1 = C(0, u_1)$ for all $C \in \C$. Denote by $B \in \Fq^{n \times 1}$ the column matrix associated to the vector $(0, u_1) \in \Fq^n$; then one can easily check that $\pi_{(B | u)}(\C)$ and $\pi_{B}(\C)$ have the same dimension over $\Fq$, meaning that $\colspan(B)  = \vect{(0, u_1)}{\Fq}$ is a helper space for $u$.
  \item if $u_2 \ne 0$, we also have $C u = C (u_1 + u_2, 0)$ for all $C \in \C$. If $u_1 + u_2 = 0$, then $C u = 0$ for all $C \in \C$, hence $u$ has locality $0$. Otherwise, denote by $B' \in \Fq^{n \times 1}$ the column matrix associated to the vector $(u_1 + u_2, 0) \in \Fq^n$. Then $\dim_{\Fq} \pi_{(B' | u)}(\C) = \dim_{\Fq} \pi_{B'}(\C)$ and $\vect{(u_1 + u_2, 0)}{\Fq}$ is a helper space for $u$.
  \end{itemize}
\end{example}

Our definition of locality in the rank metric actually differs from the one given by Kadhe~\emph{et~al.} in~\cite{Kadhe_et_al19}. We will demonstrate this distinction thanks to upcoming Proposition~\ref{prop:equiv_localite_dual}, that requires to firstly prove a few elementary facts. To this end, up to the end of the section, we denote by $(v_1, \dots, v_m)$ a basis of $V$.

\begin{notation}\label{nota:rank-one-operators}
  For every $u\in U$ and every $\ell \in [m]$, let $\varphi_\ell[u]\in\Hom(U,V)$ be the rank-one operator defined by
  \[
    {\varphi_\ell[u]}(x) \eqdef \beta_U(u,x) v_\ell, \qquad \forall x\in U.
    \]
\end{notation}

\begin{lemma}\label{lem:prop_varphi}
  For every $u\in U$ and every $\ell\in[m]$, we have
  \begin{enumerate}[label=(\roman*)]
  \item\label{it:adjoint} $(\varphi_\ell[u])^\dagger(v)=\beta_V(v_\ell,v)u$ for all $v\in V$,
  \item\label{it:rowsupp} $\rowsupp(\varphi_\ell[u]) = \vect{u}{\Fq}$,
  \item\label{it:trace} for every $f\in\Hom(U,V)$, $\langle f,\varphi_\ell[u]\rangle = \beta_V(f(u),v_\ell)$.
  \end{enumerate}
\end{lemma}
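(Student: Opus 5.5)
The plan is to prove the three items by direct computation from the definitions, using symmetry of $\beta_U$ and $\beta_V$ and the fact that $\B_V=(v_1,\dots,v_m)$ is orthonormal. I will use Theorem~\ref{thm:properties_adjoint} for the row support and Lemma~\ref{lem:adjoint_matrices} (via Equation~\eqref{eq: trace_prod}) for the trace identity. The notation $\langle f,\varphi_\ell[u]\rangle$ is read as $\beta_{\Hom(U,V)}(f,\varphi_\ell[u])=\Tr(f\circ\varphi_\ell[u]^\dagger)$.

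For \ref{it:adjoint}, I define $\psi(v)\eqdef\beta_V(v_\ell,v)u$, which is linear in $v$, and check the defining identity of the adjoint. For all $x\in U$ and $v\in V$,
\[
\beta_V(\varphi_\ell[u](x),v)=\beta_U(u,x)\beta_V(v_\ell,v)=\beta_U(x,\beta_V(v_\ell,v)u)=\beta_U(x,\psi(v)).
\]
The adjoint is unique because $\beta_U$ is nondegenerate, so $\varphi_\ell[u]^\dagger=\psi$.

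For \ref{it:rowsupp}, if $u=0$ the map is zero and its row support is $\{0\}=\vect{u}{\Fq}$. If $u\neq 0$, then by \ref{it:adjoint} the image of $\psi$ is contained in $\vect{u}{\Fq}$. Moreover $\psi(v_\ell)=\beta_V(v_\ell,v_\ell)u=u$, since $\B_V$ is orthonormal. Hence $\rowsupp(\varphi_\ell[u])=\im(\psi)=\vect{u}{\Fq}$.

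For \ref{it:trace}, I compute $\Tr(f\circ\psi)$, where $f\circ\psi\in\End(V)$ sends $v\mapsto\beta_V(v_\ell,v)f(u)$. This is a rank-at-most-one endomorphism of the form $v\mapsto\beta_V(a,v)b$ with $a=v_\ell$ and $b=f(u)$. Its trace is $\beta_V(a,b)$, which can be checked in the orthonormal basis $\B_V$: the diagonal entries are $\beta_V(v_k,v_\ell)\beta_V(v_k,f(u))$, and summing over $k$ gives $\beta_V(v_\ell,f(u))=\beta_V(f(u),v_\ell)$ by symmetry. This last step, the trace of a rank-one operator, is the only mildly nontrivial one. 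It can alternatively be seen in matrix form as $\Tr(FG^\top)$ with $G=e_\ell u^\top$ in coordinates, where $u$ here denotes the coordinate column of $u$ in $\B_U$. Then $\Tr(Fue_\ell^\top)=e_\ell^\top Fu$, which is the $\ell$-th coordinate of $f(u)$, namely $\beta_V(f(u),v_\ell)$.
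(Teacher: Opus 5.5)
Your proof is correct and follows essentially the same route as the paper: you verify the defining identity of the adjoint directly and conclude by nondegeneracy, then compute $\Tr(f\circ\varphi_\ell[u]^\dagger)$ in the orthonormal basis $\mathcal{B}_V$, where only the $\ell$-th diagonal term survives. For item~(ii), your treatment of the case $u=0$ and the observation $\varphi_\ell[u]^\dagger(v_\ell)=u$ spell out details that the paper leaves implicit ("the second item then follows").
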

\begin{proof}

  Let $u\in U$ and $\ell\in[m]$. For every $x \in U$ and $v\in V$, we have
  \[
  \begin{aligned}
    \beta_U((\varphi_\ell[u])^\dagger(v),x)
    &= \beta_V(v, \varphi_\ell[u](x))
    = \beta_V\left( v, \beta_U(u,x) v_\ell\right)
    = \beta_U( u,x) \beta_V( v, v_\ell)
    \\&= \beta_U\left(\beta_V(v_\ell,v) u, x\right),
  \end{aligned}
  \]
  which proves the first item. The second item then follows. To prove the last one, applying~\ref{it:adjoint} we get that, for every $f\in\Hom(U,V)$ and $i \in [m]$,
  \[
  \begin{aligned}
    \beta_V\!\left(f((\varphi_\ell[u])^\dagger(v_i)),v_i\right)
    &= \beta_V\!\Big(f(\beta_V(v_\ell,v_i) u), v_i\Big)
    \\&= \beta_V(v_\ell,v_i) \cdot \beta_V(f(u), v_i)
    \\&=\begin{cases}
    \beta_V(f(u), v_i)& \text{if } i =\ell,\\
    0 &\text{otherwise.}
    \end{cases}
  \end{aligned}
  \]
  The last equality holds because the basis $\mathcal{B}_V$ is orthonormal. Then 
  \[
  \langle f,\varphi_\ell[u]\rangle = \Tr(f\circ (\varphi_\ell[u])^\dagger) = \sum_{i=1}^m \beta_V\left(f((\varphi_\ell[u])^\dagger(v_i)),v_i\right) =\beta_V(f(u),v_{\ell})
  \]
  which proves the third item.
\end{proof}

\begin{lemma} \label{lemme_decompo}%
  Let $S \subseteq U$ of dimension $r$.
  Let $f\in \Hom(U,V)$ such that $\rowsupp(f)\subseteq S$. For every basis $(s_1, \dots, s_r)$ of $S$, there exists $(\lambda^j_i)_{i\in [r], j\in [m]}\in \Fq^{m\times r}$ such that $$f = \sum_{j=1}^m \varphi_j\left[\sum_{i=1}^r\lambda_i^j s_i\right].$$
\end{lemma}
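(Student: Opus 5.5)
The plan is to decompose $f$ along the orthonormal basis $(v_1,\dots,v_m)$ of $V$ and identify each component as a rank-one operator of the form $\varphi_j[w_j]$ with $w_j \in S$.

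\textbf{Step 1: decompose $f$ coordinatewise.} For each $j\in[m]$ I consider the linear form $x\mapsto \beta_V(f(x),v_j)$ on $U$. Since $\mathcal{B}_V$ is orthonormal, every $x\in U$ satisfies $f(x)=\sum_{j=1}^m \beta_V(f(x),v_j)\,v_j$. By the definition of the adjoint, $\beta_V(f(x),v_j)=\beta_U(x,f^\dagger(v_j))$. Setting $w_j \eqdef f^\dagger(v_j)\in U$ and using the symmetry of $\beta_U$, I get
\[
f(x)=\sum_{j=1}^m \beta_U(w_j,x)\,v_j=\sum_{j=1}^m \varphi_j[w_j](x),
\]
so $f=\sum_{j}\varphi_j[w_j]$ by Notation~\ref{nota:rank-one-operators}.

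\textbf{Step 2: locate each $w_j$ in $S$.} Each $w_j=f^\dagger(v_j)$ lies in $\im(f^\dagger)=\rowsupp(f)$, and by hypothesis $\rowsupp(f)\subseteq S$. Hence $w_j\in S$.

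\textbf{Step 3: read off the coefficients.} Writing each $w_j$ in the basis $(s_1,\dots,s_r)$ of $S$ gives $w_j=\sum_{i=1}^r\lambda_i^j s_i$ for unique scalars $\lambda_i^j\in\Fq$, which produces the required family $(\lambda_i^j)$.

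No step is a real obstacle. The only point needing care is the identity $f(x)=\sum_j\beta_V(f(x),v_j)v_j$, which relies on $\mathcal{B}_V$ being orthonormal for $\beta_V$; this holds by the choice of $\beta_V$ made in the Preliminaries. This identity does not require $\beta_V$ to be definite, only that the expansion coefficients are recovered by pairing with the $v_j$. Equivalently, one could check $f=\sum_j\varphi_j[w_j]$ by comparing adjoints using Lemma~\ref{lem:prop_varphi}\ref{it:adjoint}, but the direct evaluation above is simpler.
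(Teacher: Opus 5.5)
Your proof is correct and follows essentially the paper's argument: the paper also sets $w_j = f^\dagger(v_j)\in\im(f^\dagger)\subseteq S$ and proves $f=\sum_j\varphi_j[f^\dagger(v_j)]$. The only difference is cosmetic: the paper checks this identity on adjoints via Lemma~\ref{lem:prop_varphi}~\ref{it:adjoint}, whereas you evaluate $f$ directly, and both routes rest on the orthonormality of $\mathcal{B}_V$, which you rightly flag.
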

\begin{proof} Fix  $(s_1, \dots, s_r)$ a basis of $S$.
  By hypothesis, $\im(f^\dagger) \subseteq S$. Therefore, there exists $(\lambda^j_i)_{i\in [r], j\in [m]}\in \Fq^{m\times r}$ such that for every $j\in[m]$, $f^\dagger(v_j) =\sum_{i=1}^r \lambda_i^js_i$. This implies that for every $v\in V$,
  $$f^\dagger (v) = \sum_{j=1}^m \beta_V(v_j,v) f^\dagger(v_j) = \sum_{j=1}^m (\varphi_j[f^\dagger(v_j)])^\dagger (v).$$
  We deduce that
  \[
  f = \sum_{j=1}^m \varphi_j[f^\dagger(v_j)] = \sum_{j=1}^m \varphi_j\left[\sum_{i=1}^r\lambda_i^j s_i\right].
  \]
\end{proof}

\begin{proposition} \label{prop:equiv_localite_dual}
  Let $\C\subseteq \Hom(U,V)$.
  Let $u\in U\setminus \{ 0 \}$, and $S\subseteq U$ of dimension $r$, with $u\notin S$. The following are equivalent.
  \begin{enumerate}[label=(\roman*)]
  \item\label{it:eqdim} $u$ has locality $r$ with helper space $S$ (i.e., $\dim_{\Fq}(\pi_S(\C)) = \dim_{\Fq}(\pi_{S \oplus \vect{u}{\Fq} }(\C))$);
  \item \label{it:dependency} for any basis $\set{s_1, \dots, s_r}$ of $S$, there exists $\psi_1,\dots,\psi_r \in \End(V)$ such that for all $f \in \C$, $f(u)=\sum_{i=1}^r \psi_i( f(s_i))$;
  \item\label{it:dual}There exist $g_1, \dots, g_m \in \Hom(U,V)$ such that, for all $\ell \in [m]$, we have $\rowsupp(g_\ell)\subseteq S$  and  $$\varphi_\ell[u]+g_\ell\in \C^\perp$$
    where the maps $\varphi_\ell[u]$ are defined in Notation~\ref{nota:rank-one-operators}.
  \end{enumerate}
\end{proposition}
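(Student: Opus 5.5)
I would prove the chain \ref{it:eqdim}$\Leftrightarrow$\ref{it:dependency} and \ref{it:dependency}$\Leftrightarrow$\ref{it:dual}, treating (i)–(ii) as a statement about linear maps between punctured codes and (ii)–(iii) as a duality statement driven by Lemma~\ref{lem:prop_varphi}\ref{it:trace} and Lemma~\ref{lemme_decompo}.

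For \ref{it:eqdim}$\Leftrightarrow$\ref{it:dependency}, I would write $S' = S\oplus\vect{u}{\Fq}$ and look at the restriction map $\rho:\pi_{S'}(\C)\to\pi_S(\C)$, $f_{|S'}\mapsto f_{|S}$. This map is surjective, so equality of dimensions is equivalent to $\rho$ being injective. Injectivity means that whenever $f\in\C$ vanishes on $S$, it also satisfies $f(u)=0$. In that case $f(u)$ is a well-defined $\Fq$-linear function of the tuple $(f(s_1),\dots,f(s_r))\in V^r$, defined on the subspace $W=\{(f(s_i))_i : f\in\C\}\subseteq V^r$. Extending this linear map from $W$ to all of $V^r$ and splitting it along the coordinates gives $\psi_1,\dots,\psi_r\in\End(V)$. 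Conversely, the formula in \ref{it:dependency} forces $f(u)=0$ whenever $f_{|S}=0$, so $\rho$ is injective. Since $S$ has a basis, the two quantifiers ``for any basis'' and ``there exists a basis'' are interchangeable here.

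For \ref{it:dependency}$\Rightarrow$\ref{it:dual}, I would fix $\ell$ and consider the linear form $f\mapsto\beta_V(f(u),v_\ell)=\langle f,\varphi_\ell[u]\rangle$. Using \ref{it:dependency} this equals $\sum_i\beta_V(\psi_i(f(s_i)),v_\ell)=\sum_i\beta_V(f(s_i),\psi_i^\dagger(v_\ell))$. Expanding $\psi_i^\dagger(v_\ell)=\sum_j c_{ij}v_j$ and applying Lemma~\ref{lem:prop_varphi}\ref{it:trace} again, this becomes $\langle f, h_\ell\rangle$ with $h_\ell=\sum_{i,j}c_{ij}\varphi_j[s_i]$. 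By Lemma~\ref{lem:prop_varphi}\ref{it:rowsupp}, $\rowsupp(h_\ell)\subseteq S$. Setting $g_\ell=-h_\ell$ gives $\langle f,\varphi_\ell[u]+g_\ell\rangle=0$ for all $f\in\C$.

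For \ref{it:dual}$\Rightarrow$\ref{it:eqdim}, I would decompose each $g_\ell=\sum_j\varphi_j[\sigma_{\ell j}]$ with $\sigma_{\ell j}\in S$, using Lemma~\ref{lemme_decompo}. Then for $f\in\C$ we get $\beta_V(f(u),v_\ell)=-\sum_j\beta_V(f(\sigma_{\ell j}),v_j)$ for every $\ell$. So if $f_{|S}=0$, every coordinate of $f(u)$ vanishes in the orthonormal basis, hence $f(u)=0$ and $\rho$ is injective.

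The main obstacle is bookkeeping: converting the endomorphisms $\psi_i$ into rank-one operators with row support in $S$ via adjoints and orthonormality. It is not conceptually hard, but the indices have to be tracked carefully.
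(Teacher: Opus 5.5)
Your proof is correct and essentially follows the paper's route: your restriction map $\rho$ plays the role of the paper's map $\mathrm{ev}_S$ for (i)$\Leftrightarrow$(ii), and your $h_\ell=\sum_{i,j}c_{ij}\varphi_j[s_i]$ is exactly the paper's $\sum_i\psi_i^\dagger\circ\varphi_\ell[s_i]$, which you obtain by basis expansion rather than by cyclicity of the trace. The only deviation is that you close the cycle with (iii)$\Rightarrow$(i) through the kernel criterion (if $f_{|S}=0$ then $f(u)=0$), whereas the paper proves (iii)$\Rightarrow$(ii) by reconstructing the $\psi_i$ explicitly; your version is slightly shorter and equally valid.
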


\begin{proof}
  We are going to prove that \ref{it:eqdim} $\Leftrightarrow$ \ref{it:dependency} $\Leftrightarrow$ \ref{it:dual}. Let us consider the map
  \begin{equation}\label{eq:ev_S}%
    \begin{array}{rrcl}
      \ev_S : &\pi_{S \oplus \vect{u}{\Fq}}(\C) & \rightarrow & V^r                   \\
      ~ & f                                  & \mapsto     & (f(s_1),\dots,f(s_r))
    \end{array}
  \end{equation}
  By definition, the rank of $\ev_S$ is $\dim_{\Fq} \pi_S(\C)$. Then, by the rank-nullity theorem, \ref{it:eqdim} holds if and only if the map $\ev_S$ is injective.
  Assume~\ref{it:eqdim} holds. The map
  \[
  \begin{array}{rrcl}
    \ev_u : & \pi_{S\oplus \vect{u}{\Fq}}(\C) & \rightarrow & V    \\
    ~ &f                                   & \mapsto     & f(u)
  \end{array}
  \]
  satisfies that $\ker \ev_S = \set{0} \subseteq \ker \ev_u$. The fundamental theorem on homomorphisms ensures the existence of a map $\Psi : V^r \rightarrow V$ such that $\ev_u=\Psi \circ \ev_S$. In other words, for all $f \in \C$, $f(u)=\Psi(f(s_1),\dots,f(s_r))$, which implies~\ref{it:dependency}.
  \medskip
  Conversely, assume~\ref{it:dependency} holds. Then, for all $f \in \C$, we have $f(u)=\sum_{i=1}^r \psi_i(f(s_i))$. This implies that $\ev_S$ is injective, which implies~\ref{it:eqdim}.

  \medskip

  Let us prove that~\ref{it:dependency} implies~\ref{it:dual}. By~\ref{it:dependency}, for all $f \in \C$, we have $f(u)=\sum_{i=1}^r \psi_i( f(s_i))$.  Applying $\beta_V(\cdot, v_l)$  to this equality, Lemma~\ref{lem:prop_varphi}~\ref{it:trace} gives
  \begin{align*}
    \langle f, \varphi_\ell[u] \rangle & = \sum_{i=1}^r \Tr(\psi_i \circ f \circ \varphi_\ell[s_i]^\dagger) \\
    &= \sum_{i=1}^r \Tr( f \circ \varphi_\ell[s_i]^\dagger \circ \psi_i)\\
    &=\sum_{i=1}^r \Tr( f \circ(\psi_i^\dagger\circ \varphi_\ell[s_i])^\dagger) \\
    &= \sum_{i=1}^r \langle f, \psi_i^\dagger\circ \varphi_\ell[s_i]\rangle     \\
    &=  \langle  f, \sum_{i=1}^r  \psi_i^\dagger\circ \varphi_\ell[s_i]\rangle.
  \end{align*}
  Set $g_\ell \eqdef -\sum_{i=1}^r  \psi_i^\dagger\circ \varphi_\ell[s_i]$. By the previous computation, we have $\varphi_\ell[u] + g_\ell \in \C^\perp$. Moreover for every $i\in [r]$,  $\rowsupp( \psi_i^\dagger \circ \varphi_\ell[s_i]) \subseteq \rowsupp(\varphi_\ell[s_i]) \subseteq \vect{s_i}{\Fq}$ (by Lemma~\ref{lem:prop_varphi}~\ref{it:rowsupp} and Lemma~\ref{lem:supports_composition}), hence
  \[\rowsupp(\sum_{i=1}^r  \psi_i^\dagger\circ \varphi_\ell[s_i] )\subseteq \sum_{i=1}^r  \vect{s_i}{\Fq}=S.\]

  \medskip
  Now, it remains to prove that~\ref{it:dual} implies~\ref{it:dependency}.
  Denote $s_1, \dots, s_r$ a basis of $S$. Let $\ell \in [m]$. By Lemma~\ref{lemme_decompo}, there exists $(\lambda^j_i(\ell))_{i\in [r], j\in [m]}\in \Fq^{m\times r}$ such that $g_{\ell} = \sum_{j=1}^m \varphi_j[\sum_{i=1}^r\lambda_i^j(\ell) s_i].$
  This implies that for all $f\in \C$,
  \begin{equation}\label{eq_iii_ii}
    \langle f, \varphi_{\ell}[u] \rangle = \langle f , -g_{\ell}\rangle = \sum_{i=1}^r \sum_{j=1}^m -\lambda_i^j(\ell)\langle f ,   \varphi_j[s_i]\rangle.
  \end{equation}
  By Lemma~\ref{lem:prop_varphi}~\ref{it:trace},  $f(u)= \sum_{\ell=1}^m \beta_V(f(u),v_{\ell})v_{\ell} = \sum_{\ell=1}^m \langle f, \varphi_{\ell} [u]\rangle v_{\ell}$. By Equation~\eqref{eq_iii_ii}  we have
  \begin{align*}
    f(u) & = \sum_{\ell=1}^m \Big(\sum_{i=1}^r \sum_{j=1}^m -\lambda_i^j(\ell)\langle f ,   \varphi_j[s_i]\rangle\Big) v_{\ell} \\
    & = \sum_{i=1}^r \sum_{\ell=1}^m \Big( \sum_{j=1}^m -\lambda_i^j(\ell) \beta_V(f(s_i),v_j)\Big) v_{\ell}               \\ &= \sum_{i=1}^r \psi_i(f(s_i)),
  \end{align*} where we define $\psi_i\in \End(V)$ by $\psi_i:x\mapsto \sum_{\ell=1}^m \Big( \sum_{j=1}^m -\lambda_i^j(\ell) \beta_V(x,v_j)\Big) v_{\ell}. $
\end{proof}

It will be useful to also consider the translation in the matrix setting of the characterization in Proposition~\ref{prop:equiv_localite_dual}. Given a code $\C \subseteq \Fq^{m \times n}$, the following are equivalent:
\begin{enumerate}
\item a vector $u \in \Fq^n$ has locality $r$ in $\C$
\item there exist $P_1, \dots, P_r \in \Fq^{m \times m}$ and $s_1, \dots, s_r \in \Fq^n$ such that $u \notin \vect{s_1, \dots, s_r}{\Fq}$, and
  \[
  C u = \sum_{i=1}^r P_i C s_i, \qquad \forall C \in \C.
  \]
\item there exist $G_1, \dots, G_m \in \Fq^{m \times n}$, such that $\dim \sum_{\ell = 1}^m \rowsupp(G_\ell) \le r$ and
  \[
  E_\ell(u) + G_\ell \in \C^\perp, \qquad \forall  \ell \in [m],
  \]
  where $E_\ell(u)$ is the matrix with zeroes everywhere except for the $\ell$-th row which is $u$.
\end{enumerate}

\subsection{Constructing new LRCs from others}

It is clear that, if $\C, \C' \subseteq \Hom(U, V)$ are two $r$-rank LRCs, then $\C \cap \C'$ is also $r$-rank locally recoverable. Let us analyse other classical constructions of LRCs from others.

\begin{definition}
  Let $f^{(1)} \in \Hom(U^{(1)},V^{(1)})$ and $f^{(2)} \in \Hom(U^{(2)},V^{(2)})$. We define the following operations.
  \begin{itemize}
  \item \emph{direct sum:}
    \[
    \begin{array}{rrcl}
      f^{(1)}\oplus f^{(2)}:  & U^{(1)}\times U^{(2)} & \rightarrow & V^{(1)}\times V^{(2)} \\
      & u=(u^{(1)}, u^{(2)}) & \mapsto & \left(f^{(1)}(u^{(1)}),f^{(2)}(u^{(2)})\right).    
    \end{array}
    \]
  \item \emph{pairing:} if $U^{(1)}=U^{(2)}=U$,
    \[
    \begin{array}{rrcl}
      (f^{(1)}, f^{(2)}):  & U & \rightarrow & V^{(1)}\times V^{(2)} \\
      & u & \mapsto & \left(f^{(1)}(u),f^{(2)}(u)\right).    
    \end{array}
    \]
  \item \emph{copairing:} if $V^{(1)}=V^{(2)}=V$,
    \[
    \begin{array}{rrcl}
      f^{(1)} + f^{(2)}:  & U^{(1)}\times U^{(2)} & \rightarrow & V\\
      & u=(u^{(1)}, u^{(2)}) & \mapsto & f^{(1)}(u^{(1)}) + f^{(2)}(u^{(2)}).    
    \end{array}
    \]
  \end{itemize}
\end{definition}

\begin{remark}
  Given $M^{(1)}$ and $M^{(2)}$ the matrices of $f^{(1)}$ and $f^{(2)}$ in some bases, then $f^{(1)}\oplus f^{(2)}$, $(f^{(1)}, f^{(2)})$ and $f^{(1)} + f^{(2)}$ are respectively represented by the block matrices
  \[ \begin{pmatrix}
    M^{(1)} & 0 \\ 0 & M^{(2)}
  \end{pmatrix}, \quad 
  \begin{pmatrix}
    M^{(1)} \\ M^{(2)}
  \end{pmatrix} \text{ and }
  \begin{pmatrix}
    M^{(1)} & M^{(2)}
  \end{pmatrix}. \]
\end{remark}

\begin{proposition}\label{prop:cartesian_product_codes}
  Let $\C^{(1)} \subseteq \Hom(U^{(1)},V^{(1)})$ and $\C^{(2)} \subseteq \Hom(U^{(2)},V^{(2)})$ be two rank LRCs with locality $r^{(1)}$ and $r^{(2)}$ respectively.
  \begin{enumerate}[label=(\Roman*)]
  \item\label{it:direct_sum_codes} The direct sum of the codes 
    \[ \C^{(1)} \oplus \C^{(2)} \eqdef \set{f^{(1)}\oplus f^{(2)} \mid f^{(1)} \in \C^{(1)}, \: f^{(2)} \in \C^{(2)}} \subseteq \Hom\left(U^{(1)}\times U^{(2)},V^{(1)}\times V^{(2)}\right)\]
    is a rank LRC of locality $\max\{ r^{(1)}, r^{(2)} \}$.
  \item\label{it:copairing_codes} If $V^{(1)}=V^{(2)}=V$, the copairing of the codes 
    \[ \C^{(1)} + \C^{(2)} \eqdef \set{f^{(1)}+ f^{(2)} \mid f^{(1)} \in \C^{(1)}, \: f^{(2)} \in \C^{(2)}} \subseteq \Hom\left(U^{(1)}\times U^{(2)},V\right)\]
    is a rank LRC of locality $r^{(1)} + r^{(2)}$.

  \end{enumerate}
\end{proposition}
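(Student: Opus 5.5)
We work throughout with characterization \ref{it:dependency} of Proposition~\ref{prop:equiv_localite_dual}. It says $u$ has locality $r$ with helper space $S$ (where $u\notin S$) if and only if, for a basis $(s_1,\dots,s_r)$ of $S$, the value $f(u)$ is an $\End(V)$-linear combination of the $f(s_i)$, uniformly in $f\in\C$. This works as long as $\dim S\le r$, since a helper space of smaller dimension is allowed. The plan is to build a helper space for an arbitrary nonzero $u=(u^{(1)},u^{(2)})$ directly from helper spaces of $u^{(1)}$ and $u^{(2)}$ in the component codes.

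\textbf{Direct sum.} Let $f=f^{(1)}\oplus f^{(2)}$ and $u=(u^{(1)},u^{(2)})\neq 0$. Then $f(u)=(f^{(1)}(u^{(1)}),f^{(2)}(u^{(2)}))$. We split into cases.

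If both components are nonzero, a helper space of dimension $r^{(1)}+r^{(2)}$ is immediate but too large, so we must be smarter. Take helper spaces $S^{(1)}$ and $S^{(2)}$ with bases $(s^{(1)}_i)$ and $(s^{(2)}_i)$, padded so both have length $r\eqdef\max\{r^{(1)},r^{(2)}\}$ (padding with zero coefficients, or with a vector whose $\psi$ is $0$). Set $t_i\eqdef(s^{(1)}_i,s^{(2)}_i)$. Then $f(t_i)=(f^{(1)}(s^{(1)}_i),f^{(2)}(s^{(2)}_i))$, and the block-diagonal endomorphism $\psi_i^{(1)}\oplus\psi_i^{(2)}$ of $V^{(1)}\times V^{(2)}$ gives $f(u)=\sum_i(\psi^{(1)}_i\oplus\psi^{(2)}_i)(f(t_i))$. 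This is exactly condition \ref{it:dependency} for $T=\langle t_1,\dots,t_r\rangle$.

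If one component is zero, say $u^{(2)}=0$, we take $t_i=(s^{(1)}_i,0)$.

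\textbf{Main obstacle.} The delicate points are ensuring that $u\notin T$ and that the $t_i$ span a space of dimension at most $r$. The dimension bound is automatic: $T$ is spanned by $r$ vectors. The condition $u\notin T$ is the real issue. Suppose $u=\sum_i a_i t_i$. Then $u^{(1)}=\sum_i a_i s^{(1)}_i\in S^{(1)}$, which contradicts $u^{(1)}\notin S^{(1)}$ when $u^{(1)}\neq0$. If $u^{(1)}=0$, the same argument works with the second component. When padding, we use zero vectors for the padded entries, so the $t_i$ need not be independent, and the relation then concerns only the genuine basis vectors.

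One degenerate case remains: $u^{(j)}$ may itself have locality $0$, meaning $f^{(j)}(u^{(j)})=0$ for all $f^{(j)}$. In that case we take an empty helper space for that component, which is covered by the same formula.

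\textbf{Copairing.} Let $f=f^{(1)}+f^{(2)}$, so that $f(u)=f^{(1)}(u^{(1)})+f^{(2)}(u^{(2)})$. Take $T=(S^{(1)}\times 0)\oplus(0\times S^{(2)})$, of dimension at most $r^{(1)}+r^{(2)}$. Then $f((s^{(1)}_i,0))=f^{(1)}(s^{(1)}_i)$ and $f((0,s^{(2)}_i))=f^{(2)}(s^{(2)}_i)$, so
\[
f(u)=\sum_i\psi^{(1)}_i\bigl(f((s^{(1)}_i,0))\bigr)+\sum_i\psi^{(2)}_i\bigl(f((0,s^{(2)}_i))\bigr).
\]
Moreover $u\notin T$ because $T=S^{(1)}\times S^{(2)}$, and $u^{(j)}\notin S^{(j)}$ for some $j$ with $u^{(j)}\neq 0$. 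For a component $j$ with $u^{(j)}=0$, we use $S^{(j)}=0$ instead.

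We then invoke Proposition~\ref{prop:equiv_localite_dual} again to conclude that $\dim\pi_T=\dim\pi_{T\oplus\langle u\rangle}$ in both cases.
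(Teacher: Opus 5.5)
Your proposal is correct and follows essentially the same route as the paper: it uses characterization~\ref{it:dependency} of Proposition~\ref{prop:equiv_localite_dual}, diagonal helper vectors $(s^{(1)}_i,s^{(2)}_i)$ with zero-padding and block-diagonal maps $\psi^{(1)}_i\oplus\psi^{(2)}_i$ for the direct sum, and the helper space $S^{(1)}\times S^{(2)}$ (with $S^{(j)}=\{0\}$ when $u^{(j)}=0$) for the copairing. Your explicit verification that $u\notin T$ is more careful than the paper's bare assertion. The only thing to tidy is the case where the $t_i$ are not independent: Proposition~\ref{prop:equiv_localite_dual} is stated for bases, so you should note that the dependency relation still forces the evaluation map $f\mapsto (f(t_1),\dots,f(t_r))$ to be injective on $\pi_{T\oplus\langle u\rangle}(\C)$, which gives $\dim\pi_T(\C)=\dim\pi_{T\oplus\langle u\rangle}(\C)$.
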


\begin{proof}
  Let us use the characterization~\ref{it:dependency} of Proposition~\ref{prop:equiv_localite_dual} to prove the statement. Let $j \in \set{1,2}$. For every nonzero $u^{(j)} \in U^{(j)}$, there exists $S^{(j)} = \vect{s_1^{(j)},\dots,s_{r^{(j)}}^{(j)}}{\Fq} \subseteq U^{(j)}$ not containing $u^{(j)}$ and $\psi^{(j)}_1,\dots,\psi_{r^{(j)}}^{(j)} \in \End(V^{(j)})$ such that for every $f^{(j)} \in \C^{(j)}$, we have
  \begin{equation}\label{eq: prop4.10_for_product}
  		f^{(j)}(u^{(j)})=\sum_{i=1}^{r^{(j)}} \psi_i^{(j)} f^{(j)}(s_i^{(j)}).
  \end{equation}
  We assume without loss of generality that $r^{(1)} \leq r^{(2)}$. If $r^{(1)} < r^{(2)}$, we set $s_i^{(1)}\eqdef 0$ and $\psi_i^{(1)}\eqdef \id_{V^{(1)}}$ for $r^{(1)}<i \leq r^{(2)}$.

  Let us prove~\ref{it:direct_sum_codes}.
  Let us take a nonzero $u=\left(u^{(1)},u^{(2)}\right) \in U^{(1)}\times U^{(2)}$. At least one of the $u^{(j)}$ is nonzero. If both are nonzero, then
  \[\begin{aligned}
  f^{(1)} \oplus f^{(2)}(u)= \left(f^{(1)}(u^{(1)}),f^{(2)}(u^{(2)})\right)&=
  \left(\sum_{i=1}^{r^{(1)}} \psi_i^{(1)} f^{(1)}(s_i^{(1)}), \sum_{i=1}^{r^{(2)}} \psi_i^{(2)} f^{(2)}(s_i^{(2)})\right)\\
  &=  \sum_{i=1}^{r^{(2)}}\psi_i  \left(f^{(1)} \oplus f^{(2)} \right)(s_i)
  \end{aligned}\]
  where $s_i=\left(s_i^{(1)},s_i^{(2)}\right)$ and $\psi_i=\psi_i^{(1)}\oplus \psi_i^{(2)}$. Then the set $S=\vect{s_1,\dots,s_{r^{(2)}}}{\Fq}$ does not contain $u$ and is a helper space for $u$ in $\C^{(1)} \oplus \C^{(2)}$. If $u^{(2)}=0$ (resp. $u^{(1)}=0$), then one can easily check that $S^{(1)}\times \set{0}$ (resp. $\set{0} \times S^{(2)}$) is a helper space for $u=(u^{(1)},0)$ (resp. $u=(0,u^{(2)})$) in $\C^{(1)} \oplus \C^{(2)}$. This completes the proof for~\ref{it:direct_sum_codes}.

  For the case~\ref{it:copairing_codes}, take $u=(u^{(1)}, u^{(2)})$ be a nonzero vector of $U^{(1)} \times U^{(2)}$ such that $u^{(1)} \neq 0$ and $u^{(2)} \neq 0$. The space $S\subseteq U^{(1)} \times U^{(2)}$ generated by the vectors $(s_1^{(1)}, 0), \dots, (s_{r^{(1)}}^{(1)}, 0), (0, s_1^{(2)}), \ldots, (0, s^{(2)}_{r^{(2)}})$ has dimension $r^{(1)}+r^{(2)}$ and does not contain $u$. Moreover, for every $ f^{(1)} \in \C^{(1)}$ and $f^{(2)} \in \C^{(2)}$, we have
  \begin{align*}
  	(f^{(1)}+f^{(2)})(u^{(1)}, u^{(2)})&=f^{(1)}(u^{(1)})+f^{(2)}(u^{(2)})\\
  															&=\sum_{i=1}^{r^{(1)}} \psi_i^{(1)} f^{(1)}(s_i^{(1)})+ \sum_{i=1}^{r^{(2)}} \psi_i^{(2)} f^{(2)}(s_i^{(2)})\\
  															&=\sum_{i=1}^{r^{(1)}} \psi_i^{(1)}(( f^{(1)}+f^{(2)})(s_i^{(1)},0))+ \sum_{i=1}^{r^{(2)}} \psi_i^{(2)}((f^{(1)}+ f^{(2)})(0, s_i^{(2)})).
  \end{align*}
  So, we can conclude that $S$ is a helper space for $u$.
  In conclusion, if $u^{(1)}=0$ and $u^{(2)}\neq 0$, we may take $S=\{0\} \times S^{(2)}$, where $S^{(2)}$ is a helper space for $u^{(2)}$ in $\C^{(2)}$. This is a helper space for $u$ in the copairing. The same holds for $u^{(2)}=0$ and $u^{(1)}\neq 0$.
\end{proof}

\begin{remark}
  If $U^{(1)}=U^{(2)}=U$, the pairing of the two LRCs $\C^{(1)}$ and $\C^{(2)}$, defined by
    \[ \left(\C^{(1)}, \C^{(2)}\right) \eqdef \set{\left(f^{(1)},f^{(2)}\right) \mid f^{(1)} \in \C^{(1)}, \: f^{(2)} \in \C^{(2)}} \subseteq \Hom\left(U,V^{(1)}\times V^{(2)}\right)\]
  is not necessarily a LRC. It depends on the way the helper spaces $S^{(1)}$ and $S^{(2)}$ for a same $u$ in $\C^{(1)}$ and $\C^{(2)}$ interact. If $u$ does not lie in $S\eqdef S^{(1)}+S^{(2)}$, then similar computations as for the proof of \ref{it:copairing_codes} ensure that $S$ is a helper space for $u$ in $\left(\C^{(1)}, \C^{(2)}\right)$. In particular, this holds when $\C^{(1)}=\C^{(2)}=\C$: if $\C$ is an $r$-rank LRC, then so is $(\C,\C)$.

  This situation is not surprising when compared with the situation in the Hamming metric. The pairing $\left(\C^{(1)}, \C^{(2)}\right)$ consists in \textquote{stacking} codewords of $\C^{(1)}$ and $\C^{(2)}$, which corresponds to \emph{interleaving}. In the Hamming metric, the interleaving of two different LRCs is not necessarily locally recoverable.
\end{remark}

\subsection{Comparison with the definition of Kadhe~\emph{et~al.}}\label{subsec:comparison}

Kadhe~\emph{et~al.} proposed in~\cite[Definition~2]{Kadhe_et_al19} another definition of locality in the rank metric. They consider that a code $\C \subseteq \Fq^{m \times n}$ is $r$-locally recoverable if every column of a matrix-codeword $C \in \C$ can be recovered by accessing at most $r$ other columns. Considering this code in $\Hom(\Fq^n,\Fq^m)$, this notion depends on the choice of bases for $\Fq^n$ and $\Fq^m$.

Our new definition of locality is coordinate-free and allows efficient recovery of \emph{any} element in the image of the matrix-codeword, and not only columns of the codeword\footnote{Said differently, a code is locally recoverable if we can recover any element of rank $1$ in the support of the codeword, given access to at most $r$ independent elements in its support, as it is the case for Hamming-LRCs.}. Both definitions thus differ in two ways: our definition demands that more elements should be recoverable (not only columns of the codewords), but allows much diverse helper spaces (neither only subsets of columns of the codewords).

The dependency  on the choice of bases in the definition of locality in~\cite{Kadhe_et_al19} induces that two equivalent codes may not share the same locality. Indeed, consider the code
\[
\C = \left\{ \begin{bmatrix} a & 0 & b & 0 \\ 0 & a & 0 & b \end{bmatrix} \mid (a,b) \in \Fq \right\}\,.
\]
It is easy to check that $\C$ has locality $1$ according to the definition of locality in~\cite{Kadhe_et_al19}. However, $\C$ is equivalent to
\begin{equation}\label{eq:ex_code}
  \C' = \left\{ \begin{bmatrix} a & 0 & a + b & 0 \\ 0 & a & 0 & b \end{bmatrix} \mid (a,b) \in \Fq \right\}
\end{equation}
whose third column cannot be recovered with only one other column. Hence $\C'$ has locality at least $2$ (actually, exactly $2$) according to~\cite{Kadhe_et_al19}.

On the contrary, according to our definition and to Proposition~\ref{prop:equiv_localite_dual}, locality is invariant under left-right-equivalence of codes.

\begin{lemma}\label{lem:localite-et-equivalence}%
  Let $\C \in \Hom(U, V)$ and $\C' \in \Hom(U', V')$ be rank-metric codes such that $\C = \beta \circ \C' \circ \alpha$ with $\alpha : U \isomto U'$ and $\beta : V' \isomto V$. Then, the codes $\C$ and $\C'$ share the same locality.
\end{lemma}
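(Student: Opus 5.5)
The plan is to transport helper spaces along $\alpha$, using characterization~\ref{it:dependency} of Proposition~\ref{prop:equiv_localite_dual}, which avoids any reference to inner products or adjoints. We show that if $\C'$ has locality $r$, then so does $\C$. By symmetry the converse follows, since $\C' = \beta^{-1}\circ \C \circ \alpha^{-1}$ with $\alpha^{-1}$ and $\beta^{-1}$ again isomorphisms.

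Fix a nonzero $u \in U$ and set $u' \eqdef \alpha(u) \neq 0$, which holds because $\alpha$ is injective. By hypothesis, $u'$ has a helper space $S' \subseteq U'$ of dimension $r' \le r$ with $u' \notin S'$. Pick a basis $s'_1,\dots,s'_{r'}$ of $S'$. Characterization~\ref{it:dependency} then gives $\psi'_1,\dots,\psi'_{r'} \in \End(V')$ such that $f'(u') = \sum_i \psi'_i(f'(s'_i))$ for all $f' \in \C'$. Define $S \eqdef \alpha^{-1}(S')$, with basis $s_i \eqdef \alpha^{-1}(s'_i)$. It has the same dimension $r'$, and $u \notin S$ because $\alpha$ is a bijection.

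Every $f \in \C$ can be written as $f = \beta\circ f'\circ\alpha$ for some $f' \in \C'$. Hence
\[
f(u) = \beta\bigl(f'(u')\bigr) = \sum_{i=1}^{r'} \beta\circ\psi'_i\bigl(f'(s'_i)\bigr) = \sum_{i=1}^{r'} (\beta\circ\psi'_i\circ\beta^{-1})\bigl(f(s_i)\bigr),
\]
using $f'(s'_i) = \beta^{-1}(f(s_i))$. Setting $\psi_i \eqdef \beta\circ\psi'_i\circ\beta^{-1} \in \End(V)$ yields condition~\ref{it:dependency} for $u$ with helper space $S$.

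One subtlety remains. Proposition~\ref{prop:equiv_localite_dual} requires condition~\ref{it:dependency} for \emph{any} basis of $S$, while the computation above uses only one basis. This is harmless, because the proof of \ref{it:dependency} $\Rightarrow$ \ref{it:eqdim} only needs one basis: a single relation of this form already forces $\ev_S$ to be injective. Alternatively, one can avoid characterization~\ref{it:dependency} altogether. The map $\pi_S(\C) \to \pi_{S'}(\C')$ given by $f_{|S} \mapsto \beta^{-1}\circ f_{|S}\circ \alpha^{-1}_{|S'}$ is a linear isomorphism, and the same holds for $S\oplus\vect{u}{\Fq}$. So the equality of dimensions transfers directly, which is probably the cleanest route. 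I expect no real obstacle beyond this bookkeeping.
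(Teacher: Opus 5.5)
Your proof is correct and is essentially the paper's argument. The paper also uses characterization~\ref{it:dependency} of Proposition~\ref{prop:equiv_localite_dual}, transports the helper space along $\alpha$ and conjugates the endomorphisms by $\beta$; it simply runs the implication in the mirror direction (from $\C$ to $\C'$, with $S'=\alpha(S)$ and $\beta^{-1}\circ\psi_i\circ\beta$), and your remarks on the single-basis subtlety and on direct dimension transfer are valid details the paper leaves implicit.
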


\begin{proof}
  Assume that $\C$ has locality $r$, and consider $u' \in U'$. Since $u \eqdef \alpha^{-1}(u')$ has locality $r$, there exist a helper space $S = \vect{s_1, \dots, s_r}{\Fq} \subseteq U$ not containing $u$, and homomorphisms $\psi_1, \dots, \psi_r \in \End(V)$, such that for all $f \in \C$, $f(u) = \sum_{i=1}^r \psi_i(f(s_i))$.

  We now claim that $S' \eqdef \alpha(S)$ is a helper space for $u'$ in $\C'$. Indeed, we have $u' \notin S'$ and, for any $f' \in \C'$,
  \begin{align*}
    f'(u') & = \beta^{-1}\Big( \underbrace{(\beta \circ f' \circ \alpha)}_{\in \C}(\alpha^{-1}(u'))\Big) \\
    & = \beta^{-1}\Big(\sum_{i=1}^r \psi_i((\beta \circ f' \circ \alpha)(s_i)) \Big) \\
    & = \sum_{i=1}^r \underbrace{(\beta^{-1} \circ \psi_i \circ \beta)}_{\in \End(V')}(f'(\alpha(s_i)))\,.
  \end{align*}
\end{proof} 

Lemma~\ref{lem:localite-et-equivalence} allows us to enlarge Example~\ref{ex:exemple-tres-simple} to a broader family of codes.

\begin{example}\label{ex:famille-localite-1}%
  Let $m, n \ge 1$ and fix $A \in \GL_m(\Fq)$. Then the matrix code
  \[
  \C_A = \{ ( M | AM ) \mid M \in \Fq^{m \times n} \} \subseteq \Fq^{m \times 2n}
  \]
  has locality $1$. Indeed, $\C_A$ is equivalent to the code $\C$ defined in Example~\ref{ex:exemple-tres-simple}, through the $\Fq$-isometry of $\Fq^{m \times 2n}$ given by: 
  \[
  C = (C_1 | C_2 ) \mapsto (C_1 | A^{-1} C_2)\,.
  \]
  We will see later that some of the codes in this family are remarkable, since they achieve a Singleton-like bound, see Section~\ref{sec:singleton-like-bound}.
\end{example}

Let us finally point out two examples that illustrate that the definition of locality in~\cite{Kadhe_et_al19} is not equivalent to ours.

\begin{example}
  Let us prove that $\C'$ defined in Equation~\eqref{eq:ex_code} has locality $1$ for our definition, whereas it has locality $2$ for~\cite{Kadhe_et_al19} (from previous discussion). Since $\C'$ and $\C$ are equivalent, according to Lemma~\ref{lem:localite-et-equivalence}, we only have to prove that $\C$ (also defined above) has locality $1$.

  Consider a nonzero $u=(u_1,u_2,u_3,u_4) \in \Fq^4$. For any codeword $C = \begin{bmatrix} a & 0 & b & 0 \\ 0 & a & 0 & b \end{bmatrix} \in \C$, we have $C u = \begin{bmatrix} a u_1 + b u_3 \\ a u_2 + b u_4 \end{bmatrix}$. Now, define $v = (u_2, u_1, u_4, u_3 )$. We can easily check that:
  \[
  C u = \begin{bmatrix} 0 & 1 \\ 1 & 0 \end{bmatrix} C v\,.
  \]
  Therefore, if $u$ and $v$ are not collinear, then $C u$ can be recovered by querying only $Cv$ and applying a linear map to it, proving that $u$ has locality $1$ according to Proposition~\ref{prop:equiv_localite_dual}.

  If $u$ and $v$ are collinear, this means that $u$ can be written as $u = [ u_1, \lambda u_1, u_3, \lambda u_3]$ with $\lambda \in \Fq$ such that $\lambda^2 = 1$. But in that case,
  \[
  C u =  \begin{bmatrix} a u_1 + b u_3 \\ \lambda(a u_1 + b u_3) \end{bmatrix} =\begin{bmatrix} 1 & 0 \\ \lambda & 0 \end{bmatrix} C w
  \]
  with $w = (u_1, 0, u_3, 0) \notin \vect{u}{\Fq}$. This also means that $Cu$ can be recovered by a query of rank $1$.
\end{example}

\begin{example}
  Consider now the matrix code of dimension $3$ defined as
  \[
  \mathcal{D} = \left\{
  \begin{bmatrix}
    a & c & b & c \\
    c & b & c & a+c
  \end{bmatrix}
   \mid (a,b,c) \in \mathbb{F}^3_2
  \right\} \subseteq \mathbb{F}_2^{2 \times 4}\,.
  \]
  The code $\mathcal{D}$ has locality $1$ according to~\cite{Kadhe_et_al19}: column $1$ can be recovered by column $4$ (and conversely), and column $2$ can be recovered by column $3$ (and conversely).

  However, $\mathcal{D}$ is not $1$-rank locally recoverable according to our definition, since the vector $u = (1, 1, 0, 0) \in \mathbb{F}_2^4$ does not have locality $1$. Indeed, for any $C =   \begin{bmatrix}  a & c & b & c \\ c & b & c & a+c \end{bmatrix} \in \mathcal{D}$, we have
  \[
  Cu = \begin{bmatrix} a + c \\ b + c\end{bmatrix}
    \]
    and the $15$ other evaluations of $C v$, for $v \in \mathbb{F}_2^4 \setminus \{ u \}$, are:
    \[
    \small
    \begin{aligned}
      &\begin{bmatrix} 0 \\ 0 \end{bmatrix},
      \begin{bmatrix} a \\ c \end{bmatrix},
      \begin{bmatrix} c \\ b \end{bmatrix},
      \begin{bmatrix} b \\ c \end{bmatrix},
      \begin{bmatrix} c \\ a+c \end{bmatrix},
      \begin{bmatrix} a+b \\ 0 \end{bmatrix},
      \begin{bmatrix} a+c \\ a \end{bmatrix},
      \begin{bmatrix} b+c \\ b+c \end{bmatrix},
      \begin{bmatrix} 0 \\ a+b+c \end{bmatrix},
      \\
      &\begin{bmatrix} b+c \\ a+b+c \end{bmatrix},
      \begin{bmatrix} a+b+c \\ b \end{bmatrix},
      \begin{bmatrix} a \\ a+b \end{bmatrix},
      \begin{bmatrix} a+b+c \\ a+c \end{bmatrix},
      \begin{bmatrix} b \\ a+b \end{bmatrix},
      \begin{bmatrix} a+b \\ a+b+c \end{bmatrix}.
    \end{aligned}
    \]
    One can check that none of them allows to recover $Cu$.
    
    Notice that, by Proposition~\ref{prop:equiv_localite_dual}~\ref{it:dependency}, the space generated by $u$ is a helper space for $v=(1,0,1,0)$ (since $Cv=\psi(Cu)$ with $\psi(x,y)=(x+y,0)$) and $v=(0,1,1,0)$ (with $\psi(x,y)=(y,y)$). Indeed, the morphism $\psi$ does not need to be an automorphism, which induces this asymmetry that cannot happen for locality $1$ in the Hamming metric (for nondegenerate codes).
\end{example}

\subsection{Relations with the Hamming metric}\label{subsec:relation_Hamming}

It is natural to question whether LRCs in the Hamming metric can help to design LRCs in the rank metric. One way to map Hamming-metric codes to rank-metric codes is to use the following \enquote{diagonal} construction. Assume $n \geq 2$, and let $\C_H \subseteq \Fq^n$ be a Hamming-metric code of dimension $k$ and minimum distance $d$. Let also $U$ be a vector space of dimension $n$ over $\Fq$, and denote by $(u_1, \dots, u_n)$ a basis of $U$. We then define
$$\C = \mathrm{Diag}(\C_H) \eqdef\set{f_\mathbf{c}  \mid \mathbf{c} \in \C_H}$$
where $f_\mathbf{c}(u_i) = c_i u_i$ for every $i \in [n]$. Then $\C  \subseteq \Hom(U,U)$ is a rank-metric code of dimension $k$ and minimum rank distance $d$.

\begin{lemma}\label{lem:locality_Hamming_to_rank}
  If $\C_H$ is locally recoverable with locality $r$, then $\C$ is $1$-rank locally recoverable.
\end{lemma}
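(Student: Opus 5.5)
The plan is to use characterization~\ref{it:dependency} of Proposition~\ref{prop:equiv_localite_dual} with a helper space of dimension one. For a nonzero $u\in U$, we exhibit a vector $s\notin\vect{u}{\Fq}$ and an endomorphism $\psi\in\End(U)$ such that $f_{\mathbf c}(u)=\psi(f_{\mathbf c}(s))$ for all $\mathbf c\in\C_H$.

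Write $u=\sum_i a_iu_i$ and $s=\sum_i b_iu_i$. Then
\[
f_{\mathbf c}(u)=\sum_i a_ic_iu_i \qquad\text{and}\qquad f_{\mathbf c}(s)=\sum_i b_ic_iu_i .
\]
The key observation is that the query $f_{\mathbf c}(s)$ reveals $b_ic_i$ coordinatewise. Hence, whenever $\mathrm{supp}(u)\subseteq\mathrm{supp}(s)$, the diagonal map
\[
\psi(u_i)=\begin{cases}(a_i/b_i)\,u_i & \text{if } b_i\neq 0,\\ 0 & \text{otherwise}\end{cases}
\]
satisfies $\psi(f_{\mathbf c}(s))=f_{\mathbf c}(u)$. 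The whole proof therefore reduces to finding such an $s$ that is not collinear with $u$. We split into cases.

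\emph{Case 1: $\mathrm{supp}(u)\neq[n]$.} Pick $j\notin\mathrm{supp}(u)$ and set $s=u+u_j$. Its support strictly contains that of $u$, and $s$ is not collinear with $u$ because its $j$-th coordinate is $1$ while that of $u$ is $0$. The diagonal $\psi$ above works.

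\emph{Case 2: $\mathrm{supp}(u)=[n]$ and $q\ge 3$.} Since $n\ge 2$, choose $\lambda\in\Fq\setminus\{0,1\}$ and set $b_1=\lambda a_1$ and $b_i=a_i$ for $i\ge2$. Then $s$ has full support and is not collinear with $u$, and again the diagonal $\psi$ works.

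\emph{Case 3: $q=2$ and $u=\sum_iu_i$.} This is the only case where the hypothesis on $\C_H$ is needed, and I expect it to be the main point. Every $s$ with full support equals $u$, so $s$ must miss at least one coordinate. Take $s=u-u_j$ for some $j$. The query $f_{\mathbf c}(s)$ reveals $c_i$ for all $i\neq j$. By the Hamming locality of $\C_H$, there is a helper set $\mathcal R_j\subseteq[n]\setminus\{j\}$ and scalars $\lambda_i$ such that $c_j=\sum_{i\in\mathcal R_j}\lambda_ic_i$ for every $\mathbf c\in\C_H$; this is just linear dependence of the punctured coordinates. Define
\[
\psi(u_i)=u_i+\lambda_iu_j \ \text{ for } i\in\mathcal R_j,\qquad \psi(u_i)=u_i \ \text{ for other } i\neq j,\qquad \psi(u_j)=0 .
\]
Then $\psi(f_{\mathbf c}(s))=\sum_{i\neq j}c_iu_i+c_ju_j=f_{\mathbf c}(u)$. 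Moreover $s\neq 0$ because $n\ge 2$, and $s\notin\vect{u}{\Fq}$.

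In all three cases, Proposition~\ref{prop:equiv_localite_dual} shows that $\vect{s}{\Fq}$ is a helper space for $u$, so every nonzero $u$ has rank locality $1$ in $\C$. The argument in fact only uses Hamming locality of $\C_H$ for a single coordinate, and only in Case~3, which matches the fact that the statement does not depend on $r$.
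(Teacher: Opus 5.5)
Your proof is correct, and it is organized differently from the paper's.

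The paper uses one uniform construction for every nonzero $u=\sum_j\lambda_ju_j$. It picks an index $i$ such that $w_i=\sum_{j\neq i}u_j$ is not collinear with $u$; this is possible because $n\ge 2$ and the $w_i$ are pairwise non-collinear. It then takes $\vect{w_i}{\Fq}$ as helper space and sets $\psi(u_j)=\lambda_ju_j+\lambda_i\alpha_{i,j}u_i$ for $j\in\mathcal R_i$, $\psi(u_j)=\lambda_ju_j$ for the remaining $j\neq i$, and $\psi(u_i)=0$. So the Hamming relation $c_i=\sum_{j\in\mathcal R_i}\alpha_{i,j}c_j$ is invoked for every $u$, and the diagonal rescaling by the $\lambda_j$ is folded into the same $\psi$.

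Your Case~3 is precisely this construction specialised to $u=\sum_iu_i$ over $\mathbb{F}_2$, where all $\lambda_j=1$. In Cases~1 and~2 you instead choose a helper vector $s$ whose support contains that of $u$, so a purely diagonal $\psi$ suffices.

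The paper's route is shorter and avoids any case distinction. Yours yields a sharper statement: the hypothesis on $\C_H$ is almost superfluous. Indeed, $\mathrm{Diag}(\C_H)$ is $1$-rank locally recoverable for every code $\C_H$ when $q\ge 3$. When $q=2$, it suffices that one coordinate of $\C_H$ be a linear combination of the others, with no bound on the size of the helper set. Some such hypothesis is genuinely needed when $q=2$: for $\C_H=\mathbb{F}_2^n$, the vector $\sum_iu_i$ has no helper space of dimension at most $1$. This reinforces the paper's remark that its notion of locality is far from a direct translation of Hamming locality.
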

\begin{proof}
  Let $i \in [n]$. Recall that since $\C_H$ is locally recoverable with locality $r$, there exist $\mathcal{R}_i \subseteq [n]$ of size at most $r$ such that $i \notin \mathcal{R}_i$ and some $(\alpha_{i,j})_{j \in \mathcal{R}_i} \in \Fq$ such that for every codeword $\mathbf{c}=(c_1, \dots, c_n) \in \C_H$, we have
  \begin{equation}\label{eq:relation_LRC}
    c_i = \sum_{j \in \mathcal{R}_i} \alpha_{i,j} c_j.
  \end{equation}

  Now, let us consider a general nonzero $u \in \Fq^n$ and write $u = \sum_{j=1}^n \lambda_j u_j$ with $\lambda_j \in \Fq$. Since $n \geq 2$, there exists $i \in [n]$ such that $w_i = \sum_{j \neq i} u_j$ is non-collinear to $u$. Let $\psi \in \End(U)$ such that
  \[
  \psi(u_j) = \lambda_j u_j +
  \begin{cases}
    \lambda_i \alpha_{i,j}  u_i & \text{if } j \in \mathcal{R}_i, \\
    0                           & \text{otherwise,}
  \end{cases}
  \]
  for $j \neq i$ and $\psi(u_i)=0$. Then, for every codeword $\mathbf{c}=(c_1, \dots, c_n) \in \C_H$, we have
  \[
  \begin{aligned}
    \psi \circ f_\mathbf{c}(w_i)
    &= \sum_{j\neq i} \psi \circ f_\mathbf{c}(u_j)
    = \sum_{j\neq i} c_j \psi(u_j)
    = \sum_{j\neq i} \lambda_j c_j u_j + \sum_{j\in \mathcal{R}_i} \lambda_i \alpha_{i,j} c_j u_i
    \\&= \sum_{j\neq i} \lambda_j c_j u_j + \lambda_i c_i u_i = f_\mathbf{c}(u),
  \end{aligned}
  \]
  which means that the 1-dimension space spanned by $w_i$ is a helper space for $u$ in $\C$ (Proposition~\ref{prop:equiv_localite_dual}), hence $u$ has locality $1$ in $\C$.
\end{proof}

The previous lemma clearly illustrates the difference between the pre-existing notion of local recoverability in the rank metric compared to the one we develop in the present work. We can see the proposition of Kadhe~\emph{et~al.} as a straightforward translation of the notion in the Hamming metric: if $\C_H$ is an LRC, then $\C$ is a rank LRC in the sense of~\cite[Definition~2]{Kadhe_et_al19} with the \emph{same locality}, in contrast with Lemma~\ref{lem:locality_Hamming_to_rank}.

\subsection{Information spaces for rank-metric codes and locality of MRD codes}

The goal of this section is to formalize the notion of information spaces of $\Fq$-linear rank-metric codes, in analogy with the notion of information set in Hamming metric. Recall that an information set for a code in the Hamming metric is a minimal set of indices such that every codeword is uniquely determined by its values at the corresponding positions. In other words, the restriction of the code on this set of positions has the same dimension as the code itself. In rank metric, we will translate this idea to subspaces of the code domain.

\begin{definition}
  Let $\C\subseteq\Hom(U,V)$ be a nonzero code. A subspace $S\subseteq U$ is said to be an \emph{information space} for $\C$ if
  \begin{enumerate}[label=(\roman*)]
  \item \label{item:dim-info-space} $\dim_{\Fq}( \pi_S(\C)) = \dim_{\Fq}(\C)$,
  \item \label{item:small-info-space} and for all $S' \varsubsetneq S$,  we have $\dim_{\Fq}( \pi_{S'}(\C)) < \dim_{\Fq}(\C)$.
  \end{enumerate}
\end{definition}

Notice that, given a code $\C \subseteq \Hom(U,V)$, the set of spaces satisfying~\ref{item:dim-info-space} is not empty (it contains $U$). So it admits an element of minimal dimension. Such an element is an information space of the code $\C$.

It is worth mentioning that, when dealing with the tensor representation of $k$-dimensional matrix codes, the vector space $\Fq^k$ is sometimes called the \emph{information space} (e.g., see~\cite[\S4]{Neri_tensor}). Here, the terminology is different: an information space for a $k$-dimensional code is a subspace of $U$ whose dimension (over $\Fq$) is at least $\frac{k}{m}$, where $m=\dim_{\Fq}(V)$.

The notion of information space is tightly related to the one of locality. If $S$ is (or contains) an information space for $\C \in \Hom(U,V)$ then $S$ is a helper space for any $u \notin S$. We thus use this notion to investigate the locality of MRD (Maximum Rank Distance) codes. We recall that MRD codes are the ones achieving the \emph{Singleton bound}, reformulated for homomorphisms below.

\begin{theorem}\cite[Theorem~3.5]{Gor21}
  \label{thm:morphisme_Singleton}
  Let $\C \subseteq \Hom(U,V)$ of dimension $k$  and minimum rank distance $d$. Then
  \[
  k \leq \max\{m, n\} (\min\{m, n\}-d+1).
  \]
\end{theorem}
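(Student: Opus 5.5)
The plan is to deduce the bound from Proposition~\ref{prop:equiv_neri} together with dimension counting on punctured codes. Using the adjoint isometry $f \mapsto f^\dagger$ of Remark~\ref{rk:isometries}, which preserves dimension and rank, I may swap the roles of $U$ and $V$. So I assume without loss of generality that $n \le m$, and the target becomes $k \le m(n-d+1)$. The case $d=1$ is trivial, since $k \le \dim_{\Fq}\Hom(U,V) = mn$. Hence I assume $d\ge 2$.

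First I fix any subspace $S \subseteq U$ of dimension $s = n-d+1$ and let $g=\iota_S$ be the canonical inclusion. Proposition~\ref{prop:equiv_neri}, direction (1)$\Rightarrow$(2), gives $\dim_{\Fq}\pi_S(\C) = k$. The puncturing $\pi_S(\C)$ is a subspace of $\Hom(S,V)$, which has dimension $ms = m(n-d+1)$. Therefore
\[
k = \dim_{\Fq}\pi_S(\C) \le m(n-d+1) = \max\{m,n\}(\min\{m,n\}-d+1),
\]
which is the claim. If one wants to avoid citing Proposition~\ref{prop:equiv_neri}, the injectivity of $\pi_S$ on $\C$ can be checked directly. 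If $f \in \C$ satisfies $f_{|S}=0$, then $S \subseteq \ker f$, so $\rk f \le n - s = d-1 < d$, and hence $f=0$.

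I expect the only real obstacle to be the reduction to $n\le m$. When $n > m$, puncturing the domain gives only $k \le m(n-d+1)$, which is weaker than the required $n(m-d+1)$. So I must work on the codomain side instead. I will do this through the adjoint code $\C^\dagger = \{f^\dagger \mid f\in\C\}\subseteq \Hom(V,U)$. I will check that $f\mapsto f^\dagger$ is $\Fq$-linear and injective, and that it preserves rank by Theorem~\ref{thm:properties_adjoint}. Then $\C^\dagger$ has the same $k$ and $d$, and the argument above, applied with domain of dimension $m$, finishes the proof. A direct alternative is to compose with a surjection $V\to V/W$ for some $W$ of dimension $d-1$. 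A codeword killed by this map has image inside $W$, so its rank is at most $d-1$ and it must be $0$. This embeds $\C$ into $\Hom(U, V/W)$, which has dimension $n(m-d+1)$.
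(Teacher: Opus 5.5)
Your proof is correct. The paper gives no proof of its own and only cites \cite[Theorem~3.5]{Gor21}, and your argument is the standard one for this bound: reduce to $n \le m$ by transposition (your adjoint map $f \mapsto f^\dagger$, which is linear, injective and rank-preserving), then note that restricting to a subspace of codimension $d-1$ is injective on $\C$, since a codeword vanishing there has rank at most $d-1$; your direct check of this injectivity and your quotient variant $V \to V/W$ are both sound and keep the argument self-contained.
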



For MRD codes, Proposition~\ref{prop:equiv_neri} translates into the following result.

\begin{proposition} \label{infospace_dist}
  Let $\C \subseteq \Hom(U,V)$ be an MRD code of dimension $k$ and minimum rank distance $d\geqslant 2$. Then, every $S\subseteq U$ of dimension $n-\min(m,n)+\frac{k}{\max(m,n)}$ contains an information space for $\C$.
\end{proposition}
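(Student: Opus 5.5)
The plan is to reduce the statement to Proposition~\ref{prop:equiv_neri}, using the fact that $\C$ meets the Singleton bound of Theorem~\ref{thm:morphisme_Singleton} with equality.

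First, since $\C$ is MRD, we have $k = \max(m,n)(\min(m,n)-d+1)$. Solving for $d$ gives
\[
\min(m,n) - d + 1 = \frac{k}{\max(m,n)},
\]
and therefore
\[
n - \min(m,n) + \frac{k}{\max(m,n)} = n - d + 1.
\]
So the statement amounts to this: every subspace $S \subseteq U$ of dimension $s = n-d+1$ contains an information space for $\C$. Note also that $d \ge 2$ and $d \le \min(m,n)$ (a nonzero rank cannot exceed $\min(m,n)$), so $d$ lies in the range where Proposition~\ref{prop:equiv_neri} applies.

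Next, fix $S \subseteq U$ with $\dim_{\Fq} S = n-d+1$. Apply Proposition~\ref{prop:equiv_neri} to the canonical injection $\iota_S \in \Hom(S,U)$, i.e.\ to $\pi_S$ as in Notation~\ref{not:inclusion}. Since $\dr(\C) \ge d$ and $s \ge n-d+1$, it gives $\dim_{\Fq} \pi_S(\C) = k$. Hence $S$ satisfies condition~\ref{item:dim-info-space} of the definition of an information space.

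Finally, extract a minimal element. Consider the family of subspaces $S' \subseteq S$ with $\dim_{\Fq}\pi_{S'}(\C) = k$. It is nonempty, since it contains $S$, and all its members are finite-dimensional. Choose $S'$ of minimal dimension in this family. Then every proper subspace $S'' \varsubsetneq S'$ has strictly smaller dimension, so by minimality $\dim_{\Fq}\pi_{S''}(\C) < k$. (It can never exceed $k$, because $\pi_{S''}(\C)$ is a linear image of $\C$.) Thus $S'$ satisfies~\ref{item:small-info-space} and is an information space contained in $S$.

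The only mildly delicate point is the first step. We must make sure the quantity $k/\max(m,n)$ is read through the MRD equality, so that the dimension in the statement really equals $n-d+1$. We must also check the hypothesis $2 \le d \le \min(m,n)$ needed to invoke Proposition~\ref{prop:equiv_neri}. Everything else is direct.
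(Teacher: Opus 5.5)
Your proof is correct and is essentially the paper's argument. Both use the MRD equality in the Singleton bound to show that the prescribed dimension equals $n-d+1$, then invoke Proposition~\ref{prop:equiv_neri} to see that $\pi_S$ preserves the dimension $k$, and finally extract a minimal subspace. The differences are cosmetic: the paper handles the cases $m\geq n$ and $n\geq m$ separately instead of using your single identity $n-\min(m,n)+k/\max(m,n)=n-d+1$, and it leaves implicit the minimal-element extraction and the check $2\le d\le \min(m,n)$, which you spell out.
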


\begin{proof}
  First notice that the dimension $k$ of an MRD code $\C$ is divisible by $\max\{m, n\}$.
  
  In the case where $m\geqslant n$, the Singleton bound gives $\frac{k}{m}=n-d+1$. Let $S\subseteq U$ of dimension $\frac{k}{m}$. By Proposition~\ref{prop:equiv_neri}, $\dim_{\Fq}(\pi_S(\C)) = \dim_{\Fq}(\C)$, and $S$ contains an information space for $\C$.
  
  In the other case ($n \geqslant m$), the Singleton bound becomes $\frac{k}{n}=m-d+1$. Let  $s=n-m+\frac{k}{n}$ and $S\subseteq U$ of dimension $s$. Since $s \geqslant n-d+1$, by Proposition~\ref{prop:equiv_neri},  $\dim_{\Fq}(\pi_S(\C)) = \dim_{\Fq}(\C)$, and $S$ contains an information space for $\C$.
\end{proof}

As a consequence, for any MRD code $\C \subseteq \Hom(U, V)$ with $m \ge n$, every subspace $S \subseteq U$ of dimension $\frac{k}{m}$ is an information space for $\C$.

\begin{corollary}
  Let $\C\subseteq \Hom(U,V)$ of dimension $k$.  If $m\geqslant n$ and if $\C$ is MRD, then $\C$ is $\frac{k}{m}$-rank locally recoverable. Besides, for every $u\in U$,  every $S\subseteq U$ of dimension $\frac{k}{m}$ such that $u\notin S$ is a helper space for $u$.
\end{corollary}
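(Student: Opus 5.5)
The plan is to reduce the corollary directly to Proposition~\ref{infospace_dist} together with Proposition~\ref{prop:equiv_neri}, since for $m \ge n$ the relevant dimension $n-\min(m,n)+\frac{k}{\max(m,n)}$ collapses to $\frac{k}{m}$.

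First, I will fix $u \in U\setminus\{0\}$ and an arbitrary subspace $S\subseteq U$ of dimension $s=\frac{k}{m}$ with $u\notin S$. The Singleton bound (Theorem~\ref{thm:morphisme_Singleton}) with $m\ge n$ and $\C$ MRD gives $\frac{k}{m}=n-d+1$, so $s = n-d+1$; in particular $s\le n-1$ when $d\ge 2$, so such an $S$ avoiding $u$ exists. (If $d=1$, then $k=mn$ and $\C=\Hom(U,V)$; this degenerate case needs a separate remark, since then $s=n$ and no proper helper space of dimension $n$ exists. I would either exclude it, as Proposition~\ref{infospace_dist} does by assuming $d\ge 2$, or note that the statement is then vacuous or false and add that hypothesis.) Assuming $d\ge2$, Proposition~\ref{prop:equiv_neri} applied to the inclusion $\iota_S$ gives $\dim\pi_S(\C)=k$. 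Applied again to $S\oplus\vect{u}{\Fq}$, which has dimension $s+1\ge n-d+1$, it gives $\dim\pi_{S\oplus\vect{u}{\Fq}}(\C)=k$.

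These two dimensions coincide, so by definition $S$ is a helper space for $u$ and $u$ has locality $\frac{k}{m}$. Since $u$ was arbitrary, $\C$ is $\frac{k}{m}$-rank locally recoverable, and the second claim is exactly what was shown for arbitrary $S$.

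The only delicate point is the edge case $d=1$, together with checking that $\frac{k}{m}$ is an integer in $[n]$. Integrality follows from the MRD equality. The upper bound $\frac{k}{m}\le n$ holds, and a dimension strictly below $n$ requires $d\ge2$, which I would make explicit.
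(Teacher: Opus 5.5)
Your proposal is correct and follows the paper's route. The paper deduces the corollary from Proposition~\ref{infospace_dist}, i.e.\ Proposition~\ref{prop:equiv_neri} applied to $S$ of dimension $\frac{k}{m}=n-d+1$, with $\dim\pi_{S\oplus\vect{u}{\Fq}}(\C)=k$ then following either from a second application as you do or simply from $\dim\pi_{S\oplus\vect{u}{\Fq}}(\C)\le\dim\C$. Your flag on $d=1$ is also right: the stated corollary silently relies on the hypothesis $d\geq 2$ of Proposition~\ref{infospace_dist}, and for $\C=\Hom(U,V)$ the first claim genuinely fails, since no nonzero vector has any locality there.
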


\begin{proposition}
  Let $\C\subseteq \Hom(U,V)$ of dimension $k$. If $\C$ is MRD, and $m\geqslant n$, then $\C$ cannot be locally recoverable with locality $r < \frac{k}{m}$.
\end{proposition}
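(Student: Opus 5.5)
We argue by contradiction, using the characterization~\ref{it:dual} of Proposition~\ref{prop:equiv_localite_dual} together with the fact that the dual of an MRD code is MRD. Alternatively, and more directly, we can use the puncturing characterization~\ref{it:eqdim}. Assume that some nonzero $u\in U$ has a helper space $S$ with $\dim S = r < \frac{k}{m}$ and $u\notin S$. Set $T \eqdef S\oplus\vect{u}{\Fq}$, which has dimension $r+1\le \frac{k}{m} = n-d+1$, since $m\ge n$ and $\C$ is MRD.

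The key step is to compute $\dim \pi_S(\C)$ and $\dim\pi_T(\C)$ exactly for MRD codes, and show that they differ by $m$. For any subspace $W\subseteq U$ of dimension $w \le n-d+1$, we claim $\dim_{\Fq}\pi_W(\C) = mw$. There are two inequalities to establish.

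\emph{Upper bound.} We have $\pi_W(\C)\subseteq \Hom(W,V)$, which has dimension $mw$.

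\emph{Lower bound.} Extend $W$ to a subspace $W'$ of dimension $n-d+1 = \frac{k}{m}$. By Proposition~\ref{prop:equiv_neri}, $\dim \pi_{W'}(\C)=k = m\dim W'$, so $\pi_{W'}$ is injective on $\C$ and $\pi_{W'}(\C)=\Hom(W',V)$. Restricting further to $W$ is surjective from $\Hom(W',V)$ onto $\Hom(W,V)$, since any map on $W$ extends to $W'$. Hence $\pi_W(\C)=\Hom(W,V)$, of dimension $mw$.

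Applying this claim to both $S$ and $T$ gives $\dim\pi_S(\C)=mr$ and $\dim\pi_T(\C)=m(r+1)$. These differ because $m\ge 1$, contradicting the defining equality of locality. Helper spaces of dimension at most $r$ are handled identically, since the claim applies to any dimension up to $n-d+1$. The edge case $d=1$, which falls outside Proposition~\ref{prop:equiv_neri}, is trivial: then $\C=\Hom(U,V)$, so $\pi_W(\C)=\Hom(W,V)$ for every $W$.

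The main obstacle is the lower-bound part of the claim: ensuring that punctured MRD codes fill the whole of $\Hom(W,V)$, which relies on comparing the dimension $k=m(n-d+1)$ with that of $\Hom(W',V)$. Once this surjectivity is in place, no new recovery relation is possible, since $f(u)$ can be prescribed independently of $f_{|S}$.
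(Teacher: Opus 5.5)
Your proof is correct, but it takes a different route from the paper's. The paper argues on the dual side. Since $\C$ is MRD, $\C^\perp$ is MRD with $\dr(\C^\perp)=\frac{k}{m}+1$, by the cited MRD-duality theorem. Characterization~\ref{it:dual} of Proposition~\ref{prop:equiv_localite_dual} then turns a helper space $S$ for $u$ into a dual codeword $\varphi_\ell[u]+g_\ell$ whose row support lies in $S\oplus\vect{u}{\Fq}$, so its rank is at most $r+1$. This forces $r\geq k/m$. This is the approach you announce in your first sentence and then abandon.

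You instead stay on the primal side and use only Proposition~\ref{prop:equiv_neri} together with the Singleton equality $k=m(n-d+1)$. From these you get $\pi_W(\C)=\Hom(W,V)$ for every subspace $W$ with $\dim W\le k/m$. Hence $\dim\pi_{S\oplus\vect{u}{\Fq}}(\C)-\dim\pi_S(\C)=m>0$, which contradicts locality.

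What your route buys:
\begin{itemize}
\item It is self-contained within the paper's puncturing framework and does not need the external duality result for MRD codes.
\item It avoids the point, left implicit in the paper, that the dual codeword produced is nonzero. Here that holds because evaluating its adjoint at $v_\ell$ gives $u$ plus an element of $S$.
\item It proves a stronger structural fact, strengthening Proposition~\ref{infospace_dist}: punctured MRD codes fill all of $\Hom(W,V)$. This is the rank analogue of the fact that any $k$ coordinates of an MDS code are free, and it shows directly that $f(u)$ is completely unconstrained by $f_{|S}$.
\end{itemize}
The price is a few more lines and a separate (trivial) treatment of $d=1$. The paper's version is a two-line consequence once MRD duality and the dual characterization of locality are in hand.
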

\begin{proof}
  Since $\C$ is MRD, $\C^\perp$ is also MRD (see
  {\cite[Theorem~4.13]{Gor21}}) and the minimum distance of $\C^\perp$ is $\frac{k}{m}+1$. Let $u \in U$. If $u$ has rank-locality  $r$ in $\C$, then by Proposition~\ref{prop:equiv_localite_dual}, there exists $x \in \C^\perp$ such that $w_R(x)\leqslant r+1$. This implies that $\frac{k}{m}+1 \leqslant r+1$, and therefore $\frac{k}{m}\leqslant r$.
\end{proof}

\subsection{Locality for \texorpdfstring{$\Fqm$}{F\_\{q\^m\}}-linear codes}

In this section, we make explicit the definition of rank-locality in the context of $\Fqm$-linear codes and we give some results of the last section in this context.

\begin{definition}\label{def: localite vecteurs}%
  Let $\C\subseteq\F_{q^m}^n$ and $r \in [n]$.
  We say that $u \in \F_q^n \setminus \{0\}$ has (rank) locality $r$ in $\C$ if there exist $r' \leqslant r$ and $B\in \F_q^{n \times r'}$ of rank $r'$, with $u \notin \colsupp(B)$, such that
  $$\dim_{\Fqm} \big(\pi_B(\C)\big)=\dim_{\Fqm} \big(\pi_{(B\mid u)}(\C)\big).$$
  In this case, the space $S$ is called a \emph{helper space} for $u$.
\end{definition}

\begin{definition} \label{def lrc rank}
  A code $\C \subseteq \F_{q^m}^n$ is $r$-\emph{rank locally recoverable} if every $u\in \Fq^n\setminus \{0\}$ has rank locality $r$ in $\C$.
\end{definition}

\begin{proposition}
  Let $\C\subseteq \F_{q^m}^n$. If $\C$ is $r$-rank locally recoverable, then for every basis $\Gamma$ of $\Fqm/\Fq$, the code $\Gamma(\C) \subseteq \Fq^{m \times n}$ is  $r$-rank locally recoverable.
\end{proposition}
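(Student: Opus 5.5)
The plan is to reduce everything to the dimension identity in Definition~\ref{def: localite vecteurs} and carry it through the basis expansion via Lemma~\ref{lem: dim passage gamma poiconne} and Proposition~\ref{prop:vec_to_mat}. Fix a basis $\Gamma$ of $\Fqm/\Fq$ and a nonzero $u \in \Fq^n$. Since $\C$ is $r$-rank locally recoverable, there exist $r' \le r$ and $B \in \Fq^{n\times r'}$ of rank $r'$ with $u \notin \colsupp(B)$ such that
\[
\dim_{\Fqm}\big(\pi_B(\C)\big) = \dim_{\Fqm}\big(\pi_{(B\mid u)}(\C)\big).
\]
The aim is to show that the same $B$ works as a helper for $u$ in $\Gamma(\C)$, in the sense of Definition~\ref{def:localite-matrices}.

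The key steps are as follows. First, I note that $\pi_B(\C) = \{\bm c B \mid \bm c\in \C\}$ and $\pi_{(B\mid u)}(\C)$ are $\Fqm$-linear subspaces of $\Fqm^{r'}$ and $\Fqm^{r'+1}$ respectively, because $B$ and $u$ have entries in $\Fq \subseteq \Fqm$. Second, Proposition~\ref{prop:vec_to_mat} applied to each of them gives
\[
\dim_{\Fq}\Gamma\big(\pi_B(\C)\big) = m\dim_{\Fqm}\pi_B(\C),
\]
and similarly for $(B\mid u)$. Third, Lemma~\ref{lem: dim passage gamma poiconne} gives $\Gamma(\pi_B(\C)) = \pi_B(\Gamma(\C))$, and likewise with $(B\mid u)$ in place of $B$. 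That lemma is stated for an $n\times s$ matrix $B$ completed by some $B'$ to an invertible matrix. Completion is possible because $B$ and $(B\mid u)$ both have full column rank, the latter since $u\notin\colsupp(B)$. In any case only the first identity is needed, and its proof does not use $B'$. Combining these three facts yields
\[
\dim_{\Fq}\pi_B(\Gamma(\C)) = \dim_{\Fq}\pi_{(B\mid u)}(\Gamma(\C)).
\]
Together with $u\notin\colsupp(B)$ and $\rk B = r'\le r$, this is exactly the statement that $u$ has locality $r$ in $\Gamma(\C)$ according to Definition~\ref{def:localite-matrices}. Finally, $u$ was an arbitrary nonzero vector of $\Fq^n$, and the domain of $\Gamma(\C) \subseteq \Fq^{m\times n}$ is again $\Fq^n$, so $\Gamma(\C)$ is $r$-rank locally recoverable.

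There is no serious obstacle. The only point requiring care is to confirm that the punctured codes are genuinely $\Fqm$-linear, so that Proposition~\ref{prop:vec_to_mat} applies to them and not only to $\C$ itself, and that the matrix-level puncturing of Lemma~\ref{lem: dim passage gamma poiconne} agrees with $\pi_B$ as used in Definition~\ref{def:localite-matrices}. Both facts hold because $B$ and $u$ have entries in $\Fq$.
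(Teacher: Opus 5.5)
Your proposal is correct and follows essentially the same route as the paper. Both keep the same helper matrix $B$, multiply the $\Fqm$-dimension equality by $m$ via Proposition~\ref{prop:vec_to_mat}, and transfer it through the identity $\Gamma(\pi_B(\C))=\pi_B(\Gamma(\C))$ from Lemma~\ref{lem: dim passage gamma poiconne}. Your extra checks, that the punctured codes are $\Fqm$-linear and that completing $B$ to an invertible matrix is not needed, just make explicit what the paper leaves implicit.
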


\begin{proof}
  If $\dim_{\Fqm}(\pi_B(\C))=t$, then $\dim_{\Fq}(\Gamma(\pi_B(\C))=mt$. Therefore, if $\dim_{\Fqm} (\pi_B(\C))= \dim_{\Fqm}(\pi_{(B\mid  u)}(\C))$, then $\dim_{\Fq} (\Gamma(\pi_B(\C)))= \dim_{\Fq}(\Gamma(\pi_{(B\mid  u)}(\C)))$, and by Lemma~\ref{lem: dim passage gamma poiconne}, $$\dim_{\Fq} (\pi_B(\Gamma(\C)))= \dim_{\Fq}(\pi_{(B\mid u)}(\Gamma(\C))),$$ and $\Gamma(\C)$ is $r$-rank locally recoverable.
\end{proof}

In the context of $\Fqm$-linear codes, the characterizations for local recoverability provided by Proposition~\ref{prop:equiv_localite_dual} become the following.
\begin{proposition} \label{equiv mot dual}
  Let $\C\subseteq\F_{q^m}^n$.
  Let also $u\in \Fq^n$, and $S\subseteq U$ of dimension $r$, with $u\notin S$. For every $B =(b_1, \dots, b_r)\in \Fq^{n\times r}$ such that $\colsupp(B)=S$, the following are equivalent.
  \begin{enumerate}[label=(\roman*)]
  \item\label{it:eqdimvect} $\dim_{\Fqm}(\pi_B(\C)) = \dim_{\Fqm}(\pi_{(B|u)}(\C))$
  \item \label{it:dependencyvect} There exist $\lambda_1, \dots, \lambda_r\in \F_{q^m}$ such that for every $ \bm c \in \C$, $\langle \bm  c,  u \rangle = \sum_{i=1}^r \lambda_i \langle  \bm  c, { b_i}\rangle.$ 
  \item\label{it:dualvect} There exists  $\bm  x \in \Fqm^n$ such that $\rowsupp(\bm  x)\subseteq S$ and  $u+ \bm x\in \C^\perp$.

  \end{enumerate}
\end{proposition}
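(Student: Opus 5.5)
The plan is to prove the cycle \ref{it:eqdimvect} $\Rightarrow$ \ref{it:dependencyvect} $\Rightarrow$ \ref{it:dualvect} $\Rightarrow$ \ref{it:eqdimvect}, mimicking Proposition~\ref{prop:equiv_localite_dual} but working directly with $\Fqm$-linear algebra. Here $\langle \bm c, b\rangle = \bm c\, b$ denotes the $\Fqm$-valued pairing, so $\pi_{(B|u)}(\bm c) = (\langle \bm c,b_1\rangle,\dots,\langle \bm c,b_r\rangle,\langle \bm c,u\rangle)$. The point is that, since $\C$ is $\Fqm$-linear, the recovery maps collapse from endomorphisms of $V$ to scalars in $\Fqm$.

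\emph{Step 1: \ref{it:eqdimvect} $\Rightarrow$ \ref{it:dependencyvect}.} Consider the $\Fqm$-linear projection $p : \pi_{(B|u)}(\C) \to \pi_B(\C)$ that forgets the last coordinate. It is surjective, so under \ref{it:eqdimvect} it is an isomorphism of $\Fqm$-spaces. Composing $p^{-1}$ with the projection onto the last coordinate gives an $\Fqm$-linear map $\pi_B(\C)\to\Fqm$. Extend it to an $\Fqm$-linear form on $\Fqm^r$; such a form is $(y_1,\dots,y_r)\mapsto \sum_i\lambda_i y_i$ for some $\lambda_i \in \Fqm$. This yields \ref{it:dependencyvect}.

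\emph{Step 2: \ref{it:dependencyvect} $\Rightarrow$ \ref{it:dualvect}.} Set $\bm x \eqdef -\sum_i \lambda_i b_i \in \Fqm^n$. Then $\langle \bm c, u+\bm x\rangle = 0$ for all $\bm c\in\C$, so $u+\bm x\in\C^\perp$ for the standard bilinear form~\eqref{eq:dual_vector}. It remains to check $\rowsupp(\bm x)\subseteq S$. Expand each $\lambda_i$ in an $\Fq$-basis $\Gamma$. Then $\Gamma(\bm x) = -\sum_i \Gamma(\lambda_i)\, b_i^\top$, where $\Gamma(\lambda_i)$ is a column vector, so every row of $\Gamma(\bm x)$ is an $\Fq$-combination of the $b_i^\top$. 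This is the place to be careful about the convention for $\rowsupp$ of a vector: I interpret it as the row space of $\Gamma(\bm x)$, which is independent of $\Gamma$.

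\emph{Step 3: \ref{it:dualvect} $\Rightarrow$ \ref{it:eqdimvect}.} If $\rowsupp(\bm x)\subseteq S=\colsupp(B)$, every row of $\Gamma(\bm x)$ lies in the $\Fq$-span of the $b_i$. Hence $\Gamma(\bm x)=\sum_i a_i b_i^\top$ with $a_i\in\Fq^m$, i.e.\ $\bm x=\sum_i \mu_i b_i$ with $\mu_i\in\Fqm$ read off from $a_i$. Then $u+\bm x\in\C^\perp$ gives $\langle\bm c,u\rangle=-\sum_i\mu_i\langle\bm c,b_i\rangle$. So any $\bm c$ with $\bm cB=0$ also has $\langle \bm c,u\rangle=0$, meaning the kernel of $p$ is trivial and the dimensions agree. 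This decomposition of $\bm x$ is the only step I expect to need some care; it is a short linear-algebra check. Along the way I should also note that \ref{it:dependencyvect} holding for one $B$ with $\colsupp(B)=S$ holds for all such $B$, by a change of basis of $S$.
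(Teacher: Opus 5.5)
Your proof is correct and follows essentially the paper's argument. You use the same choice $\bm x = -\sum_i \lambda_i b_i$ for (ii)$\Rightarrow$(iii) and the same decomposition $\bm x = \sum_i \mu_i b_i$ for the converse, only arranged as a cycle that closes directly at (i). Beyond that, you spell out the (i)$\Leftrightarrow$(ii) step the paper calls clear, make explicit the convention that $\rowsupp(\bm x)$ is the row space of $\Gamma(\bm x)$, and keep the sign $-\mu_i$ that the paper's final line drops.
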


\begin{proof}
  Since $\pi_{(B|u)}(\C) = \{(\langle \bm  c, {b_1} \rangle, \dots, \langle \bm  c, {b_r} \rangle,  \langle  \bm  c, u \rangle ), \bm  c \in \C\}$, the equivalence \ref{it:eqdimvect} $\Leftrightarrow$ \ref{it:dependencyvect} is clear.

  Suppose  now that there exist $\lambda_1, \dots, \lambda_r\in \F_{q^m}$ such that for every $\bm  c \in \C$, \[\langle \bm  c,  u \rangle = \sum_{i=1}^r \lambda_i \langle  \bm  c, { b_i}\rangle.\]
  Set ${\bm  x} :=   - \sum_{i=1}^r\lambda_i b_i.$
  Then $u + {\bm x}$ lies in $\C^\perp$ and $\rowsupp(\bm  x)\subseteq S$. This proves that~\ref{it:dependencyvect} implies~\ref{it:dualvect}.

  Conversely, suppose there exists $ \bm x\in \Fqm^n$ such that $\rowsupp(\bm  x)\subseteq S$ and  $u+\bm  x\in \C^\perp$. Then, there exist  $\lambda_1, \dots, \lambda_r\in \F_{q^m}$ such that $\bm  x = \sum_{i=1}^r \lambda_i b_i$. Since $u+\bm  x\in \C^\perp$, we obtain that for all $ \bm c\in \C, \langle \bm  c,  u \rangle = \sum_{i=1}^r \lambda_i \langle \bm   c, { b_i}\rangle.$ This proves that~\ref{it:dependencyvect} $\Leftrightarrow$~\ref{it:dualvect}.
\end{proof}

In the case of an $\Fqm$-vector code $\C$, the previous proposition implies that the endomorphisms $\psi_i$ from Proposition~\ref{prop:equiv_localite_dual}~\ref{it:dependencyvect} correspond to companion matrices (matrices of the multiplication by scalars in $\Fqm$) for the corresponding matrix code $\Gamma(\C)$. The third items of Propositions~\ref{prop:equiv_localite_dual} and~\ref{equiv mot dual} both give a characterization of local recoverability related to the dual codes. As already mentioned in Remark~\ref{rmk: diff dual}, we have $\Gamma(\C)^\perp\neq\Gamma(\C^\perp)$. Therefore, a  parity-check equation $\langle \bm c , \bm x \rangle =0 \in \Fqm$ does not necessarily lead to a  parity-check equation $\langle \Gamma(\bm c) , \Gamma(\bm x) \rangle =0 $. Since $\Fq^m$ has dimension $m$ over $\Fq$ an equation $\langle \bm c , \bm x \rangle =0 \in \Fqm$ gives rise to $m$ parity-check equations in $\Fq$. This explains why Proposition~\ref{prop:equiv_localite_dual}~\ref{it:dual} involves $m$ codewords in the dual of $\Gamma(\C)$, whereas~\ref{it:dualvect} of the previous proposition only relies on the existence of one vector (and all its $\Fq^m$-multiples).

Let us now prove that an $\Fqm$-linear MRD code of dimension $k$ has locality exactly $k$. To do so, we will study the notion of information space for $\Fqm$-linear codes.

\begin{lemma}\label{lem:info_space_dim_k}%
  Let $\C\subseteq\Fqm^n$ be an $\Fqm$-linear code of dimension $k$ (over $\Fqm$). There exists $B\in \Fq^{n\times k}$ of rank exactly $k$,  such that $\dim_{\Fqm}(\pi_B(\C))=\dim_{\Fqm}(\C)$.
\end{lemma}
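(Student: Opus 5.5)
The approach is to work with a generator matrix and pick $k$ columns that form an $\Fqm$-basis of its column space; combinations of them with coefficients in $\Fq$ suffice, which is exactly what puncturing by $B\in\Fq^{n\times k}$ allows. Let $G\in\Fqm^{k\times n}$ be a generator matrix of $\C$, so that $\C=\{\bm x G \mid \bm x\in\Fqm^k\}$ and $\rk_{\Fqm}(G)=k$. For $B\in\Fq^{n\times s}$ we have $\pi_B(\C)=\{\bm x GB\mid \bm x\in \Fqm^k\}$. Hence $\dim_{\Fqm}(\pi_B(\C))=\rk_{\Fqm}(GB)$, and we must find $B\in\Fq^{n\times k}$ of $\Fq$-rank $k$ with $GB$ invertible.

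The simplest choice is to take $B$ to be a column-selection matrix. Since $G$ has $\Fqm$-rank $k$, some $k$ columns of $G$, indexed by a set $I\subseteq[n]$ with $|I|=k$, are $\Fqm$-linearly independent. Let $B\in\Fq^{n\times k}$ be the matrix whose columns are the canonical vectors $e_i$ for $i\in I$. Then $B$ has entries in $\{0,1\}\subseteq\Fq$ and rank exactly $k$, and $GB$ is the $k\times k$ submatrix $G_I$, which is invertible. Therefore $\dim_{\Fqm}(\pi_B(\C))=k=\dim_{\Fqm}(\C)$.

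There is essentially no obstacle. The one point to check is that $\pi_B$, defined for $\Fqm$-vector codes via the matrix action $\bm c\mapsto \bm cB$ (as in Lemma~\ref{lem: dim passage gamma poiconne} and Proposition~\ref{equiv mot dual}), is $\Fqm$-linear, so that the dimension of its image is the rank of $GB$. This follows from the fact that $B$ has entries in $\Fq\subseteq\Fqm$. Note that this argument uses only the existence of an information set in the Hamming sense for $\C$.
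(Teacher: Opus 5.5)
Your proof is correct and is essentially the paper's argument. The paper also takes a generator matrix $G$ and picks $k$ columns indexed by $I$ that form an invertible $k\times k$ block. The only difference is that it does this for $GA$ with an arbitrary $A\in\GL_n(\Fq)$ and takes $B=A_I$, which shows slightly more: any $\Fq$-basis of $\Fq^n$ contains $k$ vectors spanning an information space. Your choice $A=I_n$ is enough for the existence claim.
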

\begin{proof}
  Let $G\in \Fqm^{k\times n}$ be a generator matrix of $\C$. Since $\rk(G)=k$, for every $A\in \GL_n(\Fq)$, $\rk(GA)=k$. Therefore, there exists a set $I=\{i_1, \dots, i_k\} \subseteq [n]$ such that the columns of $GA$ indexed by $I$ form an invertible matrix. Denote $A_I$ the matrix obtained from $A$ by keeping its columns indexed by $I$. We then have $\dim_{\Fqm}({\pi_B}(\C))=\dim_{\Fqm}(\C)$.
\end{proof}

Remark that if $\C \subseteq \Fqm^n$ has dimension $k$ over $\Fqm$ and if $\dim_{\Fq} \pi_B(\C) = \dim_{\Fq}(\C)$ for some $B \in \Fq^{n\times r}$ (with $r \le n$), then we must have $k \le r$ since $\pi_B(\C)\subseteq\Fqm^r$. This leads us to define information spaces for $\Fqm$-linear codes in the following way.

\begin{definition}\label{def:information_space_Fqm_linear}%
  Let $\C\subseteq\F_{q^m}^n$ with  $\dim_{\Fqm}(\C)= k$. A $k$-dimensional subspace $S\subseteq \F_q^n$ is called an \emph{information space for $\C$} if $\pi_B(\C)=\F_{q^m}^k$ for every  $B\in \F_q^{n\times k}$ whose columns span $S$.
\end{definition}

Notice that Lemma~\ref{lem:info_space_dim_k} shows that any $\Fqm$-linear code admits at least one information space.

We now recall the Singleton bound for vector codes, which follows directly from the matrix framework~\cite[Theorem~3.5]{Gor21}.

\begin{theorem}[Singleton bound for $\Fqm$-linear vector codes] An $[n,k,d]_{q^m}$ vector rank-metric code $\C$ satisfies
  \[
  k \leq  n-d+1.
  \]
  Codes whose parameters reach this bound are called \emph{Maximum Rank Distance (MRD) codes.}
\end{theorem}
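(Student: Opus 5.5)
The statement to prove is $k\le n-d+1$ for an $[n,k,d]_{q^m}$ $\Fqm$-linear vector rank-metric code $\C$. I would deduce it directly from the homomorphism Singleton bound, Theorem~\ref{thm:morphisme_Singleton}, through the matrix representation.

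Fix an $\Fq$-basis $\Gamma$ of $\Fqm$. By Proposition~\ref{prop:vec_to_mat}, $\Gamma(\C)\subseteq\Fq^{m\times n}$ is an $\Fq$-linear matrix code of dimension $mk$. Its minimum rank distance is exactly $d$, since $\wr(\cv)=\rk(\Gamma(\cv))$ for every $\cv\in\C$. Identifying $\Fq^{m\times n}$ with $\Hom(U,V)$, where $\dim U=n$ and $\dim V=m$, Theorem~\ref{thm:morphisme_Singleton} gives
\[
mk\le \max\{m,n\}\,(\min\{m,n\}-d+1).
\]

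I then split into two cases.

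\textbf{Case $m\ge n$.} The inequality reads $mk\le m(n-d+1)$. Dividing by $m$ gives $k\le n-d+1$ directly.

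\textbf{Case $n>m$.} The matrix bound only yields $mk\le n(m-d+1)$, which is weaker, so this case is the main obstacle. Here I would argue intrinsically, using the $\Fqm$-linearity. I puncture along any $B\in\Fq^{n\times(n-d+1)}$ of full rank. A codeword $\cv$ with $\cv B=0$ satisfies $\wr(\cv)\le n-(n-d+1)=d-1$, since its row support is contained in $\colsupp(B)^\perp$, which has dimension $d-1$. Hence $\cv=0$. So $\pi_B$ is injective on $\C$; this is Proposition~\ref{prop:equiv_neri} with $s=n-d+1$, combined with Lemma~\ref{lem: dim passage gamma poiconne}, and it holds for $d\ge2$.

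Since $\pi_B$ is $\Fqm$-linear, $\pi_B(\C)\subseteq\Fqm^{\,n-d+1}$ has $\Fqm$-dimension $k$. Therefore $k\le n-d+1$. This argument covers both cases uniformly, so I would actually present it as the main proof and mention the matrix bound only as motivation.

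The case $d=1$ is trivial, since $k\le n$ always. The only detail to check is the rank claim in the injectivity step: $\cv B=0$ means each coordinate functional in $\Gamma(\cv)$ kills $\colsupp(B)$. Thus the row support of $\Gamma(\cv)$ lies in $\colsupp(B)^\perp$, which has dimension $d-1$.
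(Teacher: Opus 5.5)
Your final argument is correct. However, your diagnosis of the case $n>m$ is wrong, and that case is not an obstacle. From $mk\le n(m-d+1)$ you get
\[
k\le \frac{n}{m}(m-d+1)=n-\frac{n}{m}(d-1)\le n-(d-1),
\]
because $n/m>1$ and $d\ge 1$. So when $n>m$ the matrix bound is \emph{stronger} than the inequality you want, not weaker. This is exactly the paper's route. The paper gives no separate proof; it only notes that the statement follows directly from the homomorphism Singleton bound (Theorem~\ref{thm:morphisme_Singleton}, i.e.\ \cite[Theorem~3.5]{Gor21}) applied to $\Gamma(\C)$. That code has $\Fq$-dimension $mk$ and the same minimum distance. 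In other words, the paper's proof is your first paragraph plus the two-line case check.

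Your intrinsic puncturing argument is a genuinely different, self-contained proof. It mirrors the classical Hamming-metric proof of the Singleton bound and uses only the paper's own tools. Your direct row-support count $\rk\Gamma(\cv)\le \dim\colsupp(B)^\perp=d-1$ even works for every $d\ge 1$, so it needs neither Proposition~\ref{prop:equiv_neri} nor its restriction $d\ge 2$. It also exploits the $\Fqm$-linearity of $\cv\mapsto \cv B$ to count dimensions over $\Fqm$. This is the same mechanism behind Corollary~\ref{coro:info_space_MRD}, so it fits the paper's information-space viewpoint well.

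What your route does not give is the sharper bound $k\le \frac{n}{m}(m-d+1)$ that the matrix route provides when $n>m$. That sharper bound shows that for $n>m$ and $d\ge 2$ the inequality $k\le n-d+1$ is never attained. This is why $\Fqm$-linear MRD codes, such as Gabidulin codes, only occur in the regime $n\le m$.
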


The particular case of $\Fqm$-linear MRD codes  implies the following results thanks to Proposition~\ref{prop:equiv_neri}.

\begin{corollary}
  \label{coro:info_space_MRD}%
  Let $\C\subseteq \F_{q^m}^n$ be an MRD code of dimension $k$.
  Then every $S\subseteq \F_{q^m}^n$ of dimension $k$ is an information space for $\C$.
\end{corollary}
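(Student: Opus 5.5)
Note first that the statement writes $S\subseteq \F_{q^m}^n$, but Definition~\ref{def:information_space_Fqm_linear} only concerns subspaces of $\Fq^n$. We therefore read it as: every $k$-dimensional $\Fq$-subspace $S\subseteq\Fq^n$ is an information space for $\C$. So we fix such an $S$ and an arbitrary $B\in\Fq^{n\times k}$ whose columns span $S$. Since $\dim S = k$, the matrix $B$ has rank $k$, hence the map $g$ it defines is injective. The goal is to show $\pi_B(\C)=\Fqm^k$.

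\textbf{Dimension over $\Fq$.} We pass to the matrix setting. Fix an $\Fq$-basis $\Gamma$ of $\Fqm$. By Proposition~\ref{prop:vec_to_mat}, $\Gamma(\C)\subseteq\Fq^{m\times n}$ is an $\Fq$-linear code of dimension $mk$. Its minimum distance is $d = n-k+1$, because $\C$ is MRD and the rank weight is preserved by $\Gamma$. We may assume $d\ge 2$, i.e.\ $k\le n-1$; if $k=n$, then $B$ is invertible and the claim is immediate since $\C=\Fqm^n$. The puncturing $\pi_B$ has $s=k=n-d+1$, so Proposition~\ref{prop:equiv_neri} gives
\[
\dim_{\Fq}\pi_B(\Gamma(\C)) = mk .
\]
Proposition~\ref{prop:equiv_neri} requires $d\le\min(m,n)$. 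This holds because $d\le n$ always, and $d\le m$ since a nonzero codeword has rank weight at most $m$.

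\textbf{Transfer back.} By Lemma~\ref{lem: dim passage gamma poiconne}, $\Gamma(\pi_B(\C)) = \pi_B(\Gamma(\C))$. Since $\Gamma$ is an $\Fq$-isomorphism, $\dim_{\Fq}\pi_B(\C) = mk$. Now $\pi_B(\C)$ is an $\Fqm$-subspace of $\Fqm^k$, so $\dim_{\Fqm}\pi_B(\C)=k$, which forces $\pi_B(\C)=\Fqm^k$.

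\textbf{Alternative.} A direct argument avoids $\Gamma$. The kernel of $\pi_B$ restricted to $\C$ consists of codewords $\bm c$ with $\bm c B=0$. For such $\bm c$, the coordinate map $f_{\bm c}$ vanishes on the $k$-dimensional space $S$, so its rank is at most $n-k<d$, hence $\bm c=0$. Thus $\pi_B$ is injective on $\C$, and dimension counting over $\Fqm$ again gives $\pi_B(\C)=\Fqm^k$.

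The only potential obstacle is the bookkeeping above: the range condition on $d$ in Proposition~\ref{prop:equiv_neri}, and interpreting the statement's $S$ as a subspace of $\Fq^n$. The rest is routine.
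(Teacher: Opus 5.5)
Your proof is correct and takes the paper's route: the paper derives this corollary directly from Proposition~\ref{prop:equiv_neri} using $d=n-k+1$, which is what you do after passing to $\Gamma(\C)$ and back via Lemma~\ref{lem: dim passage gamma poiconne}, and your alternative kernel argument is that proposition unwound in this special case. The bookkeeping you add and the paper leaves implicit (the case $k=n$, the condition $2\le d\le\min(m,n)$, and reading $S$ as a subspace of $\Fq^n$) is all handled correctly.
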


\begin{corollary}\label{coro:localite_MRD}
  Let $\C\subseteq \F_{q^m}^n$ be an MRD code of dimension $k$. Then, $\C$ is $k$-rank locally recoverable. Besides, for every $u\in \Fq^n$, every space $S$ of dimension $k$ such that $u\notin S$ is a helper space for $u$ in $\C$.

\end{corollary}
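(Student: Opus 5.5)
The plan is to deduce both assertions of Corollary~\ref{coro:localite_MRD} from Corollary~\ref{coro:info_space_MRD}, which says that every $k$-dimensional subspace $S\subseteq\Fq^n$ is an information space for $\C$. (There $S$ should be read as a subspace of $\Fq^n$, as in Definition~\ref{def:information_space_Fqm_linear}.) The second assertion implies the first. Given $u\in\Fq^n\setminus\{0\}$, a $k$-dimensional subspace $S$ with $u\notin S$ always exists as long as $k<n$: extend $u$ to a basis $(u,w_1,\dots,w_{n-1})$ of $\Fq^n$ and take $S=\vect{w_1,\dots,w_k}{\Fq}$. So it suffices to show that every such $S$ is a helper space for $u$, in the sense of Definition~\ref{def: localite vecteurs} with $r'=k$.

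Fix $u$ and a $k$-dimensional $S$ with $u\notin S$, and pick $B\in\Fq^{n\times k}$ whose columns form a basis of $S$. Then $B$ has rank $k$ and $u\notin\colsupp(B)$.
\begin{itemize}
\item By Corollary~\ref{coro:info_space_MRD}, $\pi_B(\C)=\Fqm^k$, so $\dim_{\Fqm}\pi_B(\C)=k$.
\item The matrix $(B\mid u)$ induces a surjective $\Fqm$-linear map $\C\to\pi_{(B\mid u)}(\C)$, so $\dim_{\Fqm}\pi_{(B\mid u)}(\C)\le\dim_{\Fqm}\C=k$.
\item Deleting the last coordinate maps $\pi_{(B\mid u)}(\C)$ onto $\pi_B(\C)$, so $\dim_{\Fqm}\pi_{(B\mid u)}(\C)\ge\dim_{\Fqm}\pi_B(\C)=k$.
\end{itemize}
Hence both dimensions equal $k$, and $S$ is a helper space for $u$. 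The same argument can also be phrased through Proposition~\ref{equiv mot dual}\ref{it:eqdimvect}.

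The main obstacle is the edge case $k=n$, where no $k$-dimensional subspace can avoid $u$. In that case $\C=\Fqm^n$, and locality $k=n$ still holds with any $(n-1)$-dimensional complement of $u$: $\pi_B(\C)=\Fqm^{n-1}$ and $\pi_{(B\mid u)}(\C)=\Fqm^{n}$ have different dimensions, so this is not a helper space in the strict sense. I would therefore either assume $k<n$, which is the natural reading, or note that the statement is vacuous or degenerate for $\C=\Fqm^n$. The remaining steps are routine.
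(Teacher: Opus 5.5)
Your proof is correct and takes the paper's route: the corollary comes from Corollary~\ref{coro:info_space_MRD} (Proposition~\ref{prop:equiv_neri} with $d=n-k+1$), since a $k$-dimensional information space $S$ with $u\notin S$ gives $\dim_{\Fqm}\pi_B(\C)=k=\dim_{\Fqm}\pi_{(B\mid u)}(\C)$, exactly as in your three bullet points. Your caveat on $k=n$ is justified but understated: for $\C=\Fqm^n$ the two punctured codes always have dimensions $r'$ and $r'+1$, so no nonzero $u$ has any locality and the first assertion is false rather than vacuous. The hypothesis $k<n$ (equivalently $d\geq 2$, which Proposition~\ref{prop:equiv_neri} requires) must therefore be added.
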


A last result finally shows that the dimension of helper spaces of MRD $\Fqm$-linear codes must be at least the dimension of the code.

\begin{proposition}
  Let $\C\subseteq \Fqm^n$ of dimension $k$ over $\Fqm$. If $\C$ is MRD, then $\C$ cannot be locally recoverable with locality strictly less than $k$.
\end{proposition}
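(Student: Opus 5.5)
The plan mirrors the matrix-code argument given just before Corollary~\ref{coro:info_space_MRD}, now using the dual characterization of Proposition~\ref{equiv mot dual}. Let $\C\subseteq\Fqm^n$ be MRD of dimension $k$, and suppose some nonzero $u\in\Fq^n$ has locality $r$ with helper space $S$, where $\dim S = r' \le r$ and $u\notin S$.

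First, by Proposition~\ref{equiv mot dual}~\ref{it:dualvect}, there exists $\bm x\in\Fqm^n$ with $\rowsupp(\bm x)\subseteq S$ and $u+\bm x\in\C^\perp$.

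Second, we bound the rank weight of the dual codeword $\bm y\eqdef u+\bm x$. We view $\bm y$ as a matrix via a basis $\Gamma$ of $\Fqm/\Fq$. The rows of $u$ (a vector with entries in $\Fq$, embedded in $\Fqm^n$) all lie in $\vect{u}{\Fq}$. The rows of $\bm x$ lie in $S$. Hence the row support of $\bm y$ is contained in $S\oplus\vect{u}{\Fq}$, so $\wr(\bm y)\le r'+1\le r+1$.

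We must also check that $\bm y\neq 0$. If $\bm y=0$, then $u=-\bm x$. Writing $\bm x=\sum_i\lambda_i b_i$ with $\lambda_i\in\Fqm$ and the $b_i$ a basis of $S$, every row of $\Gamma(\bm x)$ lies in $S$. Choosing a basis with $\gamma_1=1$, the first row of $\Gamma(u)$ is $u$ itself, so $u\in S$, a contradiction. The same conclusion follows by expanding $u=-\sum\lambda_i b_i$ in coordinates, since the $b_i$ are $\Fq$-independent and $u\notin S$.

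Third, since $\C$ is MRD, so is $\C^\perp$ (the $\Fqm$-linear analogue of \cite[Theorem~4.13]{Gor21}). Its dimension is $n-k$, so its minimum rank distance is $n-(n-k)+1=k+1$. Therefore $k+1\le\wr(\bm y)\le r+1$, which gives $r\ge k$. Since $u$ was arbitrary, no locality strictly less than $k$ is possible.

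The main subtlety is the second step. We have to make precise that the row support of an $\Fqm$-vector, identified with the row space of $\Gamma(\bm y)$, is basis-independent and additive in the sense $\rowsupp(\bm a+\bm b)\subseteq\rowsupp(\bm a)+\rowsupp(\bm b)$, and that this row space has dimension equal to the rank weight. We must also secure the nonvanishing of $\bm y$. Both follow from elementary expansion over $\Gamma$. The degenerate case where $\C^\perp=0$ (that is, $k=n$) is handled separately: then no $u$ has a helper space of dimension below $n=k$, because $\pi_B(\C)=\Fqm^{r'}$ while $\pi_{(B|u)}(\C)=\Fqm^{r'+1}$.
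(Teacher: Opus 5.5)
Your proof is correct and is essentially the paper's argument. You use Proposition~\ref{equiv mot dual}~\ref{it:dualvect} to produce the dual codeword $u+\bm x$ of rank weight at most $r+1$, then compare it with the minimum distance $k+1$ of the MRD code $\C^\perp$. You are more careful than the paper, which leaves two points implicit: that $u+\bm x\neq 0$ (a consequence of $u\notin S$), and the degenerate case $k=n$ where $\C^\perp=\{0\}$.
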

\begin{proof}
  Since $\C$ is MRD, $\C^\perp$ is also MRD (see~\cite[Corollary~41]{Ravagnani_Rank-metric_2015}), and the minimum distance of $\C^\perp$ is $k+1$. Let $u\in \Fq^n$. If $u$ has rank locality  $r$ in $\C$, then by Proposition~\ref{equiv mot dual}, there exists $\bm x\in \C^\perp$ such that $w_R(\bm x)\leqslant r+1$. This implies that $k+1 \leqslant r+1$, and therefore $k\leqslant r$.
\end{proof}

\section{A Singleton-like bound for rank LRCs}
\label{sec:singleton-like-bound}%

This section aims at establishing a Singleton-like bound of rank-metric codes, which involves the locality and the three classical parameters. Such a bound was obtained in the Hamming metric in~\cite{locality12}.

Given subspaces $S, \imh \subseteq U$ such that $S \oplus \imh = U$, we denote by $\Short(\C, S,\imh)$ the shortened code $\Short(\C, \iota_S,\imh)$ using the canonical inclusion $\iota_{S}$ (see Notation~\ref{not:inclusion}).

\begin{lemma} \label{lemme_pour_singleton}
  Let $\C\subseteq \Hom(U,V)$ be an $r$-rank locally recoverable code. Let $S, \imh \subseteq U$ be subspaces such that $S \oplus S'=U$. Then, the code $\widetilde{\C} = \Short(\C, S, \imh)$ is $r$-rank locally recoverable.
\end{lemma}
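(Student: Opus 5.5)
The plan is to work with the characterization~\ref{it:dependency} of Proposition~\ref{prop:equiv_localite_dual}, and to transport a helper space from $U$ to $S$ using the projection onto $S$ along $\imh$. Let $p : U \to S$ be the projection with kernel $\imh$. Every element of $\widetilde{\C}$ has the form $f_{\mid S}$ with $f \in \C$ and $f_{\mid \imh}=0$. The key observation is that such an $f$ satisfies $f = f_{\mid S}\circ p$, that is, $f(x) = f(p(x))$ for every $x \in U$. So any linear relation among values of $f$ on $U$ becomes a relation among values of $f_{\mid S}$ on $S$.

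Step 1. Fix a nonzero $u \in S$, viewed in $U$. Since $\C$ is $r$-rank locally recoverable, there are a helper space $T=\vect{t_1,\dots,t_r}{\Fq}\subseteq U$ with $u\notin T$ and maps $\psi_1,\dots,\psi_r\in\End(V)$ such that $f(u)=\sum_i \psi_i(f(t_i))$ for all $f\in\C$. If $T$ has dimension less than $r$, pad with zero vectors, as in the proof of Proposition~\ref{prop:cartesian_product_codes}.

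Step 2. Restrict this relation to the codewords $f$ with $f_{\mid\imh}=0$, and set $\widetilde f = f_{\mid S}$. Since $p(u)=u$, we get
\[
\widetilde f(u)=\sum_{i=1}^r \psi_i\big(\widetilde f(p(t_i))\big)\qquad \text{for all } \widetilde f\in\widetilde{\C}.
\]
The candidate helper space is therefore $p(T)\subseteq S$, which has dimension at most $r$. If $u\notin p(T)$, Proposition~\ref{prop:equiv_localite_dual} (applied in $\Hom(S,V)$, with a basis of $p(T)$ extracted from the $p(t_i)$ and the $\psi_i$ regrouped accordingly) shows that $u$ has locality $r$ in $\widetilde{\C}$.

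Step 3, the main obstacle, is the case $u\in p(T)$. This case can occur, since $u\notin T$ does not prevent $u\in T+\imh$. Write $p(T)=\vect{u}{\Fq}\oplus W$ with $\dim W\le r-1$. The relation then becomes
\[
(\id_V-\phi)\big(\widetilde f(u)\big)=\sum_j \chi_j\big(\widetilde f(w_j)\big)
\]
for some $\phi,\chi_j\in\End(V)$. This only recovers $\widetilde f(u)$ when $\id_V-\phi$ is invertible on the relevant image. In that situation, $W$ (of dimension at most $r-1$) is a helper space, and we are done.

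Otherwise, I would try first to exploit the freedom in choosing $T$. Replacing each $t_i$ by $t_i+s'_i$ with $s'_i\in\imh$ does not change $p(T)$, so that freedom alone does not help. Instead, I would argue through the dimension formulation: $u$ has locality $r$ in $\widetilde{\C}$ as soon as $\dim\pi_{W'}(\widetilde{\C})=\dim\pi_{W'\oplus\vect{u}{\Fq}}(\widetilde{\C})$ for some $W'\subseteq S$ of dimension at most $r$ not containing $u$. The aim is to show that $W'=W\oplus\vect{x}{\Fq}$ works for a suitable $x\in S\setminus p(T)$. Concretely, the map $\ev_{W'}$ from the proof of Proposition~\ref{prop:equiv_localite_dual} should be injective on $\pi_{p(T)+\vect{x}{\Fq}}(\widetilde{\C})$; here the fact that $\pi_{p(T)}$ and $\pi_T$ have the same kernel on the shortened subcode is what I would use.

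As a second route for this case, I would use the dual characterization~\ref{it:dual} together with Proposition~\ref{prop:dual_short_punt}. The dual of $\widetilde{\C}$ is a puncturing of $\C^\perp$, so it contains the restrictions to $S$ of $\varphi_\ell[u]+g_\ell$. I would then check directly whether the row supports of these restrictions give a helper space not containing $u$.
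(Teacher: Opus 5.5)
Your Steps~1--2 are the paper's argument in primal form. The paper dualizes via Proposition~\ref{prop:dual_short_punt}. The map $\psi^\dagger$ it uses ($\psi^\dagger\circ\iota_S=\id_S$, $\psi^\dagger\circ\iota_{S'}=0$) is your projection $p$. The dual words it produces are $\varphi_\ell[u]+f_\ell\circ\psi$ with $\rowsupp(f_\ell\circ\psi)\subseteq\psi^\dagger(W)=p(W)$. So your ``second route'' for Step~3 is the paper's proof, and it gives the same candidate helper space. The genuine gap is Step~3. You correctly note that $u\in p(W)$ can occur, and your sub-case with $\id_V-\phi$ invertible is fine. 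The remaining case is left open, and it cannot be closed: the lemma is false there. The paper's proof has the same hole. It never checks $u\notin T$, and this check is essential, because condition~\ref{it:dual} of Proposition~\ref{prop:equiv_localite_dual} holds trivially when $u\in T$ (take $g_\ell=-\varphi_\ell[u]$).

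Here is a counterexample. Let $A=\begin{pmatrix}1&1\\0&1\end{pmatrix}$ and $\C=\{(a\mid Aa) : a\in\Fq^2\}\subseteq\Fq^{2\times 2}$, i.e.\ $f_a(x)=(x_1I+x_2A)a$; this is the code $\C_A$ of Example~\ref{ex:famille-localite-1} with $m=2$, $n=1$. It is $1$-rank locally recoverable: both $f_a(e_1)=a$ and $f_a(e_2)=Aa$ determine $a$, so whichever of $\langle e_1\rangle,\langle e_2\rangle$ avoids $u$ is a helper space. It is nondegenerate since $I$ and $A$ are linearly independent. Take $S=\langle e_1\rangle$ and $S'=\langle e_1-e_2\rangle$. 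Since $f_a(e_1-e_2)=(I-A)a=(-a_2,0)^\top$, the shortened code is $\widetilde\C=\{xe_1\mapsto xce_1 : c\in\Fq\}\neq 0$. The only subspace of $S$ not containing $u=e_1$ is $\{0\}$, and $\dim\pi_{\{0\}}(\widetilde\C)=0<1=\dim\pi_S(\widetilde\C)$. Hence $e_1$ has no locality in $\widetilde\C$.

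This is exactly your bad case. Every helper line $\langle w\rangle$ of $e_1$ in $\C$ has $p(w)\in S\setminus\{0\}$. Your relation then collapses to $\widetilde f(e_1)=\phi(\widetilde f(e_1))$ with $\phi(e_1)=e_1$, and $S$ leaves no room for an extra vector $x$.

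The phenomenon is general. Any $r$-rank LRC with a rank-one codeword $f$ (the paper itself notes that some $\C_A$ have $d=1$) fails the lemma after shortening along $S'=\ker f$ onto a line $S\not\subseteq\ker f$. It is also not an artefact of $\dim S=1$: for $\C_A\subseteq\Fq^{2\times 2n}$ with $S'=\{(x,-x)\}$ and $S=\{(x,0)\}$, one gets $\widetilde\C=\Hom(S,\langle e_1\rangle)$, which has no locality at all.

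What Steps~1--2 (and the paper's argument) actually prove is a weaker statement: $u$ has locality $r$ in $\widetilde\C$ provided $u$ has a helper space $W$ in $\C$ with $u\notin W+S'$. The lemma, and its use in the proof of Theorem~\ref{thm:singleton-lrc}, need such an extra hypothesis or a different argument.
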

\begin{proof}
  Let $u\in S \setminus \{ 0 \}$. Let $\psi \in \Hom(S,U)$ be an injective map such that $\psi^\dagger \circ \iota_{S} = \id_{S}$, and $\psi^\dagger \circ \iota_{S'} =0$.
  By Proposition~\ref{prop:dual_short_punt}, we have
  $$
  \widetilde {\C}^\perp = \Short(\C, S, S')^\perp = \pi_{\psi}(\C^\perp) .
  $$
  Since $\psi$ is injective, the map  $\psi^\dagger$ is surjective (Theorem~\ref{thm:properties_adjoint}). Consider $w\in U \setminus \{ 0 \}$ such that $\psi^\dagger(w)=u$.
  Since $\C$ is $r$-rank locally recoverable, $w$ has locality $r$ in $\C$, and by Proposition~\ref{prop:equiv_localite_dual}, there exist $W\subseteq U$ of dimension $\le r$ with $w\notin W$, and $f_1, \dots, f_m \in \Hom(U,V)$, such that for all $i\in [m]$, $\rowsupp(f_i)\subseteq W$ and $\varphi_i [w]+f_i\in \C^\perp.$ Since $\widetilde{\C}^\perp = \pi_\psi(\C^\perp)$, we get $(\varphi_i [w]+f_i)\circ \psi \in \widetilde{C}^\perp$ for every $i\in [m]$.

  Let now $i\in [m]$.
  Remark that $(\varphi_i [w]+f_i)\circ \psi = \varphi_i[\psi^\dagger(w)] + f_i\circ \psi = \varphi_i[u] +f_i\circ \psi$.
  Besides, $\rowsupp(f_i) = \im(f_i^\dagger) = \ker(f_i)^{\perp} \subseteq W$.  This implies that $W^{\perp}\subseteq \ker(f_i)$. Consider $\rho:U\to U/W^{\perp}$ the canonical surjection.
  Then, by the fundamental theorem on homomorphisms, there exists $g_i \in \Hom( U/W^{\perp}, V)$ such that $f_i = g_i \circ \rho$.

  We have $\rowsupp(f_i\circ \psi) = \rowsupp( g_i \circ \rho \circ \psi ) \subseteq  \rowsupp( \rho \circ \psi)$ by Lemma~\ref{lem:supports_composition}. Note that $\dim_{\Fq}(\rowsupp( \rho \circ \psi))=\dim_{\Fq}(\im( \rho \circ \psi))\leqslant \dim_{\Fq}(\im(\rho)) = r$.

  To summarize, the subspace $T = \rowsupp( \rho \circ \psi) \subseteq U$ has dimension at most $r$, and is such that for every $i\in [m]$, $\rowsupp(f_i\circ \psi)\subseteq T$.
  By Proposition~\ref{prop:equiv_localite_dual}, $\widetilde{\C}$ is $r$-rank locally recoverable.
\end{proof}

We now derive a Singleton-like upper bound for rank locally recoverable codes.
\begin{theorem}\label{thm:singleton-lrc}%
  Let $\C \subseteq \Hom(U, V)$ be a code of dimension $k$, and minimum rank distance $d$. Suppose $\C$ is $r$-rank locally recoverable. Then
  \begin{equation}
    \label{eq:singleton-lrc}%
    d \leqslant n-\left\lceil\frac{k}{m}\right\rceil+2 - \left\lceil\frac{k}{rm}\right\rceil.
  \end{equation}
  A code achieving this bound with equality
  is called an \emph{optimal rank locally recoverable code}.
\end{theorem}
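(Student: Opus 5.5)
We adapt the Gopalan \emph{et al.} argument, working with subspaces of $U$ instead of coordinate sets and with dimensions of punctured codes instead of ranks of column sets. The plan is to build a subspace $T \subseteq U$ such that $\dim_{\Fq}\pi_T(\C) < k$ while $\dim T$ is as large as possible. By Proposition~\ref{prop:equiv_neri}, every $T$ with $\dim T \geq n-d+1$ satisfies $\dim_{\Fq}\pi_T(\C)=k$. Hence any $T$ with $\dim_{\Fq}\pi_T(\C)\le k-1$ forces
\[
\dim T \le n-d, \qquad\text{i.e.}\qquad d\le n-\dim T.
\]
It therefore suffices to construct $T$ with $\dim_{\Fq}\pi_T(\C)\le k-1$ and $\dim T \ge \lceil k/m\rceil-2+\lceil k/(rm)\rceil$.

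The construction is greedy. Start from $T_0=\{0\}$. While $\dim_{\Fq}\pi_{T_j}(\C)\le k-1-m$, pick $u\notin T_j$ such that $\pi_{T_j\oplus\langle u\rangle}(\C)$ is strictly larger than $\pi_{T_j}(\C)$. Such a $u$ exists, because otherwise $\pi_{T_j}(\C)$ would already have dimension $k$. Let $S_u$ be a helper space of $u$ and set
\[
T_{j+1}=T_j+S_u+\langle u\rangle .
\]
By the locality of $u$, adding $u$ does not increase the dimension of the punctured code once $S_u$ is present. Also, adjoining a single vector $w$ raises $\dim\pi(\C)$ by at most $m$, since the restriction map has kernel of codimension at most $\dim V=m$; the same bound shows that the map $\ev_S$ into $V^r$ controls the growth. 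Together these give
\[
\dim T_{j+1}-\dim T_j \ \ge\ \frac{1}{m}\bigl(\dim\pi_{T_{j+1}}(\C)-\dim\pi_{T_j}(\C)\bigr)+1 ,
\]
with the dimension increase of the code at each step at most $rm$. Informally, each step adds at most $r+1$ dimensions to $T$, gains at most $rm$ in code dimension, and gains one free dimension from $u$. The process runs for at least $\lceil k/(rm)\rceil-1$ steps. We then finish by adding single vectors one at a time, each increasing the code dimension by at most $m$, until adding one more would reach $k$. This yields $\dim T\ge \lceil k/m\rceil-1+(\lceil k/(rm)\rceil-1)$, matching~\eqref{eq:singleton-lrc}.

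The main obstacle is the bookkeeping that makes these counts rigorous. When $T_j\cap S_u\ne\{0\}$, or when a step overshoots the threshold $k-1$, the step must be truncated; we intend to handle this exactly as in the Hamming proof, by adding only part of $S_u$ in the last step. The per-step inequality $\dim T_{j+1}-\dim T_j\ge \frac{1}{m}\Delta_j+1$ follows because $T_j+S_u$ already captures $u$, so $u$ contributes one extra dimension of $T$ with no code growth, while each vector of $S_u$ contributes at most $m$ to $\Delta_j$. Summing over the steps and using the termination condition gives the bound. As an alternative route, one could induct via Lemma~\ref{lemme_pour_singleton}, shortening along a helper space plus $u$ and combining with Lemma~\ref{lem:prop_punct}\ref{it:somme_dim} and~\ref{it:dist_min2}; we keep this in reserve.
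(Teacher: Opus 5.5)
\textbf{There is a genuine gap, and the counterexample below shows it cannot be closed.}

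Several parts of your argument are sound. A subspace $T$ with $\dim\pi_T(\C)<k$ gives a nonzero $f\in\C$ with $T\subseteq\ker f$, hence $d\le n-\dim T$. Adjoining one vector raises $\dim\pi_T(\C)$ by at most $m$. And $\dim\pi_{T_j+S_u+\langle u\rangle}(\C)=\dim\pi_{T_j+S_u}(\C)$ holds.

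The failure is the ``$+1$'' in $\dim T_{j+1}-\dim T_j\ge\frac1m\Delta_j+1$, which requires $u\notin T_j+S_u$. In the Hamming proof the analogue $i\notin T\cup R_i$ is automatic. Here, $u\notin T_j$ and $u\notin S_u$ do not exclude $u=t+s$ with $t\in T_j$ and $s\in S_u$. In that case $T_{j+1}=T_j+S_u$, and you only get $\Delta_j\le m(\dim T_{j+1}-\dim T_j)$. This is a different phenomenon from the overlap $T_j\cap S_u\neq\{0\}$ that you plan to truncate away. Choosing $u$ or $S_u$ more cleverly cannot repair it in general.

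Take $n=6$, $m\ge 2$, and $U=W\oplus\langle y,z\rangle$ with $W=\langle w_1,\dots,w_4\rangle$. Let $\psi$ be an invertible endomorphism of $V$ that is not a scalar multiple of $\id$, and set $\C=\{f\in\Hom(U,V) : f(z)=\psi(f(y))\}$.
\begin{itemize}
\item $k=5m$, and $d=1$, witnessed by rank-one maps vanishing on $\langle y,z\rangle$.
\item $\C$ is nondegenerate, since $a\,\id+b\,\psi\neq 0$ for $(a,b)\neq(0,0)$.
\item Every nonzero $u=w+ay+bz$ with $w\in W$ has locality $2$. Use the helper $\langle w,y\rangle$ if $b\neq 0$, because $f(u)=f(w)+(a\,\id+b\psi)(f(y))$. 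Use $\langle w,z\rangle$ if $b=0\neq a$. Use $\langle y,w-z\rangle$ if $a=b=0$, because $f(w)=f(w-z)+\psi(f(y))$.
\end{itemize}
The right-hand side of~\eqref{eq:singleton-lrc} is $6-5+2-3=0<d$. So the stated bound fails for this nondegenerate $2$-rank locally recoverable code. Equivalently, the $T$ you need would have $\dim T\ge\lceil k/m\rceil-2+\lceil k/(rm)\rceil=6=n$ and $\dim\pi_T(\C)<k$, which is impossible.

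In your greedy on this code, step one can give $T_1=\langle y,z,w_1\rangle$. Then $u=w_2$ with helper $\langle y,w_2-z\rangle$ lies in $T_1+S_u$. $T$ gains one dimension while $\dim\pi_T(\C)$ gains $m$.

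The paper's proof is your reserve route: iterated shortening along $W_j\oplus\langle u^{(j)}\rangle$, with Lemma~\ref{lemme_pour_singleton} keeping $r$-locality. It has the same hole. The proof of Lemma~\ref{lemme_pour_singleton} shows the new helper $T$ has $\dim T\le r$ but never checks $u\notin T$. On the example, shortening along $\langle y,z,w_1\rangle=\langle y,w_1-z\rangle\oplus\langle w_1\rangle$ yields all of $\Hom(\langle w_2,w_3,w_4\rangle,V)$, which has no locality at all.

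Neither route can be completed from $r$-rank locality and nondegeneracy alone. You would need an extra hypothesis guaranteeing helpers with $u\notin T_j+S_u$, or, in the paper's language, that the shortened codes remain $r$-rank locally recoverable.
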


\begin{proof}
  

  The proof works as follows. We build recursively a sequence of nonzero codes $\C^{(j)} \subseteq \Hom(U^{(j)},V)$ of dimension $k_j$ and minimum distance $d_j$ satisfying
  \begin{enumerate}[label=(\roman*)]
  \item\label{it:length}$n_j \eqdef \dim_{\Fq}(U^{(j)}) \geqslant n-j(r+1)$,
  \item\label{it:dim} $k_j\geqslant k-jrm$,
  \item\label{it:ineq} $n_j-\frac{k_j}{m}\leqslant n-\frac{k}{m}-j$,
  \item\label{it:dist} $d_j\geqslant d_{j-1}$,
  \item\label{it:LRC} $\C^{(j)}$ is $r$-rank locally recoverable.
  \end{enumerate}
  Then, we apply the classical Singleton bound on the last code we obtain, and we derive the expected result.  

  Set $\C^{(0)} = \C$ and $U^{(0)}=U$. Thus, $n_0 = n$, $k_0 = k$ and $d_0=d$.
  For $j\geqslant 1$, if we assume that $\C^{(j-1)}$ is a code satisfying hypotheses~\ref{it:length}--\ref{it:LRC}, we define $\C^{(j)}$ in the following way.
  \begin{itemize}
  \item Pick a nonzero $u^{(j)}\in U^{(j-1)}$.
  \item Since $\C^{(j-1)} \subseteq \Hom(U^{(j-1)},V)$ is $r$-rank locally recoverable, there exists $W_j\subseteq U^{(j-1)}$ of dimension $r_j \le r$, such that $u^{(j)}\notin W_j$ and
    \begin{equation}\label{eq:locality_j-1}
      \dim_{\Fq}\left(\pi_{W_j}(\C^{(j-1)})\right) = \dim_{\Fq}\left(\pi_{W_j\oplus \vect{u^{(j)}}{\Fq}} (\C^{(j-1)})\right) .
    \end{equation}
  \item  Finally, set  $\C^{(j)} \eqdef \text{Short}\left(\C^{(j-1)},W_j', W_j\oplus \vect{u^{(j)}}{\Fq}\right)$, for some subspace $W_j'\subseteq U^{(j-1)}$ such that $W'_j \oplus W_j\oplus \vect{u^{(j)}}{\Fq} =U^{(j-1)}$.
  \end{itemize}

  We will now prove by induction that for every $j\in \{0, \dots, \lceil \frac{k}{rm}\rceil-1\}$, the code $\C^{(j)}$ indeed satisfies the hypotheses~\ref{it:length}--\ref{it:LRC}.  For $j=0$, these properties are clearly satisfied by $\C^{(0)}=\C$.

  Now, fix $j\in \{1, \dots, \lceil \frac{k}{rm}\rceil-1\}$ and assume that $\C^{(j-1)}$ satisfies the hypotheses.
  We have
  \[
  n_j= n_{j-1}-(r_j+1)\geqslant n_{j-1}-(r+1).
  \]
  By Lemma~\ref{lem:prop_punct}, the code $\C^{(j)}= \text{Short}\left(\C^{(j-1)}, W_j', W_j\oplus \vect{u^{(j)}}{\Fq}\right) $ has dimension
  \[
  \begin{aligned}
    k_j
    & =k_{j-1}-\dim_{\Fq}(\pi_{W_j\oplus \vect{ u^{(j)} }{\Fq}}(\C^{(j-1)})) \\
    & =k_{j-1}-\dim_{\Fq}(\pi_{W_j}(\C^{(j-1)}))   \\
    & \geqslant k_{j-1}-r_jm \\
    & \geqslant k_{j-1}-rm.
  \end{aligned}
  \]
  where the last inequality follows from $\dim_{\Fq}(\pi_{W_j}(\C^{(j-1)})) \leq r_jm $.  By induction hypothesis, this proves that~\ref{it:length} and~\ref{it:dim} hold.

  Moreover,
  \[
  n_j-\frac{k_j}{m} = n_{j-1} -  (r_j+1) - \frac{k_{j-1}}{m} + \frac{\dim_{\Fq}(\pi_{W_j}(\C^{(j-1)} ))}{m} \leqslant  n_{j-1} - \frac{k_{j-1}}{m} -1,
  \]
  since $\dim_{\Fq}(\pi_{W_j}(\C^{(j-1)} ))\leqslant r_jm$. By induction hypothesis, this proves~\ref{it:ineq}.

  The code $\C^{(j)}$ being a shortening of $\C^{(j-1)}$, Lemma~\ref{lem:prop_punct} ensures that $d_j\geqslant d_{j-1}$, which proves~\ref{it:dist}. Finally,~\ref{it:LRC} is a direct consequence of Lemma~\ref{lemme_pour_singleton}.

  Now, we apply the Singleton bound to the last nonzero code of the sequence of codes $(\C^{(j)})_j$. By~\ref{it:dim}, we have $k_j \geq 1$ whenever $j\leqslant \frac{k-1}{rm}$. Then, the largest integer $j_{\max}$ satisfying this condition is $j_{\max} = \left\lfloor \frac{k-1}{rm} \right\rfloor = \left\lceil \frac{k}{rm}\right\rceil-1.$ Therefore, we can iterate this procedure at least until we build the code $\C^{(j_{\max})}$.

  We now apply the Singleton bound, as well as~\ref{it:ineq} and~\ref{it:dist}, to $\C^{(j_{\max})}$, in order to get the desired result. If $m \geqslant n_{j_{\max}}$, we directly get
  \[
  d \le d_{j_{\max}}\leqslant n_{j_{\max}}-\frac{k_{j_{\max}}}{m}+1\leqslant n-\frac{k}{m}-j_{\max}+1 =   n-\frac{k}{m}-\left\lceil\frac{k}{rm}\right\rceil+2.
  \]
  Therefore, since $d$ is an integer,
  \[
  d \le \left\lfloor n - \frac{k}{m}-\left\lceil \frac{k}{rm}\right\rceil +2 \right\rfloor =n-\left\lceil\frac{k}{m}\right\rceil+2 - \left \lceil\frac{k}{rm}\right\rceil.
  \]
  Suppose now $m\leqslant n_{j_{\max}}$. The Singleton bound gives $d_{j_{\max}}\leqslant m-\frac{k_{j_{\max}}}{n_{j_{\max}}}+1$. Notice now that $m-\frac{k_{j_{\max}}}{n_{j_{\max}}}\leqslant  n_{j_{\max}}-\frac{k_{j_{\max}}}{m}$ whenever $m \leqslant n_{j_{\max}}$ and $k_{j_{\max}} \le n_{j_{\max}}m$, which is satisfied in our context. Hence, we get the expected bound in the same way as for $m\geqslant n_{j_{\max}}$.
\end{proof}

\begin{remark}
  If $m\geqslant n$ and $r=\frac{k}{m}$, the bound~\eqref{eq:singleton-lrc} becomes the classical Singleton bound, as it is the case in the Hamming metric. Hence, optimal rank LRC codes with locality $r=\frac{k}{m}$ are MRD codes.

  Kadhe \textit{et al.} proved a Singleton-like bound for their definition of rank locality \cite[Theorem 1]{Kadhe_et_al19}. In their case, they assume that $m$ divides $k$, in which case Theorem~\ref{thm:singleton-lrc} gives the exact same bound. Their bound is a direct consequence of the Singleton bound for LRCs in the Hamming metric: a rank-LRC codes $\C$ gives rise to a code $\C_H$ in the Hamming metric with the \emph{same locality}. The relation between our notion of locality and the one in the Hamming metric is not as straightforward (see Lemma~\ref{lem:locality_Hamming_to_rank}, for instance). This is why Theorem~\ref{thm:singleton-lrc} has to be proved from scratch.
\end{remark}

A subfamily of codes given in Example~\ref{ex:famille-localite-1} reaches the Singleton-like bound of Theorem~\ref{thm:singleton-lrc}.

\begin{example}
  Recall that the codes $\C_A$, defined as
  \[
  \C_A = \{ ( M | AM ) \mid M \in \Fq^{m \times n} \} \subseteq \Fq^{m \times 2n}
  \]
  for $n, m \ge 1$ and $A \in \GL_m(\Fq)$, have dimension $mn$ and locality $1$. The Singleton-like bound of Theorem~\ref{thm:singleton-lrc} hence ensures that
  \[
  \dr(\C_A) \le 2n - n - \left\lceil\frac{n}{1}\right\rceil + 2 = 2\,.
  \]
  Some of these codes (e.g., if $A$ is the identity matrix) have minimum rank-distance $1$. Let us exhibit subfamilies with minimum rank-distance $2$. For this sake, fix $\lambda \in \Fq$ and assume that $\lambda$ is not a square in $\Fq$. Fix $A \in \Fq^{m \times m}$ such that $A^2 = \lambda I_m$, where $I_m \in \Fq^{m \times m}$ is the identity matrix.

  Let us now prove that any $C = ( M | AM ) \in \C_A$ has rank $\ge 2$. If $\rk(M) \ge 2$ this is clear, so assume $\rk(M) = 1$. Let $v \in \Fq^m$ be a nonzero column of $M$. Since $A$ is invertible, it suffices to prove that $v$ and $A v$ are not collinear. If they were, the vector $v$ would be an eigenvector of $A$, and the corresponding eigenvalue would be a root of the minimal polynomial $X^2 - \lambda$ of $A$. But $\lambda$ is not a square in $\Fq$, so this leads to a contradiction.
\end{example} 

In the $\Fqm$-linear case, we then obtain the following bound.

\begin{corollary}\label{coro:Singleton-lrc-fqm}
  Let $\C \subseteq  \Fqm^n$ be an $\Fqm$-linear code of dimension $k$, and minimum rank distance $d$. Suppose $\C$ is $r$-rank locally recoverable. Then
  \[
  d\leqslant n-k+2 - \left \lceil\frac{k}{r}\right\rceil.
  \]
\end{corollary}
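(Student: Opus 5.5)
We derive this corollary from Theorem~\ref{thm:singleton-lrc} by passing to the associated matrix code. Fix an $\Fq$-basis $\Gamma$ of $\Fqm$ and consider $\Gamma(\C)\subseteq \Fq^{m\times n}$, which we view inside $\Hom(U,V)$ with $\dim U=n$ and $\dim V=m$. We check its three parameters and its locality, and then substitute into \eqref{eq:singleton-lrc}.

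\emph{Parameters.} By Proposition~\ref{prop:vec_to_mat}, $\Gamma(\C)$ has $\Fq$-dimension $mk$. Its minimum rank distance equals $d$, because $\wr(\bm c)=\rk(\Gamma(\bm c))$ for every $\bm c\in\Fqm^n$.

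\emph{Locality.} Since $\C$ is $r$-rank locally recoverable in the sense of Definition~\ref{def lrc rank}, the proposition following that definition shows that $\Gamma(\C)$ is $r$-rank locally recoverable as a matrix code (Definition~\ref{def:localite-matrices}). That proposition relies on Lemma~\ref{lem: dim passage gamma poiconne} and on the fact that $\Fqm$-dimensions are multiplied by $m$ under $\Gamma$. Through the identification of $\Fq^{m\times n}$ with $\Hom(U,V)$ and Lemma~\ref{lem:redefinition-localite}, this is exactly the locality hypothesis required in Theorem~\ref{thm:singleton-lrc}.

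\emph{Substitution.} Applying Theorem~\ref{thm:singleton-lrc} to $\Gamma(\C)$ with dimension $mk$ gives
\[
d\leqslant n-\left\lceil\frac{mk}{m}\right\rceil+2-\left\lceil\frac{mk}{rm}\right\rceil = n-k+2-\left\lceil\frac{k}{r}\right\rceil,
\]
which is the claimed bound.

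The only step needing care is confirming that the locality notion for $\Fqm$-linear codes in Definition~\ref{def: localite vecteurs} transfers to the matrix notion used in Theorem~\ref{thm:singleton-lrc}. This is handled by the proposition cited above. The helper matrices $B$ have entries in $\Fq$, so the same $B$ serves for both codes, and no duality subtleties arise: Remark~\ref{rmk: diff dual} concerns $\Gamma(\C^\perp)$ versus $\Gamma(\C)^\perp$, and dual codes play no role here.
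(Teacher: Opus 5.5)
Your proposal is correct and is the paper's proof. The paper also fixes an $\Fq$-basis $\Gamma$, notes that $\Gamma(\C)$ has $\Fq$-dimension $mk$, minimum distance $d$ and is still $r$-rank locally recoverable, and then substitutes $K=mk$ into Theorem~\ref{thm:singleton-lrc}. Your justification of the locality transfer, using the same $\Fq$-matrices $B$ for both codes, just spells out the paper's one-line remark that ``this code is also $r$-LRC''.
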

\begin{proof}
  By Proposition~\ref{prop:vec_to_mat}, by fixing any basis $\Gamma$ of $\Fqm/Fq$, one can associate to $\C$ a code $\Gamma(\C) \subseteq \Hom(\Fq^n, \Fq^m)$ of dimension $K = km$, and minimum distance $d$. This code is also $r$-LRC, so by Proposition~\ref{thm:singleton-lrc}, $d\leqslant n - \frac{K}{m}+2-\left\lceil\frac{K}{rm}\right\rceil = n-k+2-\left\lceil\frac{k}{r}\right\rceil$.
\end{proof}

\begin{remark}
As in the Hamming metric, if $r=k$ we recover the classical Singleton bound, and optimal rank LRC codes with locality $r=k$ are MRD codes.
\end{remark}

\section{Rank-metric analogues of Tamo-Barg optimal LRCs}

In this section, we provide a construction of optimal rank-LRC inspired by Tamo and Barg's construction~\cite{Tamo_2014} in the Hamming metric. Their construction heavily relies on polynomials and Reed--Solomon codes. Likewise, ours will use $q$-polynomials and Gabidulin codes. We recall these notions, describe our construction and provide one instantiation.

\subsection{\texorpdfstring{$q$}{q}-polynomials and Gabidulin codes}

\begin{definition}[Ore polynomials] \label{def:ore}
  Let $m \geq 2$. We denote by $\Fqm\{\tau\}$ the noncommutative ring of \emph{Ore polynomials}
  (also known as \emph{skew polynomials}, or \emph{twisted polynomials}) over $\Fqm$ defined as
  \[\Fqm\{\tau\}\eqdef\left\{ \sum_{i=0}^n c_i \tau^i \,\Big|\, n \geq 0, c_i \in \Fqm\right\},\]
  endowed with the classical additive law, and the multiplication defined by $\tau a= a^q \tau$ for every $a \in \Fqm$.
\end{definition}

In the previous definition, the letter $\tau$ is just a formal variable, without further signification. However, it should be understood as the $q$-Frobenius endomorphism and Ore polynomials then correspond to polynomials in the $q$-Frobenius. This motivates the following terminology.

\begin{definition}
  For $f = \sum_i a_i \tau^i \in \Fqm\set{\tau}$, we define the \emph{$q$-degree} of $f$ as the largest integer $i$ such that $a_i \neq 0$ and we denote it by $\deg_q(f)$.
\end{definition}

By~\cite{Ore33}, the ring $\Fqm\set{\tau}$ is both left and right-Euclidean.

Regarding $\tau$ as an endomorphism of $\Fqm$, we can map Ore polynomials onto endomorphisms of $\Fqm$. More precisely, since $\tau^m=\id_{\Fqm}$, two Ore polynomials induce the same endomorphism of $\Fqm$ if and only if they are congruent modulo $\tau^m-\id$. In particular, every $\Fq$-linear endomorphism of $\Fqm$ is represented by a unique Ore polynomial of $q$-degree less than $m$.

For $f\in \Fqm\set{\tau}$, let $\overline{f}$ denote its unique representative modulo $\tau^{m}-1$ of $q$-degree smaller than $m$.

\begin{proposition}\label{prop:kernel}
  Let $f\in \Fqm\set{\tau}$ be such that $\overline{f}\neq 0$. Then, $\dim_{\Fq} \ker f \leq \deg_q(\overline{f})$.
\end{proposition}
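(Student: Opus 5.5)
Write $\overline{f}=\sum_{i=0}^{t} a_i\tau^i$ with $t=\deg_q(\overline{f})<m$ and $a_t\neq 0$. Since $\tau^m$ acts as the identity on $\Fqm$, $f$ and $\overline{f}$ induce the same $\Fq$-linear endomorphism of $\Fqm$. Hence $\ker f=\ker\overline{f}$, and it suffices to bound the kernel of $\overline{f}$. This kernel is the set of roots in $\Fqm$ of the linearized polynomial
\[
L(X)=\sum_{i=0}^{t} a_i X^{q^i}\in\Fqm[X].
\]
It is also an $\Fq$-subspace of $\Fqm$, because the map $x\mapsto L(x)$ is $\Fq$-linear.

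The key step is a root count. The polynomial $L$ is nonzero, since $a_t\neq 0$, and it has ordinary degree $q^t$. A nonzero polynomial over a field has at most as many roots as its degree, so
\[
|\ker\overline{f}|\le q^t .
\]
Write $\delta=\dim_{\Fq}\ker\overline{f}$. Then $|\ker\overline{f}|=q^\delta$, so $q^\delta\le q^t$, which gives $\delta\le t=\deg_q(\overline{f})$, as claimed.

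The only delicate point is that we must pass to $\overline{f}$ first. If we counted roots of the linearized polynomial attached to $f$ itself, we would get the weaker bound $\deg_q(f)$. Also, a nonzero $f$ can induce the zero map, for example $f=\tau^m-1$; this case is excluded by the hypothesis $\overline{f}\neq 0$. Using uniqueness of the representative of $q$-degree less than $m$, as recalled just before the statement, the remaining argument is routine.
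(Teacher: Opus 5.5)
Your proof is correct and is essentially the argument the paper relies on: the paper gives no proof of its own but cites \cite[Theorem~3.50]{Lidl_Niederreiter_1996}, namely that the roots of a nonzero linearized polynomial form an $\Fq$-subspace. Combined with the bound of at most $q^{\deg_q(\overline{f})}$ roots, this is exactly your counting argument, and your preliminary reduction from $f$ to $\overline{f}$ (using that $\tau^m$ acts as the identity on $\Fqm$) correctly explains the role of the hypothesis $\overline{f}\neq 0$.
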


The above proposition is a standard property of linearized polynomials, see for instance~\cite[Theorem~3.50]{Lidl_Niederreiter_1996}.

For every $\Fq$-linear subspace $U$ of $\Fqm$, a family of $q$-polynomials defines a rank-metric code in $\Hom(U,\Fqm)$ as follows.
\begin{definition}[$q$-polynomial code]\label{def:q-pol-codes}
  Let $U$ be an $\Fq$-linear subspace of $\Fqm$ and $\mathcal{F}$ be an $\Fq$-linear subspace of $\Fqm\set{\tau}$. The \emph{$q$-polynomial code associated to $\mathcal{F}$ over $U$} is of the form
  \[
 \pi_U \left(\mathcal{F} \right) = \set{\overline{f}\circ  \iota_U \mid f \in \mathcal{F}} \subseteq \Hom(U,\Fqm).
  \]
\end{definition}
If $\mathcal{F}$ is an $\Fqm$-linear subspace of $\Fqm\set{\tau}$, then $\pi_U(\mathcal{F})$ is naturally an $\Fqm$-linear code. For spaces of $q$-polynomials of sufficiently small $q$-degree, the previous proposition implies that the restriction map $P \mapsto P\circ \iota_U$ is injective and determines a lower bound on the minimum distance of the associated code.

\begin{lemma}\label{lem:parameters_evaluation_q-pol}
  Let $U$ be an $\Fq$-linear subspace of $\Fqm$ of dimension $n$, and $\mathcal{F}$ be an $\Fq$-linear subspace of $\Fqm\set{\tau}$. Assume $\delta\eqdef\max\set{\deg_q(f), f \in \mathcal{F}}< n$. Then, $\pi_U(\mathcal{F})$ has the same $\Fq$-dimension as $\mathcal{F}$, and its minimum rank distance is at least $n-\delta$.
\end{lemma}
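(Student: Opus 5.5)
The argument rests on Proposition~\ref{prop:kernel}, applied to differences of polynomials in $\mathcal{F}$. Since $\mathcal{F}$ is an $\Fq$-linear subspace, for $f,g\in\mathcal{F}$ the difference $f-g$ again lies in $\mathcal{F}$. Hence $\deg_q(f-g)\le\delta<n\le m$, because $U\subseteq\Fqm$ forces $n\le m$. The same bound $\deg_q(f)\le\delta<m$ shows that every nonzero $f\in\mathcal{F}$ is already its own reduced representative, i.e.\ $\overline{f}=f\neq 0$.

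For the dimension claim I will show that the $\Fq$-linear map $\mathcal{F}\to\Hom(U,\Fqm)$, $f\mapsto \overline{f}\circ\iota_U$, is injective. Take $f\in\mathcal{F}$ in its kernel. Then $U\subseteq\ker \overline{f}$, so $\dim_{\Fq}\ker\overline{f}\ge n$. If $f\neq 0$, then $\overline{f}=f\neq0$, and Proposition~\ref{prop:kernel} gives $\dim_{\Fq}\ker \overline{f}\le \deg_q(f)\le\delta<n$. This is a contradiction, so $f=0$. By injectivity, $\dim_{\Fq}\pi_U(\mathcal{F})=\dim_{\Fq}\mathcal{F}$.

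For the distance bound, let $f\in\mathcal{F}$ be nonzero. The restriction $\overline{f}\circ\iota_U$ has kernel $\ker\overline{f}\cap U$. By Proposition~\ref{prop:kernel}, this kernel has dimension at most $\deg_q(f)\le\delta$. Rank--nullity on $U$ then gives
\[
\wr(\overline{f}\circ\iota_U)=n-\dim_{\Fq}(\ker\overline{f}\cap U)\ge n-\delta.
\]
By injectivity every nonzero codeword arises from a nonzero $f$, so the minimum over nonzero codewords is at least $n-\delta$.

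The only real subtlety is making sure the reduction modulo $\tau^m-1$ does not change degrees or kill a nonzero polynomial. This is exactly what the inequality $\delta<n\le m$ guarantees, so I will state it explicitly at the start.
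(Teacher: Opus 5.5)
Your proof is correct and follows the same route as the paper. Like the paper, you use $\delta<n\le m$, apply Proposition~\ref{prop:kernel} to get injectivity of $f\mapsto\overline{f}\circ\iota_U$, and combine rank--nullity with the same kernel bound to get $\dr\ge n-\delta$; you simply make explicit details the paper leaves implicit, such as $\overline{f}=f$ for every $f\in\mathcal{F}$.
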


\begin{proof}
  Since $\delta < n \leq m$, the map $f \mapsto \pi_U(\overline{f})$ is injective over $\mathcal{F}$ by Proposition~\ref{prop:kernel}. This proves the equality of dimensions. The lower bound on the minimum rank distance comes from the rank-nullity theorem and Proposition~\ref{prop:kernel}.  
\end{proof}

The most typical examples of $q$-polynomial codes are Gabidulin codes, originally introduced by Delsarte~\cite{Delsarte} and Gabidulin~\cite{Gabidulin}.
For a positive integer $k$, we denote by $\F_{q^m}\{\tau\}_{<k}$ the vector space of $q$-polynomials of $q$-degree less than $k$.

\begin{definition}
  For an $\Fq$-linear subspace $U$ of $\Fqm$ of dimension $n\leq m$, the associated \emph{Gabidulin code} of dimension $k \le n$ is
  \[
  \Gab_k(U) \eqdef \pi_U\big(\F_{q^m}\{\tau\}_{<k}\big).
  \]
\end{definition}

Gabidulin codes are rank-metric analogues of Reed--Solomon codes. By Lemma~\ref{lem:parameters_evaluation_q-pol}, they are MRD codes, so they share with Reed--Solomon codes the remarkable property of being optimal with respect to the Singleton bound. Besides, for every $\Fq$-linear subspace $U$ of $\Fqm$ of dimension $n\leq m$,  the code $\Gab_k(U)$ (with $k < n$) is $k$-rank locally recoverable thanks to Corollary~\ref{coro:localite_MRD}. In particular, for every nonzero $u \in U$, every $k$-dimensional subspace $S$ of $U$ such that $u\notin S$ is a helper space for $u$.

\subsection{Construction}

We build the rank-metric analogue of the general construction of optimal $r$-LRCs  in the Hamming metric given by Tamo and Barg in~\cite[\textsection A]{Tamo_2014}. This seminal construction consists in designing an evaluation domain $\mathcal{A} = A_1 \cup \dots \cup A_t \subseteq \Fq$, with pairwise disjoint subsets $A_i$ of size $r+1$, along with polynomials $f \in \Fq[x]$ that evaluate like polynomials of degree less than $r$ on each of the $A_i$. For any of these polynomials $f$, the vector of evaluations of $f$ over $A_i$ lies in a Reed--Solomon code of dimension $r$, therefore each of its coordinates can be recovered by polynomial interpolation.

Tamo and Barg's trick to design such polynomials is to rely on a so-called \emph{good polynomial for the partition $(A_i)_{i \in [t]}$}: a polynomial $g$ of degree $r+1$, that is constant on each of the $A_i$. Then polynomials $f$ of the form $f=\sum_{j\geq 0} f_j g^j$ with $\deg(f_j) < r$ have the desired property.

We mimic this idea in the context of $q$-polynomials. Now, the evaluation domain is an $\Fq$-vector space $U$. Given a \enquote{partition} of $U$ in $(r+1)$-dimensional subspaces $U_i$ (more precisely an $r$-spread of the projectivized space $\mathrm{PG}(U)$), we aim to design a family of $q$-polynomials that evaluate as $q$-polynomials of degree less than $r$ on each $U_i$. This way, we are able to recover their values at some $u \in U_i$ thanks to any subspace $S \subseteq U_i$ such that $U_i= S \oplus \vect{ u }{\Fq}$. To do so, we also rely on a good $q$-polynomial as described in the next theorem.

\begin{theorem}\label{thm:TB}%
  Let $U \subseteq \Fqm$ of dimension $n$ and $V=\Fqm$. Let $r$ be a positive integer such that $r+1$ divides $n$. Assume there exist
  \begin{enumerate}[label=(\alph*)]
  \item\label{it:partition} a family $\mathcal{U}$ of $t= \frac{q^n-1}{q^{r+1}-1}$ $\Fq$-subspaces $U_1,\:\dots,\:U_t$ of $U$ of dimension $r+1$ such that $U =\bigcup_{i=1}^t U_i$ (so $U_{i} \cap U_j=\set{0}$ for every $i\neq j$),
  \item\label{it:good_poly} a $q$-polynomial $g$ of degree $r+1$ such that for every $i \in [t]$, there exists $\lambda_i \in \Fqm$ such that $g_{|U_i} = \lambda_i \id_{U_i}$.
  \end{enumerate}
  Let $k<n$ such that $r$ divides $k$ and $k+\frac{k}{r}-2 < n$. Define
  \[
  \mathcal{F} \eqdef \set{\sum_{j=0}^{\frac{k}{r}-1} f_j \circ g^j  \,\big|\, f_j \in \Fqm\set{\tau}_{<r}}\,.
  \]

  Then, the Tamo--Barg-like code
  \[
  \TB(\mathcal{U}, g, k) \eqdef \pi_U(\mathcal{F})\subseteq \Hom(U,\Fqm)
  \]
  is an optimal rank $r$-locally recoverable code of dimension $k$ over $\Fqm$.
\end{theorem}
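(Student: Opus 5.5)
The proof has three parts: the dimension of the code, its locality, and a lower bound on its minimum rank distance. Together these match Corollary~\ref{coro:Singleton-lrc-fqm}. Throughout, "degree" means $q$-degree. Every $f\in\mathcal{F}$ has degree at most $(r-1)+(\frac{k}{r}-1)(r+1)=k+\frac{k}{r}-2$, since the degree is additive under composition. By hypothesis this is $<n\le m$.

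\textbf{Dimension.} The map $(f_0,\dots,f_{k/r-1})\mapsto \sum_j f_j\circ g^j$ is injective. Indeed, $f_j\circ g^j$ has degree in $[j(r+1),\,j(r+1)+r-1]$, and these intervals are pairwise disjoint. So a nonzero tuple gives a nonzero polynomial whose leading term comes from the largest $j$ with $f_j\neq 0$. Hence $\dim_{\Fqm}\mathcal{F}=r\cdot\frac{k}{r}=k$. Since every degree in $\mathcal{F}$ is $<n$, Lemma~\ref{lem:parameters_evaluation_q-pol} shows that $\pi_U$ is injective on $\mathcal{F}$. Because it is $\Fqm$-linear, the code has $\Fqm$-dimension $k$.

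\textbf{Minimum distance.} Lemma~\ref{lem:parameters_evaluation_q-pol} gives $d\ge n-\delta\ge n-(k+\frac{k}{r}-2)=n-k+2-\frac{k}{r}$. Corollary~\ref{coro:Singleton-lrc-fqm} gives $d\le n-k+2-\lceil k/r\rceil$, and $r\mid k$, so the two bounds are equal once locality $r$ is established. This yields optimality.

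\textbf{Locality.} Take a nonzero $u\in U$. By \ref{it:partition}, $u\in U_i$ for a unique $i$. On $U_i$, $g^j$ acts as multiplication by the scalar $\lambda_i^j\in\Fqm$. So for $x\in U_i$,
\[
f(x)=\sum_j f_j(\lambda_i^j x)=h_i(x),\qquad h_i\eqdef\sum_j f_j\circ(\lambda_i^j\cdot),
\]
and $h_i$ is a $q$-polynomial of degree $<r$, since composing on the right with a scalar multiplication preserves degree. Now pick any $r$-dimensional $S\subseteq U_i$ with $U_i=S\oplus\vect{u}{\Fq}$. The restrictions $\pi_{U_i}(\mathcal{F})$ lie in $\Gab_r(U_i)$, which has length $r+1$. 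By Corollary~\ref{coro:localite_MRD}, or directly because a degree-$<r$ $q$-polynomial is determined by its values on an $r$-dimensional space (Proposition~\ref{prop:kernel}), we get
\[
\dim\pi_S(\C)=\dim\pi_{S\oplus\vect{u}{\Fq}}(\C).
\]
Explicitly: if $f_{|S}=0$, then $h_i$ vanishes on the $r$-dimensional space $S$. A nonzero $h_i$ of degree $<r$ has kernel of dimension $<r$, so $h_i=0$ and hence $f(u)=0$. Therefore the evaluation map $\ev_S$ from the proof of Proposition~\ref{prop:equiv_localite_dual} is injective on $\pi_{S\oplus\vect{u}{\Fq}}(\C)$, which gives the required dimension equality. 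The vector version is Proposition~\ref{equiv mot dual}.

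The main point needing care is the restriction identity $f_{|U_i}=h_i$. One must check that $f_j\circ g^j$ evaluated on $U_i$ equals $f_j(\lambda_i^j x)$. This holds because $g$ maps $U_i$ into itself via $\lambda_i$, so iterating is legitimate; if $\lambda_i\neq0$ one could also note $\lambda_i U_i=U_i$, but this is not needed. The rest is degree bookkeeping.
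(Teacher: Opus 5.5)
Your route is the paper's. The dimension comes from the $q$-degree count; your disjoint-interval argument is a concrete form of the paper's appeal to uniqueness of right division by $g$. The distance comes from Lemma~\ref{lem:parameters_evaluation_q-pol} matched against Corollary~\ref{coro:Singleton-lrc-fqm}. Locality comes from showing that every codeword agrees on $U_i$ with a $q$-polynomial of $q$-degree $<r$; your kernel argument needs only $\pi_{U_i}(\C)\subseteq\Gab_r(U_i)$, whereas the paper claims equality and uses Corollary~\ref{coro:info_space_MRD}. The dimension and distance parts are fine. The problem is the step you singled out, and the reasons you give for it are false. Hypothesis (b) only says $g(x)=\lambda_i x$ for $x\in U_i$, with $\lambda_i\in\Fqm$. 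It does not give $g(U_i)\subseteq U_i$, and in general $\lambda_iU_i\neq U_i$. Since $g$ is only $\Fq$-linear, $g(\lambda_i x)$ need not equal $\lambda_i g(x)$, so $g^j\circ\iota_{U_i}=\lambda_i^j\iota_{U_i}$ fails. Concretely, take $q=2$, $r=1$, $U=\Fqm=\F_{64}$, the Desarguesian spread $U_i=\alpha_i\F_4$ of Example~\ref{ex:desarguian}, $g=\tau^2$ and $k=3$. This is admissible since $3+3-2<6$, and $g^2$ occurs in $\mathcal{F}$. For $y\in\F_4$ we have $g(\alpha_i y)=\alpha_i^3\cdot\alpha_i y$, so $\lambda_i=\alpha_i^3$. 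But $g^2(\alpha_i y)=\alpha_i^{16}y=\alpha_i^{15}\cdot\alpha_i y$, whereas $\lambda_i^2=\alpha_i^6$. For $\alpha_i$ primitive these differ, and $\lambda_iU_i=\alpha_i^4\F_4\neq U_i$. So your formula $f(x)=\sum_j f_j(\lambda_i^j x)$ on $U_i$ is wrong in general.

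What the argument actually requires is that each $g^j\circ\iota_{U_i}$, for $0\le j<k/r$, be multiplication by \emph{some} scalar $\mu_{i,j}\in\Fqm$. Then $h_i\eqdef\sum_j f_j\circ(\mu_{i,j}\,\cdot)$ still has $q$-degree $<r$, and the rest of your proof goes through unchanged. In the Desarguesian instance this holds, but for a different reason than the one you give. Here $g=\eta\tau^{r+1}+\nu$, so $g$ and all its powers lie in the subring $\Fqm\{\tau^{r+1}\}$. Since $\tau^{\ell(r+1)}(\alpha y)=\alpha^{q^{\ell(r+1)}-1}\cdot\alpha y$ for $y\in\F_{q^{r+1}}$, every element of that subring, in particular every $g^j$, acts on each $\alpha\F_{q^{r+1}}$ by a scalar. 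For an arbitrary pair $(\mathcal{U},g)$ satisfying only (a)--(b), nothing you (or the paper) write yields this property, and it has to be established separately. One sufficient extra condition is $\lambda_iU_i\subseteq U_i$, under which your iteration is valid. Another is that $g$ acts by a scalar on $\lambda U_i$ for every $\lambda\in\Fqm^*$. The paper's own proof asserts the same identity $f_{|U_i}=\sum_j f_j\circ(\lambda_i^j\id_{U_i})$ without justification, so this gap is inherited rather than new. Still, your explicit claims that $g$ maps $U_i$ into itself and that $\lambda_iU_i=U_i$ are false under the reading of (b) that the paper's own example requires.
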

\begin{proof}
  It is clear that $\dim_{\Fqm} \mathcal{F} \leq \frac{k}{r} \cdot r = k$. Now, by unicity of the right-division in $\Fqm\set{\tau}$ by $g$, we get $\dim_{\Fqm} \mathcal{F}=k$. Moreover, all the $q$-polynomials in $\mathcal{F}$ have $q$-degree bounded from above by
  \[
  \delta\eqdef r-1+\left(\frac{k}{r}-1\right)(r+1)=k+\frac{k}{r}-2<n.
  \]
  Therefore, by Lemma~\ref{lem:parameters_evaluation_q-pol}, the code $\TB(\mathcal{U}, g, k)$ has dimension $k$ and its minimum distance satisfies $\dr(\TB(\mathcal{U}, g, k)) \geq n-\delta=n-k-\frac{k}{r}+2$, which matches the Singleton bound for $\Fqm$-linear LRCs (Corollary~\ref{coro:Singleton-lrc-fqm}).

  It thus remains to prove that $\TB(\mathcal{U}, g, k)$ is indeed an $r$-rank LRC. Take $u \in \Fqm^*$. The hypothesis~\ref{it:partition} ensures there exists a unique $i \in [t]$ such that $u \in U_i$. By~\ref{it:good_poly}, there exists $\lambda_i \in \Fqm$ such that for every $f \in \TB(\mathcal{U}, g, k)$, we have
  \[
  f_{|U_i}= \sum_{j=0}^{\frac{k}{r}-1} f_j \circ g^j \circ \iota_{U_i} = \sum_{j=0}^{\frac{k}{r}-1} f_j\circ(\lambda_i^j \id_{U_i}).
  \]

  Therefore $\pi_{U_i}(\TB(\mathcal{U}, g, k)) \subseteq \Gab_r(U_i)$. Moreover, as the $q$-polynomials $f_j$ run in the whole space of polynomials of $q$-degree less than $r$, we have $\pi_{U_i}(\TB(\mathcal{U},g,k)) = \Gab_r(U_i)$. The latter code being MRD, Corollary~\ref{coro:info_space_MRD} ensures that any subspace $S \subseteq U_i$ of dimension $r$ such that $U_i = S \oplus \vect{ u }{\Fq}$ is an information space for $\Gab_r(U_i)$, i.e.,  $\dim_{\Fqm}(\pi_S(\Gab_r(U_i))) = \dim_{\Fqm}(\Gab_r(U_i))$. Then
  \[
  \pi_{S}(\TB(\mathcal{U},g,k))=\pi_S\left(\pi_{U_i}(\TB(\mathcal{U},g,k))\right) = \pi_S(\Gab_r(U_i))
  \]
  has the same dimension as $\pi_{U_i}(\TB(\mathcal{U},g,k))$, which means that $S$ is a helper space for $u$ in $\TB(\mathcal{U},g,k)$.

\end{proof}

\begin{remark}
  Note that for every $U_i \in \mathcal{U}$, we have $U_i = \ker(g-\lambda_i \id)$, since the right-hand side has dimension at most $r+1$, and $\dim_{\Fq}(U_i)=r+1$. Then $\lambda_i \neq \lambda_j$ for $i\neq j$.
\end{remark}

\begin{remark}
  The set of \emph{good} $q$-polynomials (that satisfy~\ref{it:good_poly}) is an $\Fqm$-vector space. Actually, good $q$-polynomials are tightly related to good polynomials for the original Tamo-Barg construction~\cite{Tamo_2014}. Indeed, seen as classical polynomials via the identification $\tau=X^q$, $q$-polynomials are always divisible by $X$. Then, if $g$ satisfies~\ref{it:good_poly}, the polynomial $g(X)/X$ (of degree $q^{r+1}-1$) is a good polynomial in the sense of~\cite[\textsection A]{Tamo_2014} for the partition $\mathcal{U}^*\eqdef \set{U_i\setminus \set{0}, U_i \in \mathcal{U}}$.
\end{remark}

We instantiate this construction using a Desarguesian spread, leaving the existence of alternative constructions as an open question.

\begin{example}\label{ex:desarguian}
  Take $U=\Fqm$ and $\alpha_1,\dots,\alpha_t$ some representatives of the classes of $\Fqm^*/\F_{q^{r+1}}^*$. Then, the family $\mathcal{U}=\set{U_i\eqdef \alpha_i \F_{q^{r+1}} \mid i \in [t]}$ satisfies the condition~\ref{it:partition} of Theorem~\ref{thm:TB}.

  Fix $\eta, \nu \in \Fqm$. On each $U_i$, the $q$-polynomial $g=\eta\tau^{r+1}+\nu \id$ satisfies the condition~\ref{it:good_poly} of Theorem~\ref{thm:TB}: for every $i \in [t]$ and $x \in \F_{q^{r+1}}$, we have
  \[
  g(\alpha_i x) = \eta \alpha_i^{q^{r+1}} x+\nu \alpha_i x
  = \big(\eta\alpha_i^{q^{r+1}-1}+\nu\big) \alpha_i x
  =\lambda_i \alpha_i x.
  \]
  
  Note that a $q$-polynomial $g$ satisfying the condition~\ref{it:good_poly} with respect to $\mathcal{U}$ implies that it is L-$q^{r+1}$-partially scattered (of index $0$) in the sense of~\cite{Bartoli2022investigating}, that is for any $y, z \in \Fqm^*$,
  \[
  \frac{g(y)}{y}=\frac{g(z)}{z} \implies \frac{y}{z}\in \F_{q^{r+1}}.
  \]

  By~\cite[Proposition~3.1]{Bartoli2022investigating}, since $r+1$ divides $m$, a monomial of the form $\tau^{t}$ is L-$q^{r+1}$-partially scattered of index $0$ if and only if $t \mid r+1$. This indicates that the only good $q$-polynomials with respect to $\mathcal{U}$ are the ones considered above.
  
  Note that we recover~\cite[Construction~1]{Kadhe_et_al19} with $\eta=1$ and $\nu=0$. As discussed in Section~\ref{subsec:comparison}, the notion of local recoverability studied in~\cite{Kadhe_et_al19} is not the same as ours. Nevertheless, this specific construction gives an optimal LRC for both notions.
\end{example}

\begin{remark}
  The construction of Theorem~\ref{thm:TB} shares some analogies with the one of~\cite[\textsection 3.1]{loc2}, which provides optimal LRCs in the sense of~\cite{Kadhe_et_al19}. The main difference lies in the fact that the partition of the condition~\ref{it:partition} is replaced by a decomposition as a direct sum
  \begin{equation}\label{eq:direct_sum}%
    U=U_1\oplus \dots \oplus U_s
  \end{equation}
  of $(r+1)$-dimensional subspaces and that a \emph{good $q$-polynomial} for such a direct sum satisfies that $g_{|U_i}=\lambda_i \id_{U_i}$ for an homothety factor $\lambda_i\in \Fq$ (instead of $\Fqm$ as in~\ref{it:good_poly}). In other words, the endomorphism associated to $g$ is diagonalizable on $U$. In this case, with a basis of $U$ formed by the union of bases of the eigenspaces $U_i$, the value of any $q$-polynomial of $\mathcal{F}$ at a basis vector can be recovered by its restriction on a given $U_i$, which lies in the Gabidulin code $\Gab_{r}(U_i)$. In~\cite[\textsection 3.1]{loc2}, the good polynomial is the $q$-polynomial $\phi_T$ associated to a Drinfeld module $\phi$ of rank $r+1$.
  
  Note that these codes are not rank $r$-locally recoverable for our definition. The relaxed condition~\eqref{eq:direct_sum} does not allow the recoverability of any nonzero of $U$ with a helper space of dimension $r$, but only of the ones which lie in one of the $U_i$'s.
\end{remark}

\section*{Acknowledgements}

The authors warmly thank Gianira Alfarano for her inputs regarding the use of the Desarguian spread in Example~\ref{ex:desarguian}. This work was supported by a \emph{Research AAP} from University Paris 8, allowing the authors to meet there.

CG is financially supported by the Military French Ministry – Defense and Innovation Agency (DGA-AID). JN is supported by the French government \textit{Investissements d’Avenir} program ANR-11-LABX-0020-01.

\bibliographystyle{plain}
\bibliography{biblio}

\end{document}